\pgfplotsset{compat=1.12}
\DeclarePairedDelimiter{\ceil}{\lceil}{\rceil}
\theoremstyle{definition}
\newtheorem{theorem}{Theorem}
\newtheorem{assumption}{Assumption}
\newtheorem{lemma}{Lemma}
\newtheorem{claim}{Claim}
\newtheorem{proposition}{Proposition}
\newtheorem{example}{Example}
\newtheorem{remark}{Remark}
\newtheorem{definition}{Definition}
\newtheorem{observation}{Observation}
\newcommand{\vs}{\vspace{-0.1in}}
\newcommand{\calH}{\mathcal{H}}
\newcommand{\calI}{\mathcal{I}}
\newcommand{\calL}{\mathcal{L}}
\newcommand{\calX}{\mathcal{X}}
\newcommand{\calY}{\mathcal{Y}}
\newcommand{\calZ}{\mathcal{Z}}
\newcommand{\bfy}{\mathbf{y}}
\newcommand{\bfX}{\mathbf{X}}
\newcommand{\bfY}{\mathbf{Y}}
\newcommand{\bfx}{\mathbf{x}}
\newcommand{\bfz}{\mathbf{z}}
\newcommand{\bbC}{\mathbb{C}}
\newcommand{\bbN}{\mathbb{N}}
\newcommand{\bbR}{\mathbb{R}}
\newcommand{\frL}{\mathfrak{L}}
\newcommand{\Mod}[1]{\ \mathrm{mod}\ #1}
\newcommand{\Tr}{\textrm{Tr}}
\pgfmathsetmacro{\xW}{0}
\pgfmathsetmacro{\yW}{0}
\pgfmathsetmacro{\xX}{0}
\pgfmathsetmacro{\yX}{-1}
\pgfmathsetmacro{\xY}{sqrt(3)/2}
\pgfmathsetmacro{\yY}{0.5}
\pgfmathsetmacro{\xZ}{-sqrt(3)/2}
\pgfmathsetmacro{\yZ}{0.5}
\pgfmathsetmacro{\xM}{-sqrt(3)/3}
\pgfmathsetmacro{\yM}{5/3}
\pgfmathsetmacro{\xL}{sqrt(3)/3}
\pgfmathsetmacro{\yL}{5/3}
\pgfmathsetmacro{\xN}{-sqrt(3)}
\pgfmathsetmacro{\yN}{-1/3}
\pgfmathsetmacro{\xP}{-2*sqrt(3)/3}
\pgfmathsetmacro{\yP}{-4/3}
\pgfmathsetmacro{\xR}{sqrt(3)}
\pgfmathsetmacro{\yR}{-1/3}
\pgfmathsetmacro{\xQ}{2*sqrt(3)/3}
\pgfmathsetmacro{\yQ}{-4/3}
\pgfmathsetmacro{\xA}{0}
\pgfmathsetmacro{\yA}{4}
\pgfmathsetmacro{\xB}{-2*sqrt(3)}
\pgfmathsetmacro{\yB}{-2}
\pgfmathsetmacro{\xC}{2*sqrt(3)}
\pgfmathsetmacro{\yC}{-2} 
\pgfmathsetmacro{\Eps}{0.25}
\pgfmathsetmacro{\xBC}{0}
\pgfmathsetmacro{\yBC}{-2+\Eps}
\pgfmathsetmacro{\xCB}{0}
\pgfmathsetmacro{\yCB}{-2-\Eps} 
\pgfmathsetmacro{\xBA}{-sqrt(3) - sqrt(3) * \Eps / 2}
\pgfmathsetmacro{\yBA}{1 +  \Eps / 2}
\pgfmathsetmacro{\xAB}{-sqrt(3) + sqrt(3) * \Eps / 2}
\pgfmathsetmacro{\yAB}{1 -  \Eps / 2} 
\pgfmathsetmacro{\xAC}{sqrt(3) + sqrt(3) * \Eps / 2}
\pgfmathsetmacro{\yAC}{1 + \Eps / 2} 
\pgfmathsetmacro{\xCA}{sqrt(3) - sqrt(3) * \Eps / 2}
\pgfmathsetmacro{\yCA}{1 -  \Eps / 2}
\pgfmathsetmacro{\xMM}{-sqrt(3)}
\pgfmathsetmacro{\yMM}{7}
\pgfmathsetmacro{\xLL}{sqrt(3)}
\pgfmathsetmacro{\yLL}{7}
\pgfmathsetmacro{\xML}{0}
\pgfmathsetmacro{\yML}{7+\Eps}
\pgfmathsetmacro{\xLM}{0}
\pgfmathsetmacro{\yLM}{7-\Eps} 
\pgfmathsetmacro{\xMA}{-sqrt(3)/2 + sqrt(3) * \Eps / 2}
\pgfmathsetmacro{\yMA}{5.5 + \Eps / 2}
\pgfmathsetmacro{\xAM}{-sqrt(3)/2 - sqrt(3) * \Eps / 2}
\pgfmathsetmacro{\yAM}{5.5 - \Eps / 2} 
\pgfmathsetmacro{\xLA}{sqrt(3)/2 - sqrt(3) * \Eps / 2}
\pgfmathsetmacro{\yLA}{5.5 +  \Eps / 2}
\pgfmathsetmacro{\xAL}{sqrt(3)/2 + sqrt(3) * \Eps / 2}
\pgfmathsetmacro{\yAL}{5.5 -  \Eps / 2}
\pgfmathsetmacro{\xNN}{-4*sqrt(3)}
\pgfmathsetmacro{\yNN}{-2}
\pgfmathsetmacro{\xPP}{-3*sqrt(3)}
\pgfmathsetmacro{\yPP}{-5}
\pgfmathsetmacro{\xNB}{-3*sqrt(3)}
\pgfmathsetmacro{\yNB}{-2+\Eps}
\pgfmathsetmacro{\xBN}{-3*sqrt(3)}
\pgfmathsetmacro{\yBN}{-2-\Eps} 
\pgfmathsetmacro{\xNP}{-3.5*sqrt(3) + sqrt(3) * \Eps / 2}
\pgfmathsetmacro{\yNP}{-3.5 + \Eps / 2}
\pgfmathsetmacro{\xPN}{-3.5*sqrt(3) - sqrt(3) * \Eps / 2}
\pgfmathsetmacro{\yPN}{-3.5 - \Eps / 2} 
\pgfmathsetmacro{\xBP}{-2.5*sqrt(3) - sqrt(3) * \Eps / 2}
\pgfmathsetmacro{\yBP}{-3.5 +  \Eps / 2}
\pgfmathsetmacro{\xPB}{-2.5*sqrt(3) + sqrt(3) * \Eps / 2}
\pgfmathsetmacro{\yPB}{-3.5 -  \Eps / 2}
\pgfmathsetmacro{\xRR}{4*sqrt(3)}
\pgfmathsetmacro{\yRR}{-2}
\pgfmathsetmacro{\xQQ}{3*sqrt(3)}
\pgfmathsetmacro{\yQQ}{-5}
\pgfmathsetmacro{\xRC}{3*sqrt(3)}
\pgfmathsetmacro{\yRC}{-2+\Eps}
\pgfmathsetmacro{\xCR}{3*sqrt(3)}
\pgfmathsetmacro{\yCR}{-2-\Eps} 
\pgfmathsetmacro{\xRQ}{3.5*sqrt(3) + sqrt(3) * \Eps / 2}
\pgfmathsetmacro{\yRQ}{-3.5 - \Eps / 2}
\pgfmathsetmacro{\xQR}{3.5*sqrt(3) - sqrt(3) * \Eps / 2}
\pgfmathsetmacro{\yQR}{-3.5 + \Eps / 2} 
\pgfmathsetmacro{\xCQ}{2.5*sqrt(3) + sqrt(3) * \Eps / 2}
\pgfmathsetmacro{\yCQ}{-3.5 +  \Eps / 2}
\pgfmathsetmacro{\xQC}{2.5*sqrt(3) - sqrt(3) * \Eps / 2}
\pgfmathsetmacro{\yQC}{-3.5 -  \Eps / 2}
\tikzset{
    node_1/.style={circle, draw=black, minimum size=6mm, inner sep=0pt, font = \large},
    node_2/.style={rectangle, draw=black, minimum size=4, inner sep=2pt, font = \large},
    edge_1/.style={draw, -},
    edge_2/.style={draw, postaction={decorate, decoration={markings, mark=at position 0.3 with {\arrow{Stealth[scale=1]}}}}}
}
\begin{document}
\title{Quantum advantage in zero-error function computation with side information}

\author{ 
  \IEEEauthorblockN{Ruoyu Meng and Aditya Ramamoorthy}\\
  \IEEEauthorblockA{Department of Electrical and Computer Engineering\\ 
                    Iowa State University, Ames, IA , USA\\
                    Email: \{rmeng, adityar\}@iastate.edu}
}

\maketitle

\begin{abstract} 
We consider the problem of zero-error function computation with side information. Alice and Bob have correlated sources $X,Y$ with joint p.m.f. $p_{XY}(\cdot, \cdot)$. Bob wants to calculate $f(X,Y)$ with zero error. Alice encodes $m$-length blocks $(m \geq 1)$ of her observations to Bob over error-free channels, which can be classical or quantum. 
We consider two classical settings. (i) Alice communicates via a fixed length code (FLC), and (ii) Alice communicates via a variable length code (VLC). In the FLC scenario, the minimum communication rate depends on the asymptotic growth of the chromatic number of an appropriately defined $m$-instance ``confusion graph'' $G^{(m)}$. In the VLC scenario, the corresponding rate is characterized by the asymptotics of the chromatic entropy of $G^{(m)}$.
In the quantum setting, we only consider fixed length codes; the corresponding rate depends on the asymptotic growth of the orthogonal rank of the complement of $G^{(m)}$.  The behavior of the communication rates depends critically on $G^{(m)}$, which is shown to be sandwiched between $G^{\boxtimes m}$ ($m$-times strong product) and $G^{\lor m}$ ($m$-times OR product) respectively. Our work presents necessary and sufficient conditions on the function $f(\cdot, \cdot)$ and joint p.m.f. $p_{XY}(\cdot,\cdot)$ such that $G^{(m)}$ equals either $G^{\boxtimes m}$  or $G^{\lor m}$.  Our work explores the multitude of possible behaviors of the quantum and classical (FLC/VLC) rates in the single-instance case and the asymptotic (in $m$) case for several classes of confusion graphs. We show that there exist confusion graphs for which the quantum rate is exponentially smaller than the FLC and VLC rates.  With respect to FLCs, we show, e.g., that the quantum advantage may exist in the single-instance but disappear in the asymptotic rate. With respect to VLCs, we show that there are scenarios where the VLC rate can be smaller than the quantum rate and vice versa. 





\end{abstract}

\begin{IEEEkeywords}
zero-error, function computation, chromatic number, orthogonal rank, quantum advantage.
\end{IEEEkeywords}

\section{Introduction }\label{sec:introduction}

The problem of function computation over networks has applications in various settings, such as sensor networks \cite{akyildiz2002survey} and network monitoring. 
\begin{figure}[!b] 
\begin{center} 
\begin{tikzpicture}[node distance=2cm, >=latex]
    \node (source) [xshift=-0.5cm] {x};
    \node (encoder) [right=0.5cm of source, rectangle, draw] {Encoder};
    \node (decoder) [right=2cm of encoder, rectangle, draw] {Decoder};
    \node (fxy) [right=0.5cm of decoder] {\( f(x,y) \)};
    
    \node (y) at ($(decoder.south west)!0.5!(decoder.south east) + (0,-0.75cm)$) {y};

    \draw[->] (source) -- (encoder);
    \draw[-{Triangle[fill=white]}] (encoder) -- node[midway,above=0.15cm] {R} node[midway,sloped,anchor=center] {/} (decoder);
    \draw[->] (y) --  ($(decoder.south west)!0.5!(decoder.south east)$);
    \draw[->] (decoder) -- (fxy);

    \draw [decorate,decoration={brace,amplitude=5pt}] (encoder.north west) -- node[above, yshift=0.2cm] {Alice} (encoder.north east);
    \draw [decorate,decoration={brace,amplitude=5pt}] (decoder.north west) -- node[above, yshift=0.2cm] {Bob} (decoder.north east);

\end{tikzpicture}
\end{center}
 \caption{Function computation with side information
 \label{fig:func_computing_diag}} 
\end{figure}
In these settings, one is typically interested in the statistics of the data gathered at remote locations, e.g., mean/median/variance, and the like, rather than the data itself. Furthermore, in several of these settings the functions are computed over data that is correlated, e.g., in sensor networks, the readings from different sensors are likely to be similar. In such scenarios, a fundamental question of interest is - {\it ``how to transmit information efficiently so that the function of interest can be computed''.}

From an information-theoretic perspective, a model that captures several of the inherent subtleties and nuances of the problem is the one depicted in Fig.~\ref{fig:func_computing_diag} and is referred to as the problem of {\it ``function computation with side information''}. Alice observes a source $X$. Bob observes a source $Y$ that is correlated with $X$; $Y$ is referred to throughout as the side information. Bob wishes to compute a function $f(X,Y)$, and for this Alice needs to communicate a certain amount of information to him. The problem is evidently feasible, since Alice can always simply transmit the source $X$. However, determining the minimum amount of information that Alice needs to send is nontrivial in general. This rate depends on the function properties, the correlation structure of $X$ and $Y$ and on the requirements on the recovery of $f(X,Y)$. The case when Bob wants to recover $X$, i.e., when $f(X,Y) = X$, is called ``source coding with side information''.  

The classical version of this problem where Alice communicates classical bits has a long history. The typical setting considers $m$-length blocks of the sources $X_i, Y_i$, for $i \in [m]$ where $[m]= \{1, \dots, m\}$, and the schemes allow the calculation of $f(X_i, Y_i)$ for $i \in [m]$. Suppose that Bob is content with asymptotically error-free reconstruction. This means that the probability that there exists an index in $[m]$ such that the function computation is incorrect can be made arbitrarily small with increasing $m$. In this case, the work of \cite{OrlitskyR_01} provides a precise characterization of the required rate of transmission. In this scenario, for the source coding with side information setting, the solution follows from the Slepian-Wolf theorem \cite{slepian1973noiseless}. On the other hand, if Bob insists on recovering the function computation with zero-error, the problem is significantly more complicated. There are several works in this setting as well, e.g.,
\cite{Witsenhausen_76,FergusonB_75,ahlswede1979coloring,AlonO_95,LinialV89,OrlitskyR_01,KornerO_98,Charpenay_23}. Also, see  Part III of \cite{KornerO_98} for an extensive survey.

In this work, our focus is on the zero-error version of the function computation with side information. 
The study of classical zero-error source coding with side information was initiated in \cite{Witsenhausen_76}. This work introduced the notion of the confusion graph ({\it cf.} Definition \ref{defn:confusion_graph}). Furthermore, it showed that when using fixed length codes, the optimal classical transmission strategy for Alice is to color the confusion graph and transmit the color to Bob. Variable length codes were considered in \cite{AlonO_96} and the corresponding rate was characterized as the chromatic entropy of the confusion graph. Several variants of these problems in the zero-error setting have been examined \cite{CharpenayTR23ITW, CharpenayTR23ISIT, AlonO_96, Shayevitz_14, Malak_22}. 


\subsection{Problem Formulation: Classical and quantum settings}\label{subsec:problem_formulation}
The function computation with side information problem can be formally specified as follows. Alice observes a sequence of independent and identically distributed (i.i.d.)~observations of a random variable $X$ (taking values in a discrete alphabet $\calX$). Bob has access to i.i.d.~observations of a side information random variable $Y$ (taking values in a discrete alphabet $\calY$). $X$ and $Y$ are correlated such that their joint probability mass function (p.m.f.) is $p_{XY}(x,y), x \in \calX, y \in \calY$. Bob seeks to compute a function $f: \calX \times \calY \to \calZ$.  We assume that the channel from Alice to Bob is error-free and can support either classical or quantum transmission depending on the considered scenario.

The aim is to understand the minimum rate at which Alice can communicate information to Bob such that the function computation is successful. We consider $m$-length blocks of the sources $X_i, Y_i$, for $i \in [m]$, and the schemes allow the calculation of $f(X_i, Y_i)$ for $i \in [m]$. We consider $m$-length blocks, since it is often the case that computing multiple instances of the function at the same time allows Alice to send less information ``per'' computation than computing them one by one \cite{shannon1956zero}.

In this work, we consider the {\it zero-error} version of this problem. Thus, for any $m$-instance scheme, we require that Bob should be able to recover $f(X_i, Y_i)$ for all $i \in [m]$.
Suppose that Alice has symbols $x$ and $x'$ such that for all $y$ that occur with joint non-zero likelihood, i.e., $p_{XY}(x,y) > 0$ and $p_{XY}(x',y) > 0$, it holds that $f(x,y) = f(x',y)$. Then, from Bob's perspective, $x$ and $x'$ are equivalent and can be given the same description by Alice. Thus, Alice's symbols that need to be given ``different'' descriptions can be formalized by the following definition \cite{Witsenhausen_76}. 
\begin{definition}
\label{defn:confusion_graph}
{\it $f$-confusion graph.} The $f$-confusion graph of $X$ given $Y$ is a graph $G=(V,E)$ where $V=\calX$ and $(x_1,x_2) \in E$ if there exists $y \in \calY$ such that $p_{XY}(x_1,y) p_{XY}(x_2, y) > 0$ and $f(x_1, y) \neq f(x_2, y)$. 

\begin{figure}[!t]
\begin{center}


\resizebox{0.25\linewidth}{!}{
\begin{tikzpicture}
  \foreach \a in {5}{
    \node [regular polygon, regular polygon sides=\a, minimum size=3cm, draw] at (\a*4,0) (A) {};
      \node [circle, label=90:0, fill=black, inner sep=3pt] at (A.corner 1) {};
      \node [circle, label=90+72:1, fill=red, inner sep=3pt] at (A.corner 2) {};
      \node [circle, label=90+72*2:2, fill=black, inner sep=3pt] at (A.corner 3) {};
      \node [circle,label=90+72*3:3, fill=red, inner sep=3pt] at (A.corner 4) {};
      \node [circle, label=90+72*4:4, fill=green, inner sep=3pt] at (A.corner 5) {};
  }
\end{tikzpicture}
}

\end{center}
\vs
\caption{\label{fig:pentagon} {\small The figure shows the $f$-confusion graph $G$ of $f$ with p.m.f. defined in (\ref{eq:corr_prop_eq}). A coloring of $G$ can be performed with three colors. This is shown in the figure with red, black and green. When communicating over one instance, the classical rate can therefore be $\log_2 3$. It can be shown that the graph over two instances $G^{(2)}$ can be colored with five colors. This leads to a per-computation classical rate of $\frac{1}{2} \log_2 5$. This is in fact optimal \cite{Witsenhausen_76,lovasz1979shannon}.}}
\vs
\end{figure}


Similarly, if we consider computations over $m$ instances, then we can define the $f$-confusion graph over $m$ instances denoted $G^{(m)}$ analogously. Let $\bfx$ and $\bfy$ denote $m$-length Alice and Bob sequences, respectively. The vertex set corresponds to all $m$-length Alice sequences. $(\bfx_1,\bfx_2) \in E(G^{(m)})$ if there exists $\bfy$ such that $\Pi_{i=1}^m p_{XY}(x_{1i},y_i) p_{XY}(x_{2i}, y_i) > 0$ and there exists $j \in [m]$ such that $f(x_{1j},y_j) \neq f(x_{2j},y_j)$. 
We use shorthand $f^{(m)}(\bfx,\bfy):= \{f(x_i,y_i)\}_{i=1}^m$, and $p_{\bfX\bfY}(\bfx,\bfy)= \prod_{i=1}^mp_{XY}(x_i,y_i)$.
\end{definition} 

\begin{example}
Consider a scenario where  $X$ and $Y$ take values in $\{0, 1, \dots, 4\}$ and are correlated such that 
    \begin{align}
       p_{XY}(x,y)  &= \begin{cases}
            \frac{1}{10}, \text{~if~} y=x\text{~or~}y=(x+1)\Mod 5,\\
            0,\text{~otherwise.}
        \end{cases} \label{eq:corr_prop_eq}
    \end{align}
Suppose that Bob wishes to compute the equality function, i.e., $f(X,Y) = 1$ if $X=Y$ and $f(X,Y) =0$ if $X\ne Y$. We observe that Alice's symbols $0$ and $1$ are connected since $p_{XY}(0,1) p_{XY}(1,1) = 1/100 > 0$ and $f(0,1) \neq f(1,1)$; the reasoning for the other edges is similar. It can be observed that this $f$-confusion graph is a pentagon (see Fig. \ref{fig:pentagon}).   
\end{example}

In general, the structure of the $m$-instance confusion graph $G^{(m)}$ can be quite complicated. Graph products are classes of graphs that are more structured and extensively studied \cite{HammackGraphProduct}. It turns out that studying graph products allows us to also characterize general $m$-instance confusion graphs at some level.
In particular, we will deal extensively with the strong product and the OR product of graphs. These are defined below.
%
%

\begin{definition}\label{defn:AND_product_OR_product}
     The strong product of $G$ and $H$, denoted $G\boxtimes H$, has vertex set $V(G)\times V(H)$. $\big( (u_1,v_1),(u_2,v_2) \big )\in E(G\boxtimes H)$ if and only if
    \begin{align*}
        & \big (u_1=u_2\text{ and }(v_1,v_2)\in E(H)\big )\text{ or}\\
        & \big ((u_1,u_2)\in E(G)\text{ and }v_1=v_2\big )\text{ or}\\
        & \big ((u_1,u_2)\in E(G)\text{ and }(v_1,v_2)\in E(H)\big ).
    \end{align*}
    The $m$-fold strong product is written as $G^{\boxtimes m} := \underbrace{G\boxtimes G\boxtimes \dots \boxtimes G}_{m\text{ times}}.$
    
\noindent The OR product of $G$ and $H$, denoted $G\lor H$, has vertex set $V(G)\times V(H)$. $\big((u_1,v_1),(u_2,v_2)\big)\in E(G\lor H)$ if and only if 
    \begin{align*} 
            (v_1,v_2)\in E(H)&\text{ if }u_1=u_2,\\
            (u_1,u_2)\in E(G)&\text{ if }v_1=v_2,\\
            (u_1,u_2)\in E(G) \text{ or }(v_1,v_2)\in E(H)&\text{ if }u_1\ne u_2\text{ and }v_1\ne v_2. 
    \end{align*}
    The $m$-fold OR product is written as $G^{\lor m} :=\underbrace{ G\lor G\lor \dots \lor G}_{m\text{ times}}.$
\end{definition}

\begin{figure}[t]
    \centering
    \scalebox{1}{%
    \begin{tikzpicture} 
[
  node distance=0.5cm,
  every node/.style={circle, draw, minimum size=4mm, inner sep=0pt, font = \normalsize},
  every edge/.style={draw, -} 
]

\node[draw=none] (times) at (-2,2) {$\boxtimes$}; 

\node[draw=black] (001) at (-3,1) {1};
\node[draw=black] (002) at (-3,2) {2};
\node[draw=black] (003) at (-3,3) {3}; 

\node[draw=black] (01) at (-1,1) {1};
\node[draw=black] (02) at (-1,2) {2};
\node[draw=black] (03) at (-1,3) {3}; 

\node[draw=none] (equal) at (0,2) {$=$};

\node[draw=black] (11) at (1,1) {1,1};
\node[draw=black] (12) at (1,2) {1,2};
\node[draw=black] (13) at (1,3) {1,3}; 

\node[draw=black] (21) at (2,1) {2,1};
\node[draw=black] (22) at (2,2) {2,2};
\node[draw=black] (23) at (2,3) {2,3}; 

\node[draw=black] (31) at (3,1) {3,1};
\node[draw=black] (32) at (3,2) {3,2};
\node[draw=black] (33) at (3,3) {3,3};

\draw (01) edge (02);
\draw (02) edge (03); 

\draw (11) edge (12);
\draw (12) edge (13); 

\draw (21) edge (22);
\draw (22) edge (23); 

\draw (31) edge (32);
\draw (32) edge (33); 

\draw (001) edge (002);
\draw (002) edge (003);

\draw (11) edge (21);
\draw (21) edge (31); 

\draw (12) edge (22);
\draw (22) edge (32); 

\draw (13) edge (23);
\draw (23) edge (33);

\draw (11) edge (22);
\draw (21) edge (32); 

\draw (12) edge (23);
\draw (22) edge (33); 

\draw (12) edge (21);
\draw (22) edge (31); 

\draw (13) edge (22);
\draw (23) edge (32);

\node[draw=none] (comma) at (4,1) {$,$}; 

\node[draw=none] (times) at (6,2) {$\lor$}; 

\node[draw=black] (661) at (5,1) {1};
\node[draw=black] (662) at (5,2) {2};
\node[draw=black] (663) at (5,3) {3}; 

\node[draw=black] (61) at (7,1) {1};
\node[draw=black] (62) at (7,2) {2};
\node[draw=black] (63) at (7,3) {3}; 

\node[draw=none] (equal) at (8,2) {$=$};

\node[draw=black] (71) at (9,1) {1,1};
\node[draw=black] (72) at (9,2) {1,2};
\node[draw=black] (73) at (9,3) {1,3}; 

\node[draw=black] (81) at (10,1) {2,1};
\node[draw=black] (82) at (10,2) {2,2};
\node[draw=black] (83) at (10,3) {2,3}; 

\node[draw=black] (91) at (11,1) {3,1};
\node[draw=black] (92) at (11,2) {3,2};
\node[draw=black] (93) at (11,3) {3,3};

\draw (61) edge (62);
\draw (62) edge (63); 

\draw (71) edge (72);
\draw (72) edge (73); 

\draw (81) edge (82);
\draw (82) edge (83); 

\draw (91) edge (92);
\draw (92) edge (93); 

\draw (661) edge (662);
\draw (662) edge (663);

\draw (71) edge (81);
\draw (81) edge (91);

\draw (72) edge (82);
\draw (82) edge (92); 

\draw (73) edge (83);
\draw (83) edge (93);

\draw (71) edge (82);
\draw (81) edge (92);

\draw (72) edge (83);
\draw (82) edge (93); 

\draw (72) edge (81);
\draw (82) edge (91); 

\draw (73) edge (82);
\draw (83) edge (92);

\draw (71) edge (92);
\draw (71) edge (83); 

\draw (93) edge (81);
\draw (73) edge (81);

\draw (91) edge (72);
\draw (91) edge (83); 

\draw (73) edge (92);
\draw (93) edge (72); 

\end{tikzpicture}%
    }
    \caption{Strong product and OR product of $G$ with itself where $G$ is a path of two edges.}
    \label{fig:strong_or_product_P3}
\end{figure} 

\noindent Examples of strong and OR products appear in Fig. \ref{fig:strong_or_product_P3}.
In this work, we examine the zero-error function computation with side information problem under three different settings. In the first setting, Alice transmits   fixed length classical codes to Bob,  (i.e., the setting of \cite{Witsenhausen_76}). In the second setting, Alice transmits   variable length classical codes to Bob,  (i.e., the setting of restricted inputs in  \cite{AlonO_96}). In the third setting, Alice transmits fixed dimension quantum states to Bob. 
\begin{definition}
    Let $G=(V,E)$ be a graph. A $k$-coloring of $G$ is a labeling $f:V\mapsto S$, where $S$ is a set of $k$ elements, such that adjacent vertices have different labels. The chromatic number $\chi(G)$ is the minimum $k$ such that there exists a $k$-coloring of $G$.
\end{definition}

\subsubsection{\bf Fixed Length Classical Code (FLC) Setting}




In the fixed length classical code setting, based on the work of Witsenhausen \cite{Witsenhausen_76}, it can be observed that Alice's strategy corresponds to coloring $G^{(m)}$ with the fewest possible colors and transmitting the color of her realization\cite{Witsenhausen_76}. Thus, the FLC rate is defined as follows. 
\begin{definition} 
\label{defn:witsenhausen_classical_rate} {\it Optimal rate in the FLC setting.} The FLC rate is given by 
\begin{align*}
    R_{\text{FLC}} &= \lim_{m \to \infty} \frac{\log_2 \chi(G^{(m)})}{m} = \inf_m  \frac{\log_2 \chi(G^{(m)})}{m}.
\end{align*}
\end{definition}
The second equality above, follows from the sub-additivity of $\log_2 \chi(G^{(m)})$ \cite{NayakTR06}. In the sequel, we will use $R_{\text{FLC}}^{(1)} = \log_2 \chi(G)$ to denote the single-instance rate. 


\subsubsection{\bf Variable Length Classical Code (VLC) Setting} \label{subsubsec:VLC_setting}


Let $\{0, 1, \dots, D-1\}$ be a $D$-ary alphabet from which the symbols in a variable length code are chosen, and let $ \{0,\dots, D-1\}^*$ denote the set of all finite-length strings from $\{0, 1, \dots, D-1\}$. A graph $H$  is said to be an induced subgraph of $G$ if its vertex set $V(H)$ is a subset of $V(G)$  and edge set $E(H)$ consists of edges in $E(G)$ whose two endpoints are in $V(H)$. For $\bfy$ of length-$m$, let $G^{(m)}_\bfy$ denote the induced subgraph of $G^{(m)}$ consisting of the subset of $\bfx$ vertices that occur along with $\bfy$ with strictly positive probability, i.e., $V(G^{(m)}_\bfy) = \{\bfx:\bfx\text{ such that }p_{\bfX\bfY}(\bfx,\bfy)>0\}$ and $E(G^{(m)}_\bfy) = \{(\bfx_1,\bfx_2):(\bfx_1,\bfx_2)\in E(G^{(m)})\text{ and }\bfx_1,\bfx_2\in V(G^{(m)}_\bfy)\}$. 



\begin{definition}{\it Non-singular Code.} \label{def:ns_code}
A $m$-instance variable length code consists of an encoding function $\phi_{e}:V(G^{(m)}) \mapsto \{0,\dots, D-1\}^*$ such that $\phi_e$ is a coloring of $G^{(m)}_\bfy$ for all $\bfy$. This means that if $(\bfx_1,\bfx_2)$ is an edge in $G^{(m)}_\bfy$, then $\phi_{e}(\bfx_1) \neq \phi_{e}(\bfx_2)$. In this case, we call the variable length code, non-singular.
\end{definition} 

It is not too hard to see that a non-singular variable length code as described above corresponds to a coloring of the $m$-instance graph $G^{(m)}$. Indeed, $(\bfx_1, \bfx_2) \in E(G^{(m)})$ then there exists a $\bfy$ such that $(\bfx_1, \bfx_2) \in E(G^{(m)}_\bfy)$ and hence $\phi_e(\bfx_1) \neq \phi_e(\bfx_2)$. On the other hand, suppose that we consider a variable length non-singular source code as defined in \cite{CoverT05} (Chapter 5), that encodes a given coloring of $G^{(m)}$. Then, by setting $\phi_e(\bfx)$ to be the encoding of its color, we observe that we arrive at a valid non-singular variable length code for our problem. 

\begin{definition}{\it Prefix-free code.} \label{def:pf_code}
Consider a $m$-instance non-singular variable length code with encoding function $\phi_e: V(G^{(m)}) \mapsto \{0, \dots, D-1\}^*$.
Let $\phi_e(G^{(m)}_\bfy) \triangleq \{\phi_e(\bfx):\text{~where~} \bfx \in V( G^{(m)}_\bfy)\}$ denote the subset of all encodings corresponding to vertices in $G^{(m)}_\bfy$, i.e., $\phi_e(G^{(m)}_\bfy)$ is the image of the map $\phi_e(\cdot)$ when $\bfx$ is restricted to $V( G^{(m)}_\bfy)$. Suppose that for all $\bfy$ we have the following property. Let $\zeta_i \in \phi_e(G^{(m)}_\bfy), i =1,2$ such that $\zeta_1 \neq \zeta_2$. Then, we have that $\zeta_1$ is not a prefix of $\zeta_2$. In this case, we call the variable length code, prefix-free. 
\end{definition}  

\begin{remark}\label{remark:AlonO_prefix-free}
In \cite{AlonO_96}, the authors define a prefix-free variable length code for the problem of source coding with side information, where they enforce that $\phi_e(\bfx_1)$ is not a prefix of $\phi_e(\bfx_2)$ if $(\bfx_1, \bfx_2) \in E^{(m)}$. As discussed in Section \ref{sec:VLC_rate_graph_entropy}, Example \ref{example:C5_VLC}, this definition is not appropriate for the general function computation problem.
%
\end{remark}

\textbf{Extension of a code:} Consider a prefix-free variable length code for a given value of $m$. Then, this code can be extended to all multiples of $m$ by simply concatenating the encodings; this is referred to as an extension of the original code (Chapter 5 of \cite{CoverT05}). For example, the encoding for a length $2m$ vector $[\bfx~ \bfx']$ will be the concatenation $\phi_e(\bfx)\phi_e(\bfx')$. This implies, for instance, that a prefix-free variable length code for $m=1$ can be employed as a code for arbitrary $m$ by extension, such that Bob can instantaneously decode $\phi_e(x_i), i = 1, \dots, m$ from the string $\phi_e(x_1)~\phi_e(x_2)~\dots~\phi_e(x_m)$.

On the other hand, a non-singular code for a given $m$ cannot necessarily be extended so that it is even uniquely decodable \cite{CoverT05} (see Example \ref{example:C5_VLC} in Section \ref{sec:VLC_rate_graph_entropy}). For this reason, henceforth, we will only consider prefix-free variable length codes and define rates accordingly.



\begin{definition}{\it Optimal Rate in VLC setting.} \label{def:VLC_rate}
Consider a prefix-free variable length code for our problem. Let  $|\phi_e(\bfx)|$ denote the length of the encoded $D$-ary  string corresponding to $\phi_e(\bfx)$. Then, the rate of this code is defined as $R_{VLC}^{(m)} \triangleq \frac{1}{m}\sum_{\bfx}p_{\bfX}(\bfx)|\phi_e(\bfx)|$ where $p_{\bfX}(\cdot)$ is the marginal distribution of $\bfX$.
The optimal rate is defined as
\begin{align}
    R_{VLC} &=  \inf_m \min_{\phi_e} R^{(m)}_{VLC},
\end{align}
where $\min_{\phi_e} R^{(m)}_{VLC}$ is the rate of the optimal $m$-instance encoding. 
\end{definition}

\begin{remark}\label{remark:VLC_depends_on_p_X}
    Unlike the fixed length code rate, the variable length code rate depends on the marginal distribution $p_X(\cdot)$ of the random variable $X$.  
\end{remark}
\subsubsection{\bf Quantum Setting} 
The following notions are standard within quantum information theory (see \cite{wilde_17}). A quantum bit (qubit) or more generally a qudit represented by $\ket{\psi}$ (Dirac's bra-ket notation) is a unit-norm column vector in a complex Hilbert space $\calH$. A closed quantum system evolves only via unitary transformations $U$, i.e., $\ket{\psi} \mapsto U \ket{\psi}$, where $U U^\dagger = U^\dagger U = I$ ($\dagger$ denotes conjugate transpose). Two qudits $\ket{\psi_1} \in \calH_A$ and $\ket{\psi_2} \in \calH_B$ lie in the (composite) tensor product space $\calH_A \otimes \calH_B$. Quantum states can be ``pure'' in which case they can be represented as vectors. Quantum states can also be mixed, e.g., pure state $\ket{\psi_x}$ may be prepared with probability $p_X(x)$. For a Hilbert space $\calH$, let $\calL(\calH)$ denote the set of linear operators from $\calH$ to itself; treated as matrices, these are square matrices. When states are mixed, the description of the state is given by a density operator in $\calL(\calH)$,  $\rho_X = \sum_{x \in \calX} p_X(x) \ket{\psi_x} \bra{\psi_x}$. A density operator (or quantum state) is a positive Hermitian operator\footnote{A positive operator $A$ over a complex Hilbert space, denoted $A \succeq 0$ is Hermitian, i.e., $A = A^\dagger$ and has non-negative eigenvalues. Treated as a matrix, it is a Hermitian, positive semi-definite matrix.} with unit-trace. The density operator of a pure state $\ket{\psi_x}$ is written as $\ket{\psi_x}\bra{\psi_x}.$ A quantum measurement is a fundamental operation by which information can be extracted from a quantum system. While there are many different measurement formalisms, here we discuss the positive operator-valued measure (POVM) measurements which are relevant to our work. These are specified by a set of positive operators $\{\Lambda_j\}_{j=1}^{k}$ where $\Lambda_j \succeq 0$ and $\sum_{j=1}^k \Lambda_j = I$ 
 ($I$ denotes the identity operator). The probability of obtaining measurement value $i$ is given by $\Tr \left(\Lambda_i \rho_X \right)$ for $i \in [k]$. Quantum states $\rho$ and $\sigma$ are said to be orthogonal, denoted $\rho \perp \sigma$ if $\Tr (\rho^{\dagger} \sigma) = 0$. This is equivalent to $\rho$ and $\sigma$ having orthogonal supports\footnote{The support of an operator is the orthogonal complement of its kernel. For Hermitian operators (as we consider) the support is its image.}. Two  quantum states $\rho$ and $\sigma$ are said to be perfectly distinguishable if there exists a POVM $\{\Lambda_j\}_{j=1}^{k}$ such that $\Tr \left(\Lambda_i \rho \right)\Tr \left(\Lambda_i \sigma \right) = 0$ for $i \in [k]$, i.e., there exists a POVM $\{\Lambda_j\}_{j=1}^{k}$ which distinguishes $\rho$ and $\sigma$ with zero-error. Such a POVM exists if and only if $\rho$ and $\sigma$ are orthogonal \cite{MEDEIROSD05}. Note that for this case $k=2$ suffices. However, in our discussion later, we need to perfectly distinguish multiple states.

 

Suppose that Alice and Bob have $m$-length sequences $\bfx$ and $\bfy$ and Alice communicates to Bob through an error-free quantum channel that supports the transmission of operators on $\calH$. The quantum code for our problem setting operates as follows.
\begin{itemize}[leftmargin=2mm]
    \item Alice picks a quantum state $\rho_\bfx \in \calL(\calH)$ (where $\dim \calH = d$), that depends on the sequence $\bfx$ and transmits it to Bob.  
    \item Bob performs a POVM measurement\cite[pg. 134]{wilde_17} on the received state. The POVM, which is dependent on $\bfy$, is specified by $\{\Lambda^\bfz_{f(\cdot,\cdot), \bfy}\}_{\bfz \in \calZ^m}$, where the indices $\bfz$ are measurement results and the matrices $\Lambda^\bfz_{f(\cdot,\cdot), \bfy}$ are positive semi-definite matrices that sum to the identity, i.e., $\Lambda^\bfz_{f(\cdot,\cdot), \bfy} \succeq 0$, for $\bfz \in \calZ^m$, and $\sum_{\bfz \in \calZ^m}  \Lambda^\bfz_{f(\cdot,\cdot), \bfy} = I$.
    

\end{itemize}
The code is deemed successful if $z_i = f(x_i, y_i)$, for $i \in \{ 1, \dots, m\}$ for all possible pairs $(\bfx, \bfy)$. 
 Alice's rate of transmission is defined as $\frac{1}{m} \log_2 d$.

It can be shown ({\it cf.} Section \ref{sec:orth_rep}), that this code is zero-error if and only if for $(\bfx, \bfx') \in G^{(m)}$ then $\rho_\bfx \perp \rho_{\bfx'}$. If this condition is satisfied, depending on the $\bfy$ sequence observed by Bob, it can be shown that he can prepare the measurement such that   the function value is recovered with zero error. Furthermore, Alice needs to find the suitable set of states in as small of a dimension $d$ as possible, so that the task can be achieved by transmitting the fewest number of quantum bits. The quantum rate is defined formally in Section \ref{sec:orth_rep} (Theorem \ref{thm:witsenhausen_quantum_rate}). As in the classical case, we make a distinction between the single-instance rate and the asymptotic rate.

We point out that the classical FLC code can be considered as an instance of the proposed quantum code, simply by operating in a large enough vector space and labeling the nodes of the $f$-confusion graph by canonical basis vectors (binary vectors with all components zero except one). However, considering general unit-norm vectors provides much more flexibility and therefore the dimension $d$ in which we need to operate in, can be lower; concrete examples can be found ({\it cf.} Section \ref{sec:analysis_bahavior_quantum_advantage}).

We emphasize that there are significant difficulties with defining variable length quantum codes. In particular, as pointed out in \cite{BraunsteinFGH98}, determining the quantum message boundaries will itself require measurement, which will, in general, destroy superposition. Thus, in this work we do not consider variable length quantum codes. 

\subsection{Related work}\label{subsec:related_work}
A natural conditional entropy defined on the confusion graph provides the precise rate characterization for function computation with side information when the computation is only required to be lossless, i.e., the distortion is asymptotically vanishing; it is not a zero-error scenario \cite{OrlitskyR_01}. The Slepian-Wolf theorem \cite{slepian1973noiseless} addresses the case of source coding with side information.

The exhaustive survey paper \cite{KornerO_98} overviews several aspects of classical zero-error information theory. We first discuss the case of  classical source coding with side information. 
Witsenhausen \cite{Witsenhausen_76} considered the usage of fixed length codes in this setting and showed the optimal rate could be analyzed by considering the chromatic number of the corresponding $f$-confusion graph. We note here that in general, a single-letter characterization of the optimal rate is not available. 
Alon and Orlitsky  \cite{AlonO_96} considered variable length codes and related the rate characterization to an appropriately defined chromatic entropy of $f$-confusion graph. It is well-known that the multiple instance rate can be much less than the single instance rate \cite{KornerO_98}. For instance, it can be observed that the $f$-confusion graph in the pentagon example ({\it cf.} Fig. \ref{fig:pentagon}) can be colored with three colors so that the rate can be $\log_2 3$. However, if we allow computation over two instances, then it can be shown that the corresponding graph can be colored with five colors so that the rate equals $\frac{1}{2} \log_2 5$ \cite{Witsenhausen_76,lovasz1979shannon}. It turns out that this rate is optimal for this problem; this follows from a celebrated result of Lov{\'a}sz\cite{lovasz1979shannon}.  Orlitsky \cite{orlitsky_worst_case90} also considered the case of interactive communication between Alice and Bob. For the variable length codes, \cite{KoulgiTRR_03} expresses the optimal rate as complementary graph entropy of $f$-confusion graph. While this quantity is still hard to compute in general, \cite{CharpenayTR23ISIT} consider certain cases where it can actually be computed. Reference \cite{CharpenayTR23ITW}, considers a zero-error function computation model where side information is also available at the encoder. For this model, under specific conditions on the joint p.m.f. and Alice's side information, they derive a single-letter characterization of the optimal rate.

Function computation has also been studied in the scenario where both $X$ and $Y$ are transmitted by remote entities and Bob seeks to compute $f(X,Y)$ with asymptotically vanishing error. The work of \cite{HanK87} determined a necessary and sufficient condition on $f(\cdot, \cdot)$ such that the optimal rate region coincides with the Slepian-Wolf region. In these scenarios, structured codes often outperform random codes, as pointed out in \cite{KornerM79} for the problem of recreating the modulo-2 sum of binary sources. The work of \cite{NazerG07, wilson_etal10} consider function computation over multiple-access channels. More recently, function computation over quantum many-to-one networks has been considered in \cite{AllaixLYPHCJ24}.


There have been a few works that consider quantum generalizations of classical zero-error problems. The work of \cite{cubitt2010improving} considers communicating with zero error over a noisy discrete memoryless channel (DMC), denoted $W$ in the presence of shared entanglement between the source and the terminal. Let $C_0(W)$ and $C_{SE}(W)$ denote the zero-error capacity of the channel in the absence and in the presence of shared entanglement, respectively. Their work demonstrates that there exist DMCs such that $C_0(W) < C_{SE}(W)$; these are constructed by leveraging Kochen-Specker sets \cite{held2009kochen}. This is rather interesting since the work of Bennett et al. \cite{bennett2002entanglement} shows that the presence of shared entanglement does not increase the capacity of a classical channel when considering asymptotically error-free communication. Following this work, there were several works that considered entanglement-assisted zero-error source-channel coding. The work of \cite{BrietBL_15} considers shared entanglement, but the sources and channels are still classical. The work of Stahlke \cite{stahlke2015quantum} considers quantum sources and quantum channels. However, none of these works consider general functions.

Prior work that is most closely related to our work includes \cite{BuhrmanCW_98, GavinskyKKRd_07, Bar-YossefJK08}. For function computation, these works demonstrate that there exists an exponential separation between the classical and the quantum rate. However, these results are for the single-instance case, i.e., they do not discuss the asymptotic rate that considers multiple instances. As discussed above, the classical and quantum rates exhibit interesting behavior when considering multiple instances. This is the central goal of our work.  In the recent work of \cite{GuptaSXCA_23}, they consider a specific function that for which they do consider the asymptotic behavior of the rate. In contrast, in our work, we consider general functions and consider the different possible behaviors of the single-instance and asymptotic rates.

\subsection{Contributions and Organization}
For the source coding with side information problem, it is not hard to see that $G^{(m)} = G^{
\boxtimes m}$ ({\it cf.} Definition \ref{defn:confusion_graph} and Definition \ref{defn:AND_product_OR_product}). In our setting, for a given $f$-confusion graph $G$, we show that the $m$-instance $f$-confusion graph $G^{(m)}$is sandwiched between the strong product ($G^{\boxtimes m}$) and OR-product ($G^{\lor n}$), i.e., $G^{\boxtimes m} \subseteq G^{(m)} \subseteq G^{\lor n}$ ({\it cf.}   Definition \ref{defn:AND_product_OR_product}). Our first contribution is a characterization of necessary and sufficient conditions on the function $f(\cdot, \cdot)$ and joint p.m.f. $p_{XY}(\cdot,\cdot)$ such that $G^{(m)}$ equals  $G^{\boxtimes m}$ or $G^{\lor m}$. 

We use this condition to study classes of $f$-confusion graphs that reveal interesting behavior of the classical (FLC and VLC) and quantum rates for our problem. As discussed above, e.g., when considering the classical FLC rate, a single-instance rate is given by $\chi(G)$. Unfortunately, the behavior of the sequence of chromatic numbers $\chi(G^{(2)}), \chi(G^{(3)}), \dots, \chi(G^{(m)}), \dots$ is not very well understood. However,  if $G^{(m)} = G^{\lor m}$, then a single-letter characterization is available. When considering the classical VLC rate, the $m$-instance rate is characterized by the so-called chromatic entropy of $G^{(m)}$, and a single letter characterization in terms of K\"orner's graph entropy \cite{Korner_73} exists if $G^{(m)} = G^{\lor m}$. The quantum rate for the $m$-instance case can be expressed in terms of the orthogonal rank of the complement of $G^{(m)}$ (see Section \ref{sec:orth_rep}). We define the quantum advantage in the single or asymptotic settings as the ratio of the single-instance or asymptotic classical and quantum rates, respectively.

Our second contribution is to present examples of $f$-confusion graphs that demonstrate a multitude of possible behaviors of the quantum and classical (FLC/VLC) rates when considering the single-instance rate and the asymptotic rate. 
%

When considering FLCs, it is not hard to see that the quantum rate will always be lower. We demonstrate that there exist (i) problems where there is no quantum advantage in either single-instance or asymptotic rates, (ii)  problems where there are quantum advantages in both single-instance and asymptotic rates, and (iii) problems where there is a quantum advantage in the single-instance rate but no quantum advantage in the asymptotic rates. The situation is summarized compactly in Table \ref{table:Main_contribution}.


The comparison with VLC is more complicated. It turns out that the VLC rate can be lower than the quantum rate in some cases; we refer to this as the VLC advantage. Our main findings are that the quantum advantage continues to hold for certain problems. Nevertheless, there are specific problems that demonstrate the VLC rate (single-instance and asymptotic) can be smaller than the corresponding quantum rate. Section \ref{sec:VLC_rate_graph_entropy} has the complete details. We summarize our findings in Table \ref{table:VLC}.

\begin{table}[t]   
    \centering 
    \caption{Quantum advantage within different FLC scenarios} 
    \label{table:Main_contribution}
    \begin{tabular}{|c|c|c|}
        \hline
        Single-instance advantage & Multiple-instance advantage & $f$-Confusion Graphs \\
        \hline
        Yes    & Yes   & $G_{13}^{\lor m} \text{({\it cf.} Sec. \ref{subsubsec:G13})},H_n^{\boxtimes m},H_n^{\lor m} \text{({\it cf.} Sec. \ref{subsubsec:Hn})}$ \\
        \hline
        Yes    & No   & $\frL(G_{13})^{\lor m},\frL(G_{13})^{\boxtimes m} \text{({\it cf.} Sec. \ref{sec:line_graph_eg})}$ \\
        \hline
        No    & Yes   & Unknown \\
        \hline
        No    & No   & $C_5^{\lor m},C_5^{\boxtimes m}$ ({\it cf.} Sec. \ref{subsec:C5})\\
        \hline
    \end{tabular}
    \vspace{5mm}
    \caption{Type of advantage within different VLC scenarios}\label{table:VLC}
    \begin{tabular}{|c|c|c|c|}
        \hline
        Single-instance advantage & Multiple-instance advantage & Is $p_X(x)$ uniform? & $f$-Confusion Graphs   \\
        \hline
        VLC    & VLC & No  & $K_3^{\boxtimes m}  \text{({\it cf.} Sec. \ref{sec:VLC_k3})}$   \\ 
        \hline
          VLC  & Equal   & Yes  & $C_5^{\boxtimes m} \text{({\it cf.} Sec. \ref{sec:VLC_C5_AND})}$ \\
        \hline
         Quantum   & Equal & Yes  & $C_5^{\lor m} \text{({\it cf.} Sec. \ref{sec:VLC_C5_OR})}$ \\
        \hline
        Quantum    & Quantum & Yes  & $G_{13}^{\lor m} \text{({\it cf.} Sec. \ref{sec:VLC_G13})}, H_n^{\boxtimes m},H_n^{\lor m} \text{({\it cf.} Sec. \ref{sec:VLC_Hn})}$ \\
        \hline
        Quantum    & VLC & Yes  & $\frL(G_{13})^{\lor m}$ ({\it cf.} Sec. \ref{sec:VLC_line_graph})\\
        \hline
    \end{tabular}
\end{table}
This paper is organized as follows. Preliminary definitions and results from graph theory appear in Section \ref{sec:prelim}. Section \ref{sec:struc_conf_graph} presents the structural characterization of the $m$-instance $f$-confusion graph and conditions under which it equals either $G^{\boxtimes m}$ or $G^{\lor m}$. We discuss orthogonal representations of graphs in Section \ref{sec:orth_rep}, and provide a discussion of various examples that demonstrate the behavior of the quantum advantage compared with FLC in Section \ref{sec:analysis_bahavior_quantum_advantage}. We introduce chromatic entropy and K\"orner's graph entropy in Section \ref{sec:VLC_rate_graph_entropy}, and   characterize  the asymptotic VLC rate in terms of these entropies. In Section \ref{sec:analysis_bahavior_quantum_advantage_vlc}, we present a comparison of VLC rate and quantum rate for several examples. We summarize the conclusions of the paper in Section \ref{sec:conclusion}. 



 
\section{Preliminaries} 
\label{sec:prelim}
Let $\bbN,\bbN_{\ge1},\bbC$ be the set of natural numbers, the set of  positive natural numbers   and the set of complex numbers respectively. In our discussion, we consider finite directed and undirected graphs. 
Our discussion will involve directed graphs only in Section \ref{sec:line_graph_eg}. 
Thus, unless otherwise specified, henceforth when we refer to a graph, we are considering undirected graphs. The following definitions are standard within graph theory and can be found, e.g., in \cite{west_2000}. A simple graph is a graph that has no loops  (edges connecting a vertex to itself) and no multiple edges (more than one edge between the same two vertices).  A graph $H=(V',E')$ is said to be a subgraph of $G = (V,E)$ if $V'\subseteq V$ and $E'\subseteq E'$. We say that $G$ is a spanning subgraph of $H$, denoted by $G\subseteq H$  if $V(G) = V(H)$ and $E(G)\subseteq E(H)$. Suppose that $S$ is a subset of $V$. The induced subgraph $G[S]$ is the graph whose vertex set is $S$ and edge set consists of those edges whose endpoints are in $S$.  For a graph $G$, $\overline{G}$ denotes the complementary graph. For $G$, $\alpha(G)$ denotes its independence number (the size of its largest independent set), $\omega(G)$ denotes its clique number (the size of its largest clique) and $\chi(G)$ its chromatic number (minimum number of colors used in a valid vertex coloring).   We will use the following known results (proved in Appendix \ref{app:appendix_A} for completeness) for graphs $G,H$.

\begin{proposition}\label{prop:chi}
\begin{enumerate} [label=(\alph*)]
    \item $\overline{G\boxtimes H} = \overline{G}\lor \overline{H}$.\label{prop:de_morgan}
    \item $\alpha(G\lor H)=\alpha(G)\alpha(H) \le \alpha(G\boxtimes H)$. \label{prop:alpha_AND_OR_product}
    \item $\chi(G\boxtimes H)\le \chi(G\lor H)\le \chi(G)\chi(H)$.\label{prop:chi_product}
\end{enumerate}
\end{proposition}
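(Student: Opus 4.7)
The plan is to verify each of the three parts by unwinding the adjacency definitions in the strong and OR products. All three proofs reduce to case analysis on whether the first or second coordinates of two vertices agree, and the main care is just to keep the three cases in the product definitions straight.

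For part \ref{prop:de_morgan}, I would fix two distinct vertices $(u_1,v_1),(u_2,v_2) \in V(G) \times V(H)$ and compare ``non-adjacent in $G \boxtimes H$'' with ``adjacent in $\overline{G} \lor \overline{H}$''. Splitting into the three subcases $u_1 = u_2$, $v_1 = v_2$, and $u_1 \neq u_2 \wedge v_1 \neq v_2$, I would show that in each subcase the two conditions collapse to the same statement, using that $(a,b) \in E(\overline{G})$ means $a \neq b$ and $(a,b) \notin E(G)$. In particular, for the third subcase, non-adjacency in $G \boxtimes H$ means $(u_1,u_2) \notin E(G)$ or $(v_1,v_2) \notin E(H)$, which is exactly the third clause of OR-product adjacency in $\overline{G} \lor \overline{H}$.

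For part \ref{prop:alpha_AND_OR_product}, I would first show $\alpha(G \lor H) \geq \alpha(G)\alpha(H)$ by checking that if $I \subseteq V(G)$ and $J \subseteq V(H)$ are independent, then $I \times J$ is independent in $G \lor H$; each of the three adjacency clauses is ruled out directly. For the reverse inequality, let $K$ be a maximum independent set in $G \lor H$, and for $u \in V(G)$ let $K_u = \{v : (u,v) \in K\}$. Using the $u_1 = u_2$ clause, each $K_u$ is independent in $H$, so $|K_u| \leq \alpha(H)$; using the remaining two clauses with a representative choice of $v \in K_u$, the set $U := \{u : K_u \neq \emptyset\}$ is independent in $G$, so $|U| \leq \alpha(G)$, giving $|K| \leq \alpha(G)\alpha(H)$. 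For the inequality $\alpha(G \lor H) \leq \alpha(G \boxtimes H)$, I would observe that $E(G \boxtimes H) \subseteq E(G \lor H)$ as a direct comparison of the definitions (the third OR-product clause is strictly more permissive), so $G \boxtimes H$ is a spanning subgraph of $G \lor H$ and every independent set of the latter is independent in the former.

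For part \ref{prop:chi_product}, the inequality $\chi(G \boxtimes H) \leq \chi(G \lor H)$ follows immediately from the spanning-subgraph relation $E(G \boxtimes H) \subseteq E(G \lor H)$ just observed, since any proper coloring of $G \lor H$ remains proper on the subgraph. For $\chi(G \lor H) \leq \chi(G)\chi(H)$, given proper colorings $c_G,c_H$ of $G,H$, I would define the product coloring $c(u,v) := (c_G(u), c_H(v))$ and verify via the three OR-product clauses that adjacent vertices receive distinct colors (in each clause at least one of the two coordinate colorings is forced to disagree).

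None of the three parts presents a genuine obstacle; the only point requiring a bit of attention is the $\alpha(G \lor H) \geq \alpha(G)\alpha(H)$ lower bound combined with the upper bound argument, where one must be careful in the third case to handle both $v_1 = v_2$ and $v_1 \neq v_2$ when extracting independence of $U$ in $G$.
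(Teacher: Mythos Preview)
Your proposal is correct and follows essentially the same route as the paper. The only cosmetic differences are organizational: in part~(a) the paper writes a single chain of equivalences rather than splitting into three subcases, and in part~(b) the paper bounds $|S|$ via the containment $S\subseteq S_g\times S_h$ of both projections rather than summing over the fibers $K_u$ as you do, but the underlying verifications are the same.
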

The fractional chromatic number is defined as follows\cite{ScheinermanU97FGT}.
\begin{definition}\label{defn:b-fold_coloring} 
%

Suppose that there is an underlying set $[a]=\{1,\dots, a\}$ and that each vertex $v$ is associated with a subset $S(v)\subseteq [a]$ with $|S(v)| = b$ such that if $u,v\in V(G)$ are adjacent, then $S(u)$ and $S(v)$ are disjoint, i.e., $S(u) \cap S(v) = \emptyset$. In this case, we say that $G$ has a $a:b$ coloring.
%
%
The least $a$ for which $G$ has a $a:b$ coloring is called the $b$-fold chromatic number of $G$, denoted $\chi_b(G)$. The fractional chromatic number $\chi_f(G)$ is defined as
\begin{align*}
    \chi_f(G):= \inf_b \frac{\chi_b(G)}{b}.
\end{align*}
\end{definition}
 Since $\chi_1(G) = \chi(G)$, we have 
\begin{equation}\label{eqn:chi>=chi_f}
    \chi(G) \ge \chi_f(G).
\end{equation}


 \begin{proposition}  \label{prop:chi_frac}
Let $G,H$ be simple graphs. Then,
\begin{enumerate} [label=(\alph*)]
    \item $\chi_f(G\lor H) = \chi_f(G)\chi_f(H)$ (Corollary 3.4.2 of \cite{ScheinermanU97FGT}).\label{prop:OR_chi_f}
    \item $\chi_f(G) = \inf_m \chi(G^{\lor m})^{1/m}$ (Corollary 3.4.3 of \cite{ScheinermanU97FGT}).\label{prop:OR_Witsenhausen_rate}
\end{enumerate} 
 \end{proposition}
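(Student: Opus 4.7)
For part (a), I would prove the two inequalities separately. For the upper bound $\chi_f(G \lor H) \le \chi_f(G)\chi_f(H)$, given an $a_1{:}b_1$-coloring $C_1$ of $G$ and an $a_2{:}b_2$-coloring $C_2$ of $H$, I form the product coloring $C(u,v):= C_1(u)\times C_2(v)\subseteq [a_1]\times [a_2]$, which has size $b_1b_2$. If $\big((u_1,v_1),(u_2,v_2)\big)\in E(G\lor H)$, then Definition~\ref{defn:AND_product_OR_product} implies $(u_1,u_2)\in E(G)$ or $(v_1,v_2)\in E(H)$, forcing $C_1(u_1)\cap C_1(u_2)=\emptyset$ or $C_2(v_1)\cap C_2(v_2)=\emptyset$, hence $C(u_1,v_1)\cap C(u_2,v_2)=\emptyset$. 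So $\chi_{b_1b_2}(G\lor H)\le a_1a_2$; taking the infimum over $b_1,b_2$ yields the upper bound.

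For $\chi_f(G \lor H) \ge \chi_f(G)\chi_f(H)$, I rely on the LP-dual characterization
\begin{equation*}
\chi_f(K) = \max\Bigl\{\textstyle\sum_{v} y_v \scolon \textstyle\sum_{v\in I}y_v \le 1 \text{ for every independent set } I \text{ in } K,\ y_v\ge 0\Bigr\}.
\end{equation*}
Letting $y^G,y^H$ be dual-optimal for $G,H$, set $y^{G\lor H}_{(u,v)} := y^G_u y^H_v$. The key step is to verify feasibility: any independent set $S$ of $G\lor H$ embeds into a product $I_G\times I_H$, where $I_G:=\{u : \exists\,v\text{ with }(u,v)\in S\}$ and $I_H$ is defined symmetrically. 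Indeed, if $u_1,u_2\in I_G$ with $(u_1,u_2)\in E(G)$, then by Definition~\ref{defn:AND_product_OR_product} the corresponding vertices of $S$ would be adjacent in $G\lor H$, contradicting independence; thus $I_G$ is independent in $G$ and similarly $I_H$ in $H$. Hence $\sum_{(u,v)\in S}y^G_u y^H_v\le (\sum_{u\in I_G}y^G_u)(\sum_{v\in I_H}y^H_v)\le 1$, so $y^{G\lor H}$ is dual-feasible with objective value $\chi_f(G)\chi_f(H)$. (This is consistent with Proposition~\ref{prop:chi}\ref{prop:alpha_AND_OR_product}: taking $y^G,y^H$ to be indicators of maximum independent sets recovers $\alpha(G\lor H)\ge \alpha(G)\alpha(H)$.)

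For part (b), the ``$\le$'' direction follows by iterating (a) to obtain $\chi_f(G^{\lor m})=\chi_f(G)^m$ and combining with (\ref{eqn:chi>=chi_f}) applied to $G^{\lor m}$: $\chi(G^{\lor m})\ge \chi_f(G^{\lor m})=\chi_f(G)^m$, hence $\chi_f(G)\le \chi(G^{\lor m})^{1/m}$ for every $m$. For the reverse ``$\ge$'' direction I plan to use the standard LP-rounding bound $\chi(K)\le \chi_f(K)\bigl(1+\ln|V(K)|\bigr)$, proved by repeatedly extracting a maximum independent set of size at least $|V(K)|/\chi_f(K)$ (using $\alpha(K)\ge |V(K)|/\chi_f(K)$): after $\chi_f(K)\ln|V(K)|$ rounds fewer than one vertex remains. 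Applied to $K=G^{\lor m}$ this gives $\chi(G^{\lor m})\le \chi_f(G)^m\cdot\bigl(1+m\ln|V(G)|\bigr)$, so $\chi(G^{\lor m})^{1/m}\le \chi_f(G)\cdot\bigl(1+m\ln|V(G)|\bigr)^{1/m}\to \chi_f(G)$ as $m\to\infty$. That the infimum equals the limit follows from Fekete's lemma, since $\log \chi(G^{\lor m})$ is subadditive in $m$ by Proposition~\ref{prop:chi}\ref{prop:chi_product}.

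The most delicate step is the lower bound in (a): it requires both the LP-dual characterization of $\chi_f$ and the structural observation that independent sets of the OR product are contained in products of independent sets. Both ingredients are standard but must be combined carefully, particularly to handle vertices of $S$ sharing a coordinate without double-counting. In part (b), the subtle point is quantifying the ``integrality gap'' between $\chi$ and $\chi_f$ tightly enough that the gap factor dissolves upon taking the $m$-th root.
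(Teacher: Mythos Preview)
The paper does not actually prove Proposition~\ref{prop:chi_frac}; it is stated with references to Corollaries~3.4.2 and~3.4.3 of \cite{ScheinermanU97FGT} and used without further justification. Your argument supplies a complete proof, and it is essentially the standard one found in that reference: for part~(a), the product of $b$-fold colorings gives the upper bound, while the lower bound is obtained from the LP dual (fractional clique cover) together with the structural fact---already exploited in the paper's proof of Proposition~\ref{prop:chi}\ref{prop:alpha_AND_OR_product}---that every independent set of $G\lor H$ sits inside a product of independent sets of $G$ and $H$; for part~(b), you combine the multiplicativity from~(a), the inequality~(\ref{eqn:chi>=chi_f}), and the greedy/Lov\'asz bound $\chi(K)\le \chi_f(K)(1+\ln|V(K)|)$ to control the integrality gap after taking $m$-th roots. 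All steps are sound. One cosmetic point: in your greedy argument you implicitly use that $\chi_f$ is monotone under induced subgraphs (so the ratio $\alpha(K')\ge |V(K')|/\chi_f(K)$ persists after deletions), and the round count should be read as $\lceil\chi_f(K)\ln|V(K)|\rceil$ with at most one extra color for the last vertex---but this does not affect the limit in part~(b).
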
 

 
\section{Structure of $f$-confusion graph}
\label{sec:struc_conf_graph} 

\begin{table*}[t]
\centering 
\begin{minipage}{.32\textwidth}
  \centering
  \begin{tabular}{c|c|ccccc}
    \multicolumn{2}{c|}{} & \multicolumn{5}{c}{\( x \)} \\
    \cline{3-7}
    \multicolumn{2}{c|}{\multirow{-2}{*}{\( \tilde f(x,y) \)}} & 1 & 2 & 3 & 4 & 5 \\
    \hline
    \multirow{5}{*}{\( y \)} 
    & 1 & 1 & 0 & * & * & * \\
    & 2 & * & 1 & 0 & * & * \\
    & 3 & * & * & 1 & 0 & * \\
    & 4 & * & * & * & 1 & 0 \\
    & 5 & 0 & * & * & * & 1 
  \end{tabular}
  \caption{Function $\tilde f$}
  \label{table:simple_eg1}
\end{minipage}%
\begin{minipage}{.32\textwidth}
  \centering
  \begin{tabular}{c|c|ccccc}
    \multicolumn{2}{c|}{} & \multicolumn{5}{c}{\( x \)} \\
    \cline{3-7}
    \multicolumn{2}{c|}{\multirow{-2}{*}{\( \tilde g(x,y) \)}} & 1 & 2 & 3 & 4 & 5 \\
    \hline
    \multirow{5}{*}{\( y \)} 
    & 1 & 1 & 0 & 1 & * & * \\
    & 2 & * & 1 & 0 & 1 & * \\
    & 3 & * & * & 1 & 0 & 1 \\
    & 4 & 1 & * & * & 1 & 0 \\
    & 5 & 0 & 1 & * & * & 1 
  \end{tabular}
  \caption{Function $\tilde g$}
  \label{table:simple_eg2}
\end{minipage} 
\begin{minipage}{.32\textwidth}
  \centering
  \begin{tabular}{c|c|ccccc}
    \multicolumn{2}{c|}{} & \multicolumn{5}{c}{\( x \)} \\
    \cline{3-7}
    \multicolumn{2}{c|}{\multirow{-2}{*}{\(\tilde  h(x,y) \)}} & 1 & 2 & 3 & 4 & 5 \\
    \hline
    \multirow{5}{*}{\( y \)} 
    & 1 & 1 & 0 & 1 & * & * \\
    & 2 & * & 1 & 0 & * & * \\
    & 3 & * & * & 1 & 0 & * \\
    & 4 & * & * & * & 1 & 0 \\
    & 5 & 0 & * & * & * & 1 
  \end{tabular}
  \caption{Function $\tilde h$}
  \label{table:simple_eg3}
\end{minipage}%
\end{table*}  

In this section, we explore the dependence of the $m$-instance $f$-confusion graph $G^{(m)}$ on the underlying joint p.m.f. $p_{XY}(\cdot, \cdot)$ and the function $f(\cdot, \cdot)$. Our main result is a compact characterization of the conditions under which $G^{(m)}$ equals either $G^{\boxtimes m}$ or $G^{\lor m}$.

Towards this end, consider the following three scenarios, all of which have identical single instance $f$-confusion graphs but very different $m$-instance $f$-confusion graphs. These examples appear in  Table \ref{table:simple_eg1}-\ref{table:simple_eg3}. The functions are denoted $\tilde f,\tilde g$ and $\tilde h$. The rows corresponding to \( y \) and columns corresponding to \( x \). Within the table, numerical entries represent the value of the function with input $(x,y)$ and $*$ entries indicate that $p_{XY}(x,y)=0$. It can be verified that $G_{\tilde f}=G_{\tilde g}=G_{\tilde h}=C_5$ (pentagon graph). However, we have distinct $m$-instance $f$-confusion graphs: $G_{\tilde f}^{(m)}=C_5^{\boxtimes m}, G_{\tilde g}^{(m)}=C_5^{\lor m},C_5^{\boxtimes m}\ne G_{\tilde h}^{(m)}\ne C_5^{\lor m}$. See Appendix \ref{app:2_fold_graphs} for two-fold $f$-confusion graphs of $\tilde f,\tilde g, \tilde h$. 

In what follows, we assume the following condition on the joint p.m.f.; this rules out uninteresting cases.
\begin{assumption}\label{assume:irreducible}
    $\forall x \in \calX, \exists y\in \calY\text{ such that  }p_{XY}(x,y)>0.$
\end{assumption} 
For a $f$-confusion graph $G$, the following proposition states that $G^{(m)}$ is sandwiched between $G^{\boxtimes m}$ and $G^{\lor m}$. The proof appears in Appendix \ref{app:pf_prop5}.
\begin{proposition}\label{prop:AND_power_m-instance_graph_OR_power}
  $G^{\boxtimes m} \subseteq G^{(m)} \subseteq G^{\lor m}$ for $m\ge 1$.
\end{proposition}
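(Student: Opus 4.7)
The plan is to verify both containments separately, noting that all three graphs share the same vertex set $\calX^m$, so only edge-set inclusions need to be checked. The key preliminary step is to unpack the $m$-fold product definitions into a convenient coordinate-wise form: an edge $(\bfx_1,\bfx_2)$ of $G^{\boxtimes m}$ is characterized by the condition that for every $i$ either $x_{1i}=x_{2i}$ or $(x_{1i},x_{2i})\in E(G)$, with strict inequality in at least one coordinate; while an edge of $G^{\lor m}$ is characterized by the much weaker condition that $(x_{1i},x_{2i})\in E(G)$ holds for at least one coordinate $i$. The latter characterization is best justified by a short induction on $m$ using the case-by-case definition of $G\lor H$ from Definition~\ref{defn:AND_product_OR_product}.

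For the lower inclusion $G^{\boxtimes m}\subseteq G^{(m)}$, I would take an edge $(\bfx_1,\bfx_2)$ of $G^{\boxtimes m}$ and build a compatible $\bfy$ coordinate by coordinate. In coordinates where $x_{1i}=x_{2i}$, Assumption~\ref{assume:irreducible} supplies some $y_i$ with $p_{XY}(x_{1i},y_i)>0$, which automatically gives $p_{XY}(x_{1i},y_i)p_{XY}(x_{2i},y_i)>0$ and $f(x_{1i},y_i)=f(x_{2i},y_i)$. In coordinates where $(x_{1i},x_{2i})\in E(G)$, the definition of the single-instance confusion graph supplies a $y_i$ with $p_{XY}(x_{1i},y_i)p_{XY}(x_{2i},y_i)>0$ and $f(x_{1i},y_i)\neq f(x_{2i},y_i)$. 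Concatenating these choices gives a $\bfy$ with $\prod_i p_{XY}(x_{1i},y_i)p_{XY}(x_{2i},y_i)>0$, and since at least one coordinate is an edge of $G$, the required index $j$ witnessing $f(x_{1j},y_j)\neq f(x_{2j},y_j)$ exists; hence $(\bfx_1,\bfx_2)\in E(G^{(m)})$.

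For the upper inclusion $G^{(m)}\subseteq G^{\lor m}$, given an edge $(\bfx_1,\bfx_2)$ of $G^{(m)}$, the definition provides a $\bfy$ with $\prod_i p_{XY}(x_{1i},y_i)p_{XY}(x_{2i},y_i)>0$ and some coordinate $j$ with $f(x_{1j},y_j)\neq f(x_{2j},y_j)$. Restricting to this single coordinate immediately yields $p_{XY}(x_{1j},y_j)p_{XY}(x_{2j},y_j)>0$ and $f(x_{1j},y_j)\neq f(x_{2j},y_j)$, which is precisely the condition for $(x_{1j},x_{2j})\in E(G)$. Having at least one coordinate be an edge of $G$ is exactly the characterization of an edge of $G^{\lor m}$, completing the argument.

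No step is genuinely difficult; the only mild subtlety is the careful translation of the OR product definition into the "exists a coordinate that is an edge of $G$" form, where one must handle the case split on whether coordinates coincide. The strong-product direction essentially leans on Assumption~\ref{assume:irreducible} to handle equal coordinates, which is where that assumption is used.
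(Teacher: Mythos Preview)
Your proposal is correct and follows essentially the same approach as the paper: noting the common vertex set $\calX^m$, building a suitable $\bfy$ coordinate-by-coordinate (invoking Assumption~\ref{assume:irreducible} on equal coordinates and the edge definition of $G$ on unequal ones) for $E(G^{\boxtimes m})\subseteq E(G^{(m)})$, and extracting the witnessing coordinate $j$ for $E(G^{(m)})\subseteq E(G^{\lor m})$. The only extra you add is the explicit inductive unpacking of the OR-product characterization, which the paper leaves implicit.
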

\begin{remark}
Combining Proposition \ref{prop:AND_power_m-instance_graph_OR_power}  and Proposition \ref{prop:chi_frac} \ref{prop:OR_Witsenhausen_rate}, we note  $\log_2\chi_f(G)$ is an upper bound on the asymptotic FLC rate for an arbitrary confusion graph. 
%
\end{remark}
Consider two trivial cases. If $G$ is edge-less, i.e., $G$ is a set of isolated vertices, then $G^{\lor m}$ is edge-less and thus $G^{( m)}$ is edge-less. Similarly, if $G$ is complete, then $G^{\boxtimes m}$ is complete and thus $G^{( m)}$ is complete. In both cases, we have $G^{\boxtimes m}= G^{(m)}=G^{\lor m}$.  If $G$ is neither complete nor edge-less, we say that $G$ is nontrivial. Accordingly, we now investigate conditions under which $G^{(m)} = G^{\boxtimes m}$ and $G^{(m)} = G^{\lor m}$ when $G$ is nontrivial.

\begin{observation}
Let $G$ be a nontrivial $f$-confusion graph. Then, for distinct $x, x'\in V(G)$, $(x,x')\notin E(G)$   can be due to either one of the  following mutually exclusive conditions.
\begin{itemize}[leftmargin=5mm]
    \item $C1:$ there is no  $y \in \calY$  such that  $p_{XY}(x,y)p_{XY}(x',y)> 0$.
    \item  $C2: \exists y$ such that $p_{XY}(x,y)p_{XY}(x',y)> 0$ but for all such $y$,  we have $f(x,y) = f(x',y)$.
\end{itemize}    
\end{observation}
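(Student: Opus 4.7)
The plan is to argue directly from Definition \ref{defn:confusion_graph} by taking its contrapositive and performing a case split. By definition, $(x,x') \in E(G)$ precisely when there exists some $y \in \calY$ simultaneously satisfying (i) $p_{XY}(x,y)p_{XY}(x',y) > 0$ and (ii) $f(x,y) \neq f(x',y)$. Negating this, $(x,x') \notin E(G)$ iff no $y$ fulfills (i) and (ii) together, i.e., every $y \in \calY$ fails at least one of (i) or (ii).

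Next, I would partition the collection of non-edges according to whether condition (i) ever holds. If there is no $y \in \calY$ with $p_{XY}(x,y)p_{XY}(x',y) > 0$, then condition (i) fails everywhere and we are in case $C1$. Otherwise, at least one $y^\star$ with $p_{XY}(x,y^\star)p_{XY}(x',y^\star)>0$ exists; for any such $y^\star$, (i) holds, so the non-edge condition forces (ii) to fail at $y^\star$, giving $f(x,y^\star) = f(x',y^\star)$. Since this reasoning applies to every $y$ at which (i) holds, we get exactly the statement of $C2$. The two cases are mutually exclusive by construction: $C1$ asserts the non-existence of a $y$ with joint positive mass, whereas $C2$ requires at least one such $y$.

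It remains to confirm that the hypotheses of the observation (nontriviality of $G$ and Assumption \ref{assume:irreducible}) are compatible with either case for a given pair $(x,x')$. Nontriviality merely guarantees that non-edges between distinct vertices are worth discussing; it does not bias the pair toward $C1$ or $C2$. Assumption \ref{assume:irreducible} is a per-vertex condition ensuring each $x$ has some $y$ with $p_{XY}(x,y)>0$, but it places no constraint on the overlap of supports of $p_{XY}(x,\cdot)$ and $p_{XY}(x',\cdot)$, so both $C1$ and $C2$ remain possible. Hence no genuine obstacle arises; the observation is a direct unpacking of the edge condition, and the only step that requires any care is the clean separation into the two mutually exclusive cases.
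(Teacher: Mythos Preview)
Your proposal is correct; this observation is an immediate consequence of negating the edge condition in Definition~\ref{defn:confusion_graph} and splitting on whether any $y$ with $p_{XY}(x,y)p_{XY}(x',y)>0$ exists, exactly as you describe. The paper itself treats the observation as self-evident and gives no proof, only illustrating $C1$ and $C2$ with the examples in Tables~\ref{table:simple_eg1}--\ref{table:simple_eg3}.
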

Examples of $C1,C2$ can be observed in the functions $\tilde f,\tilde g,\tilde h$ from Table \ref{table:simple_eg1}-\ref{table:simple_eg3}. For instance, consider $x=1,x'=3$ in Table \ref{table:simple_eg1}. Since there does not exist $y$ such that $p_{XY}(x,y)p_{XY}(x',y)>0$, $x,x'$ are non-adjacent due to $C1$. In contrast, for the same $x,x'$ in Table \ref{table:simple_eg2}, there exists $y$ such that $p_{XY}(x,y)p_{XY}(x',y)>0$. For such $y$, i.e. $y=1$, we have   $\tilde f(x,y)=\tilde f(x',y)=1 $.   $x,x'$ are non-adjacent due to $C2$.  Upon inspection, we can conclude that all non-adjacent $(x,x')$s in Table \ref{table:simple_eg1} are due to $C1$.  Similarly, all non-adjacent $(x,x')$s in Table \ref{table:simple_eg2} are due to $C2$. In contrast, in Table \ref{table:simple_eg3} non-adjacent $(x,x')$ occur both due to $C1$ and $C2$. 



\begin{theorem}\label{thm:theorem4}
%
Let $G$ and $G^{(m)}$ be $f$-confusion graphs over one-instance and $m$-instances respectively for a function $f$ and source pair $(X,Y)$ with joint p.m.f. $p_{XY}(\cdot, \cdot)$. Suppose that $G$ is nontrivial. Then,
    \begin{enumerate}[label=(\alph*)]
\item  $G^{(m)} = G^{\boxtimes m}$ for all $m\in\bbN_{\ge 1}$ if and only if all $x\ne x'$ and $(x,x')\notin E(G)$ are due to condition $C1$.\label{thm:theorem4_part_a}
    \item $G^{(m)} = G^{\lor m}$ for all $m\in\bbN_{\ge 1}$ if and only if all $x\ne x'$ and $(x,x')\notin E(G)$ are due to condition $C2$. \label{thm:theorem4_part_b}
\end{enumerate}
\end{theorem}

\begin{proof}[Proof of Theorem \ref{thm:theorem4} (a)]
$\Rightarrow:$ We show the contrapositive, i.e., if there exists $(x,x')\notin E(G)$ such that  it is due to condition $C2$, then $G^{(m)}\ne G^{\boxtimes m}$. In particular, we show $G^{(2)}\ne G^{\boxtimes 2}$ by showing that $E(G^{(2)})\ne E(G^{\boxtimes 2})$.

Let $x_1\ne x_1'$ and $(x_1,x_1')\notin E(G)$ due to condition $C2$ and $(x_2,x_2')\in E(G)$. Such $(x_2,x_2')$ exists as $G$ is nontrivial. Denote $\bfx=[x_1,x_2],\bfx'=[x_1',x_2']$.    Note that $(\bfx,\bfx')\notin E(G^{\boxtimes 2})$ as $(x_1,x_1')\notin E(G)$.

By condition $C2$, there exists $y_1$ such that $$p_{XY}(x_1,y_1)p_{XY}(x_1',y_1)>0\text{ and }f(x_1,y_1)= f(x_1',y_1). $$
Moreover, $(x_2,x_2')\in E(G)$ implies the existence of $y_2$ such that  $$
p_{XY}(x_2,y_2)p_{XY}(x_2',y_2)>0\text{ and } f(x_2,y_2) \ne f(x_2',y_2). 
$$
Denote $\bfy=[y_1,y_2]$.  Then it follows that  
\begin{align*}
    &\prod_{i=1}^2p_{XY}(x_i,y_i)p_{XY}(x_i',y_i)>0,\text{ and}\\
    &f^{(2)}(\bfx,\bfy)\ne f^{(2)}(\bfx',\bfy),
\end{align*}because $f(x_2,y_2)\ne f(x_2',y_2)$. Therefore, we have $(\bfx,\bfx')\in E(G^{(2)})$, which implies $E(G^{(2)})\ne E(G^{\boxtimes 2})$. 

$\Leftarrow:$ By Proposition \ref{prop:AND_power_m-instance_graph_OR_power} and the fact that $V(G^{\lor m})  = V(G^{\boxtimes m})$, it suffices to show $E(G^{(m)}) \subseteq E(G^{\boxtimes m})$. We prove it by showing $E(\overline{G^{(m)}}) \supseteq E(\overline{G^{\boxtimes m}}).$

Now let $\bfx\ne \bfx'$ and $ (\bfx,\bfx') \notin E(G^{\boxtimes m})$ be arbitrary. Then, there is an index $i$ such that   $(x_i,x_i')\notin E(G)$. Since all $(x,x')\notin E(G)$ are due to condition $C1$,   $p_{XY}(x_i,y_i)p_{XY}(x_i',y_i)=0$ for all $y_i\in \calY$, which implies  $p_{\bfX\bfY}(\bfx,\bfy)p_{\bfX\bfY}(\bfx',\bfy)= 0$ for all choices of $\bfy$. Thus, we have $(\bfx,\bfx') \notin E(G^{(m)})$. Since $(\bfx,\bfx') \notin E(G^{\boxtimes m})$ is arbitrary, this implies that $E(G^{(m)}) \subseteq E(G^{\boxtimes m})$.
\end{proof} 
\begin{proof}[Proof of Theorem \ref{thm:theorem4} (b)] 
$\Rightarrow:$ Since $G$ is nontrivial, there  exists $(x_1,x_1')\in E(G), (x_2,x_2')\notin E(G)$ and $x_2\ne x_2'$. Denote $\bfx=[x_1,x_2],\bfx'=[x_1',x_2']$ and note $(\bfx,\bfx')\in G^{\lor 2}$. By the assumption that $G^{\lor m} = G^{(m)}$ for $m\ge 1$ and setting $m=2$, there exists $\bfy = [y_1,y_2]$ such that  $p_{\bfX\bfY}(\bfx,\bfy)p_{\bfX\bfY}(\bfx',\bfy)>0$, which implies $p_{XY}(x_2,y_2)p_{XY}(x_2',y_2)>0$. Therefore, $(x_2,x_2')\notin E(G)$ is not due to condition $C1$. 

$\Leftarrow:$ By Proposition \ref{prop:AND_power_m-instance_graph_OR_power} and the fact $V(G^{\lor m})=V(G^{\boxtimes m})$, it suffices to show $E(G^{\lor m  })\subseteq E(G^{(m)})$.  Let $(\bfx,\bfx')\in E(G^{\lor m})$. Without loss of generality (w.l.o.g.), we may assume $\bfx,\bfx'$ is such that  
$$\begin{cases}
    (x_i,x_i')\in E(G)\text{~for~}i\in\{1,\dots,k_1\},\\
    (x_i,x_i')\notin E(G)\text{~and~} x_i\ne x_i'\text{~for~}i\in\{k_1+1,\dots,k_2\}, \text{~and}\\
    x_i=x_i'\text{~for~}i\in\{k_2+1,\dots,m\},\\
\end{cases}$$  for some $1\le k_1\le k_2\le m$. To show $(\bfx,\bfx')\in E(G^{(m)}),$ it suffices to find $\bfy=[y_1,\dots,y_m]$ such that  $p_{\bfX\bfY}(\bfx,\bfy)p_{\bfX\bfY}(\bfx',\bfy)>0$ and $f^{(m)}(\bfx,\bfy)\ne f^{(m)}(\bfx',\bfy).$

For $i\in\{1,\dots,k_1\}$, choose $y_i$   such that    $p_{XY}(x_i,y_i)p_{XY}(x_i',y_i)>0$ and $f(x_i,y_i)\ne f(x_i',y_i)$. Such $y_i$ exists as $(x_i,x_i')\in E(G)$. For $i\in\{k_1+1,\dots,k_2\}$, choose $y_i$  such that    $p_{XY}(x_i,y_i)p_{XY}(x_i',y_i)>0$ and $f(x_i,y_i)= f(x_i',y_i)$. Such $y_i$ exists because $(x_i,x_i')\notin E(G)$ and because  all $(x,x')\notin E(G)$ are due to condition $C2$.  For $i\in\{k_2+1,\dots,m\}$, choose $y_i$  such that  $p_{XY}(x_i',y_i)=p_{XY}(x_i,y_i) >0$. Such $y_i$ exist because $p_{XY}(x,y)$ satisfies Assumption \ref{assume:irreducible}.  Then, $p_{\bfX\bfY}(\bfx,\bfy)p_{\bfX\bfY}(\bfx',\bfy)=\prod_{i=1}^mp_{XY}(x_i,y_i)p_{XY}(x_i',y_i)>0$. We have $f^{(m)}(\bfx,\bfy)\ne f^{(m)}(\bfx',\bfy)$ because $f(x_i,y_i)\ne f(x_i',y_i)$ for $i\in\{1,\dots,k_1\}$.

Therefore, $(\bfx,\bfx')\in E(G^{(m)})$ holds. Since $(\bfx,\bfx')\in E(G^{\lor m})$ is arbitrary, we have $E(G^{\lor m})\subseteq E(G^{(m)})$.
\end{proof} 
As a consequence of Theorem \ref{thm:theorem4}, given a simple graph $G$, we can always construct functions $f$ and $g$ such that the $m$-instance $f$-confusion graphs of $f$ and $g$ are  $G^{\boxtimes m}$ and $G^{\lor m}$ respectively. 
\begin{theorem}\label{thm:construct_function_based_on_G}
Let $G$ be a simple graph. 
\begin{enumerate}[label=(\alph*)]
\item There exists a function $f(\cdot,\cdot)$ and a joint p.m.f. $p_{XY}(\cdot,\cdot)$ such that  the   $f$-confusion graph over $m$-instances is $G^{\boxtimes m}$.
\item There exists a function $g(\cdot,\cdot)$ and a joint p.m.f. $p_{XY}(\cdot,\cdot)$ such that  the $g$-confusion graph over $m$-instances  is $G^{\lor m}$.
\end{enumerate}
\end{theorem}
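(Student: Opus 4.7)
The plan is, for each part, to construct an explicit joint p.m.f. $p_{XY}$ and a function whose single-instance confusion graph is $G$ and such that every non-adjacency in $G$ falls into the required category --- condition $C1$ for part (a), condition $C2$ for part (b). Theorem \ref{thm:theorem4} then delivers the $m$-instance identity. When $G$ is trivial (edge-less or complete) we have $G^{\boxtimes m}=G^{\lor m}$ and the claim follows immediately from Proposition \ref{prop:AND_power_m-instance_graph_OR_power}, so I focus on nontrivial $G$.

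For part (a), I take $\calX=V(G)$ and $\calY=V(G)\sqcup E(G)$ (disjoint union), placing positive joint mass exactly on the pairs $(x,x)$ for $x\in V(G)$ and on $(x_i,e)$ for $i\in\{1,2\}$ and each $e=\{x_1,x_2\}\in E(G)$ (the actual positive values can then be normalized arbitrarily). I orient each edge, designating a tail endpoint, and define $f(x,y)=1$ iff $y\in E(G)$ and $x$ is the tail of $y$. Each edge of $G$ becomes an edge of the confusion graph via the corresponding $y\in E(G)$. For any distinct non-adjacent pair $(x,x')$, the supports of $p_{XY}(x,\cdot)$ and $p_{XY}(x',\cdot)$ are disjoint: a common $y$ would force either $y=x=x'$ or $\{x,x'\}\subseteq y\in E(G)$, both impossible. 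So every non-adjacency is due to $C1$ and Theorem \ref{thm:theorem4}(a) yields $G^{(m)}=G^{\boxtimes m}$.

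For part (b), I take $\calX=V(G)$ and $\calY=E(G)\cup\{*\}$ where $*$ is a fresh symbol, with $p_{XY}(x,*)>0$ for every $x\in V(G)$ and $p_{XY}(x,e)>0$ iff $x\in e$. I orient each edge and set $g(x,y)=1$ iff $y\in E(G)$ and $x$ is the tail of $y$; in particular $g(x,*)=0$ for every $x$. Edges of $G$ again become edges of the confusion graph via the corresponding $y\in E(G)$. For any distinct non-adjacent pair $(x,x')$, no $y\in E(G)$ lies in both supports (this would force $\{x,x'\}=y\in E(G)$), so the only overlapping $y$ is $*$, on which $g(x,*)=g(x',*)=0$. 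Hence every non-adjacency is due to $C2$, and Theorem \ref{thm:theorem4}(b) yields $G^{(m)}=G^{\lor m}$. Assumption \ref{assume:irreducible} is satisfied throughout via the dedicated $(x,x)$ pairs in (a) and via the universal symbol $*$ in (b).

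The subtle step is part (b): one has to arrange that every non-adjacent pair shares at least one $y$ in its support while $g$ still matches on every shared $y$, and yet adjacent pairs remain distinguished by some $y$. Using a single ``universal'' symbol $*$ on which $g$ is constant is what makes this work cleanly, since the remaining edge-indexed elements of $\calY$ overlap two vertex-supports only when those vertices are actually adjacent, so no extra coincidences have to be engineered.
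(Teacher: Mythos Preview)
Your proof is correct and follows essentially the same strategy as the paper's: exhibit an explicit $(f,p_{XY})$ whose single-instance confusion graph equals $G$ with every non-edge arising from $C1$ (part (a)) or $C2$ (part (b)), then invoke Theorem~\ref{thm:theorem4}. The paper's concrete constructions differ slightly---for (a) it takes $\calY=E(G)$ with $f(x,y)=x$, and for (b) it takes $\calY=\binom{V(G)}{2}$ with $g(x,y)=x$ if $y\in E(G)$ and a fixed new symbol otherwise---whereas you use $\{0,1\}$-valued functions based on edge orientations together with the diagonal pairs $(x,x)$ in (a) and a single universal symbol $*$ in (b); one small bonus of your variant is that the diagonal pairs guarantee Assumption~\ref{assume:irreducible} even when $G$ has isolated vertices, a case the paper's choice $\calY=E(G)$ does not cover.
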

\begin{proof}[Proof of Theorem 2 (a)]
Let $\calX = V(G)$ and $\calY = E(G)$. Choose any valid p.m.f $p_{XY}(\cdot,\cdot)$ such that  $p_{XY}(x,y)>0$ if and only if vertex $x$ is incident with edge $y$.  Then, we set $f(x,y) = x$. Denote $f$-confusion graph by $G_f$.

For $x, x' \in \calX$ and $x \neq x'$, $(x,x')\notin E(G_f)$ implies that $p_{XY}(x,y)p_{XY}(x',y)=0$ for all $y \in \calY$, i.e., $(x,x')\notin E(G_f)$ is due to C1.
By Theorem \ref{thm:theorem4}, we have that $G^{(m)} = G^{\boxtimes m}$ for $m\ge 1$.
\end{proof} 
\begin{proof} [Proof of Theorem 2 (b)]
Let $\calX = V(G)$ and $\calY=\{\{i,j\} : i,j\in V(G) \text{~and~} i\neq j\}$. Choose any valid p.m.f. $p_{XY}(\cdot,\cdot)$ such that  $p_{XY}(x,y)>0$ if and only if vertex $x$ is an element of set $y$.     The function evaluation is such that  $$ g(x,y) = \begin{cases}
    x,\text{ if }y\in E(G),\\
    \boxplus,\text{ if }y\notin E(G),
\end{cases}$$ 
where symbol $\boxplus \notin \calX$. Denote $g$-confusion graph by $G_g$. Now we show that $G_g=G$ is $g$-confusion graph. For each $x\ne x'\in \calX$, there is exactly one choice of $y$, that is $\{x,x'\}$, such that  $p_{XY}(x,y)p_{XY}(x',y)>0$. $g(x,\{x,x'\})= g(x',\{x,x'\})$ if and only if $\{x,x'\}\notin E(G)$. This implies $G_g= G$ and all $(x,x')\notin E(G_g)$ is due to C2 as $x\ne x'\in \calX$ are arbitrary.  By Theorem \ref{thm:theorem4}, we have that $G^{(m)} = G^{\lor m}$ for $m\ge 1$.
\end{proof}


\begin{remark}\label{remark:free_to_choose_p_x}
    We point out that for a given $f$-confusion graph, for all $x\in\calX,y \in \calY$, we can always choose $p_{XY}(x,y)$ in such a way that $p_X(x)$ is uniform. Since $p_{XY}(x,y)= p_X(x) \cdot p_{Y|X}(y|x)$, we can set $p_X(x)$ be uniform and then choose $ p_{Y|X}(y|x)$ accordingly. 
\end{remark} 

We note here that the work of Charpenay et al. \cite{CharpenayTR23ITW}, considers a related problem in the classical zero error setting. 
For $m$-instances, they consider a model where Alice has access to side information $g(Y_t)$, $t \leq m$ in addition to $X$. 
%
%
Their work \cite{CharpenayTR23ITW}, considers a pairwise shared side information condition (Definition IV.1 of \cite{CharpenayTR23ITW}) and shows that in this case the $m$-instance $f$-confusion graph can be written as a
disjoint union of OR products. If $g(y)$ is a constant for all $y\in \calY$, this corresponds to our model. In this case, the pairwise shared side information condition implies that all $\bfx,\bfx'\notin E(G^{(m)})$ and $\bfx\ne \bfx'$ are due to condition C2, i.e., the if-part of Theorem \ref{thm:theorem4} (b) holds. In contrast, we provide an if-and-only-if characterization that covers both the strong product and OR product scenarios.

\section{Orthogonal representation of graphs}\label{sec:orth_rep}

Orthogonal representations of graphs generalize  vertex coloring and are the appropriate notion to consider within quantum codes for our problem setting.

\begin{definition}
An orthogonal representation\footnote{
Our definition of the orthogonal representation aligns with Lov\'asz's number defined in \cite{lovasz1979shannon}.  
The orthogonal representation is sometimes defined such that  \textbf{adjacent} vertices are assigned orthogonal vectors. The other definition is related to ours by taking graph complement. }
of a graph $G$ is a mapping $\phi:V(G)\mapsto \bbC^m$ for some $m\ge 1$ such that  each $\phi(v)$ is a complex unit-norm vector and distinct \textbf{non-adjacent}  vertices are assigned orthogonal vectors. The orthogonal rank of $G$, denoted by $\xi(G)$, is the minimum dimension $m$ such that  there exists an orthogonal representation of $G$.
\end{definition}  
We will use the following results (proved in Appendix \ref{app:appendix_C} for completeness).
\begin{proposition}\label{prop:xi}
\begin{enumerate}[label=(\alph*)]
    \item 
\label{prop:subgraph_xi}
If $G\subseteq H$, then $\xi(H)\le \xi(G)$.
\item \label{prop:xi_product}
$\xi(G\lor   H)\le  \xi(G\boxtimes  H)\le \xi(G)\xi(H)$.
\item \label{prop:xi_chi_compare}
$\xi(G)\le \chi(\overline{G})$. \label{prop:quan_less_than_class}
\end{enumerate} 
\end{proposition}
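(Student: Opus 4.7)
The plan is to prove the three items in succession, with (a) supplying a monotonicity principle that will be reused in (b), and (c) essentially converting a coloring into a representation by standard basis vectors.

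For part (a), the key observation is that enlarging the edge set shrinks the set of pairs at which the orthogonality constraint must hold. Concretely, I would take any orthogonal representation $\phi:V(G)\to \bbC^m$ of $G$ with $m=\xi(G)$ and argue it is automatically an orthogonal representation of $H$: since $V(H)=V(G)$ and $E(G)\subseteq E(H)$, every non-adjacent pair in $H$ is a non-adjacent pair in $G$, hence already receives orthogonal vectors under $\phi$. This immediately gives $\xi(H)\le \xi(G)$.

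For part (b), I would first establish the right-hand inequality by the standard tensor-product construction. Given optimal orthogonal representations $\phi_G:V(G)\to\bbC^{\xi(G)}$ and $\phi_H:V(H)\to\bbC^{\xi(H)}$, define $\phi(u,v):=\phi_G(u)\otimes \phi_H(v)\in\bbC^{\xi(G)\xi(H)}$. These are unit vectors, and I would verify orthogonality at non-adjacent pairs of $G\boxtimes H$ by using the characterization that $(u_1,v_1)$ and $(u_2,v_2)$ fail to be adjacent in $G\boxtimes H$ iff either $u_1\ne u_2$ with $(u_1,u_2)\notin E(G)$, or $v_1\ne v_2$ with $(v_1,v_2)\notin E(H)$; in either case one tensor factor is orthogonal, so the product is orthogonal. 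This gives $\xi(G\boxtimes H)\le \xi(G)\xi(H)$. For the left-hand inequality, I would note that $G\boxtimes H$ is a spanning subgraph of $G\lor H$ (every edge type listed in the strong-product definition also appears in the OR-product definition), and then invoke part (a) to conclude $\xi(G\lor H)\le \xi(G\boxtimes H)$.

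For part (c), I would let $k=\chi(\overline{G})$ and take a proper $k$-coloring of $\overline{G}$; its color classes partition $V(G)$ into $k$ sets $C_1,\dots,C_k$, each of which is independent in $\overline{G}$ and therefore a clique in $G$. Define $\phi:V(G)\to\bbC^k$ by $\phi(v)=e_i$ whenever $v\in C_i$, where $\{e_i\}_{i=1}^k$ is the standard basis. Any two non-adjacent vertices of $G$ must lie in different cliques $C_i,C_j$ with $i\ne j$, so they are assigned orthogonal standard basis vectors. Hence $\phi$ is an orthogonal representation, giving $\xi(G)\le k=\chi(\overline{G})$.

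I do not expect a serious obstacle here; the only subtlety is being careful with the non-adjacency case analysis in (b), where one must check that the definition of the strong product yields exactly the two disjunctive conditions needed to pick out which tensor factor supplies the orthogonality. Parts (a) and (c) are essentially bookkeeping.
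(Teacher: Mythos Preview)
Your proposal is correct and follows essentially the same approach as the paper: part (a) by observing that a spanning-subgraph inclusion only removes orthogonality constraints, part (b) via the tensor-product construction for the right inequality and part (a) applied to $G\boxtimes H\subseteq G\lor H$ for the left, and part (c) by mapping color classes of $\overline{G}$ to standard basis vectors. The only cosmetic difference is that the paper phrases (c) directly in terms of a coloring map $c$ rather than clique classes, but the argument is identical.
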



\begin{definition}{\it Lov\'asz number.} 
Let $G$ be     a finite simple graph. Its Lov\'asz number $\vartheta(G)$ \cite{lovasz1979shannon} is
$$
\vartheta(G) = \min_{\phi,c}\max_{i\in V(G)}\frac{1}{|\braket{c, \phi(i)}|^2}
$$
where the minimization is over all orthogonal representations $\phi$ such that every vertex is mapped to a real vector, and over all unit-norm real vectors $c$ ($c$ is called the handle).
\end{definition}  

A projector $P$ is a Hermitian matrix in $\bbC^{d\times d}$ such that $P^2=P.$
\begin{definition}{\it Projective rank \cite{Roberson_13}.}  
A $d/r$-projective representation of a graph
$G$ is an assignment of rank $r$ projectors in $\bbC^{d\times d}$ to the vertices of $G$ such that distinct \textbf{non-adjacent}
vertices are assigned orthogonal projectors. We say that the value of a $d/r$-representation
is the rational number $d/r$. The projective rank of $G$, denoted by $\xi_f(G)$, is defined by
\begin{align*}
    \xi_f(G) =\inf_{d,r}\bigg{\{}\frac{d}{r}:G\text{ has a }d/r\text{-projective representation}\bigg{\}}.
\end{align*}
\end{definition} 
We note that a $d/1$-projective representation is an orthogonal representation, so $\xi_f(G)\le \xi(G)$.
\begin{lemma}\label{lemma:lovasz}
\begin{enumerate}[label=(\alph*)]
\item \label{prop:lovasz_number_subgraph}
If $G\subseteq H$, then $\vartheta(G) \ge \vartheta (H)$.
    \item  \label{lemma:lovasz_number_indep_compare}
$\alpha(G)\le \vartheta(G)$ (Lemma 3 of \cite{lovasz1979shannon}). 
\item \label{lemma:lovasz_number_AND_product}
$\vartheta(G\lor H) =\vartheta(G\boxtimes H) = \vartheta(G)\vartheta(H)$ (Theorem 7 of \cite{lovasz1979shannon})\footnote{Lov\'asz noted that $\vartheta(G\lor H) = \vartheta(G)\vartheta(H)$ follows from the arguments in \cite{lovasz1979shannon}. We provide an explicit proof in Appendix \ref{app:vartheta_multiplicity} for completeness.}.
\item $\vartheta(G) \le \xi_f({G})\le \xi({G})$ (Section 6 of  \cite{Mancinska16}).\label{lemma:lovasz_number_xi_compare}
\end{enumerate}
Using the results in parts (b), (c) and (d), we have
\begin{align}
&  \alpha(G^{\boxtimes m}) \le\vartheta(G ^{\boxtimes m}) =   \vartheta(G )^m  = \vartheta(G ^{(m)}) = 
\vartheta(G ^{\lor m})\le \xi(  G ^{\lor m}). \label{eq:sandwich}
\end{align}
$\vartheta(G ^{(m)}) = \vartheta(G )^m  $ comes from the following. Proposition \ref{prop:AND_power_m-instance_graph_OR_power} states that $G^{\boxtimes m} \subseteq G^{(m)} \subseteq G^{\lor m}$ for $m\ge 1$. Part (b) of the above Lemma gives 
\begin{align*}
    \vartheta(G ^{\boxtimes m}) \ge \vartheta(G ^{(m)})  \ge  
\vartheta(G ^{\lor m}).
\end{align*}
Part (c)  the above Lemma gives $ \vartheta(G ^{\boxtimes m})=\vartheta(G ^{\lor m}) = \vartheta(G )^m$. Then, the conclusion follows.
\end{lemma}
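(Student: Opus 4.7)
The plan is to observe that parts (b), (c), and (d) are quoted directly from \cite{lovasz1979shannon} and \cite{Mancinska16}, so only part (a) and the concluding sandwich chain (\ref{eq:sandwich}) actually require argument here.

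For part (a), I would use monotonicity of the feasible set in the definition of $\vartheta$. Suppose $G \subseteq H$, so that $V(G) = V(H)$ and $E(G) \subseteq E(H)$. Then any two vertices that are non-adjacent in $H$ are automatically non-adjacent in $G$; equivalently, $H$ imposes a \emph{subset} of the orthogonality constraints that $G$ imposes on an orthonormal representation. Hence every orthogonal representation $\phi$ of $G$ is also a valid orthogonal representation of $H$, and the same handle $c$ is admissible. Since the objective $\max_{i\in V}1/|\braket{c,\phi(i)}|^2$ does not reference the edge set, minimizing over a larger feasible family for $H$ gives $\vartheta(H) \le \vartheta(G)$.

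For the chain (\ref{eq:sandwich}) I would proceed from left to right. The first inequality is part (b) applied to the graph $G^{\boxtimes m}$. The equality $\vartheta(G^{\boxtimes m}) = \vartheta(G)^m$ follows by iterating the strong-product half of part (c). For the middle equality, I would invoke Proposition \ref{prop:AND_power_m-instance_graph_OR_power}, which asserts $G^{\boxtimes m} \subseteq G^{(m)} \subseteq G^{\lor m}$; applying part (a) twice sandwiches $\vartheta(G^{(m)})$ between $\vartheta(G^{\lor m})$ and $\vartheta(G^{\boxtimes m})$. Both outer quantities equal $\vartheta(G)^m$ by the OR- and strong-product halves of (c), forcing $\vartheta(G^{(m)}) = \vartheta(G)^m$ as well. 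The remaining equality $\vartheta(G^{\lor m}) = \vartheta(G)^m$ is again (c) iterated, and the final inequality is (d) applied to $G^{\lor m}$.

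Since parts (b)--(d) are cited results, the only genuinely new ingredient is the one-line monotonicity observation for part (a). The main substantive obstacle is not assembling the chain but rather establishing the OR-product multiplicativity of $\vartheta$ used in (c); as the footnote to (c) notes, this equality is only implicit in \cite{lovasz1979shannon} and is given an explicit proof in Appendix \ref{app:vartheta_multiplicity}. Once that is in hand, the sandwich chain is a mechanical combination of the listed facts together with Proposition \ref{prop:AND_power_m-instance_graph_OR_power}.
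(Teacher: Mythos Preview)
Your proposal is correct and matches the paper's approach: the paper does not spell out a proof of part (a) or of the chain~(\ref{eq:sandwich}) beyond the attribution line, and you have supplied exactly the standard feasibility argument for (a) and the step-by-step justification of the chain. You are also right that the middle equality $\vartheta(G^{(m)}) = \vartheta(G)^m$ requires part (a) together with Proposition~\ref{prop:AND_power_m-instance_graph_OR_power}, which the paper's ``parts (b), (c) and (d)'' attribution glosses over.
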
 





\begin{lemma}\label{lemma:lemma4}
For a function $f:\calX\times\calY \mapsto\calZ$ with p.m.f. $p_{XY}(\cdot,\cdot)$, let $G$ denote the $f$-confusion graph. The optimal quantum rate for computation over $m$ instances is $\frac{\log_2\xi(\overline{G^{(m)}})}{m}$ ($\overline{G^{(m)}}$ denotes the complement of the $f$-confusion graph over $m$-instances $G^{(m)}$).
\end{lemma}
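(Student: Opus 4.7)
The plan is to establish a tight correspondence between valid $d$-dimensional zero-error quantum protocols for the $m$-instance task and $d$-dimensional orthogonal representations of $\overline{G^{(m)}}$, from which the rate formula $\frac{1}{m}\log_2 \xi(\overline{G^{(m)}})$ follows by minimizing the ambient dimension $d$. First, I would isolate the structural requirement the protocol places on Alice's states. For each fixed $\bfy$, the sets $S_{\bfy}^{\bfz} := \{\bfx : p_{\bfX\bfY}(\bfx, \bfy) > 0,\; f^{(m)}(\bfx, \bfy) = \bfz\}$ partition the $\bfx$-inputs compatible with $\bfy$, and zero error demands $\Tr(\Lambda^{\bfz}_{f,\bfy}\, \rho_{\bfx}) = 1$ precisely when $\bfx \in S_{\bfy}^{\bfz}$. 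For any $\bfx \in S_{\bfy}^{\bfz}$ and $\bfx' \in S_{\bfy}^{\bfz'}$ with $\bfz \neq \bfz'$, the operator $\Lambda^{\bfz}_{f,\bfy}$ must then act as the identity on the support of $\rho_{\bfx}$ and annihilate the support of $\rho_{\bfx'}$; since $\Lambda^{\bfz}_{f,\bfy}$ is Hermitian, these supports must be mutually orthogonal, recovering the standard perfect-distinguishability characterization cited via \cite{MEDEIROSD05} in its natural mixed-state form. Aggregating over all $\bfy$ and using Definition \ref{defn:confusion_graph}, this condition is equivalent to $\rho_{\bfx} \perp \rho_{\bfx'}$ whenever $(\bfx, \bfx') \in E(G^{(m)})$.

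For the converse, starting from any valid protocol with $\dim \calH = d$, I choose for each $\bfx$ a unit vector $\phi(\bfx)$ in the support of $\rho_{\bfx}$ (well-defined, and every $\bfx$ participates by Assumption \ref{assume:irreducible}). Orthogonal supports force $\langle \phi(\bfx), \phi(\bfx')\rangle = 0$ whenever $(\bfx, \bfx') \in E(G^{(m)})$, i.e., whenever $\bfx, \bfx'$ are non-adjacent in $\overline{G^{(m)}}$. Hence $\phi$ is an orthogonal representation of $\overline{G^{(m)}}$ in $\bbC^d$, yielding $d \geq \xi(\overline{G^{(m)}})$ and the matching lower bound $\frac{1}{m}\log_2 \xi(\overline{G^{(m)}})$ on the rate.

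For achievability, I take an optimal orthogonal representation $\phi$ of $\overline{G^{(m)}}$ in $\bbC^d$ with $d = \xi(\overline{G^{(m)}})$, have Alice send the pure state $\rho_{\bfx} = \phi(\bfx)\phi(\bfx)^{\dagger}$, and for each $\bfy$ let $P_{\bfz, \bfy}$ be the orthogonal projector onto $\mathrm{span}\{\phi(\bfx) : \bfx \in S_{\bfy}^{\bfz}\}$. Any $\bfx \in S_{\bfy}^{\bfz}$ and $\bfx' \in S_{\bfy}^{\bfz'}$ with $\bfz \neq \bfz'$ are adjacent in $G^{(m)}$ and hence non-adjacent in $\overline{G^{(m)}}$, so $\phi(\bfx) \perp \phi(\bfx')$; consequently the ranges of $\{P_{\bfz, \bfy}\}_{\bfz}$ are mutually orthogonal and $\sum_{\bfz} P_{\bfz, \bfy} \preceq I$. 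Fixing an arbitrary $\bfz_0 \in \calZ^m$ and setting $\Lambda^{\bfz_0}_{f,\bfy} = P_{\bfz_0, \bfy} + \bigl(I - \sum_{\bfz} P_{\bfz, \bfy}\bigr)$ together with $\Lambda^{\bfz}_{f,\bfy} = P_{\bfz, \bfy}$ for $\bfz \neq \bfz_0$ yields a valid POVM whose outcome on any compatible input is deterministically $f^{(m)}(\bfx, \bfy)$, achieving rate $\frac{1}{m}\log_2 \xi(\overline{G^{(m)}})$.

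The principal technical subtlety lies in the mixed-state aspect of the converse: since Alice may a priori prepare arbitrary density operators, the zero-error condition yields orthogonality of supports rather than of literal state vectors, and care is required to produce an orthogonal representation in the \emph{same} dimension as the protocol's Hilbert space (rather than, say, a larger purification space). The POVM completion in achievability is routine but must preserve the sum-to-identity constraint without disturbing the deterministic outcomes on compatible inputs, which is exactly why the residual $I - \sum_{\bfz} P_{\bfz, \bfy}$ can be safely absorbed into $\Lambda^{\bfz_0}_{f,\bfy}$.
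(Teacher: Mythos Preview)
Your proposal is correct and follows essentially the same approach as the paper: both directions establish the equivalence between $d$-dimensional zero-error protocols and $d$-dimensional orthogonal representations of $\overline{G^{(m)}}$, with the converse handled by selecting a unit vector from the support of each (possibly mixed) $\rho_{\bfx}$ and the achievability handled by having Bob project onto the spans $\mathrm{span}\{\phi(\bfx):\bfx\in S_{\bfy}^{\bfz}\}$. The only cosmetic difference is that the paper completes the POVM by appending the residual $I-\sum_{\bfz}\Pi^{\bfz,\bfy}$ as an additional element, whereas you absorb it into a fixed outcome $\bfz_0$; both are valid and immaterial to the argument.
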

\begin{proof}
An orthogonal representation $\phi:\overline{G^{(m)}}\mapsto \bbC^{\xi(\overline{G^{(m)}})}$  induces a quantum code as follows. Suppose Alice and Bob get  $\bfx $ and $\bfy$ respectively.  Alice sends $\ket{\phi(\bfx)}\bra{\phi(\bfx)}$ to Bob.  Bob chooses his POVM to be $\{\Pi^{\bfz,\bfy}  :\bfz\in \calZ^m\}$ where $\Pi^{\bfz,\bfy}$ is the projector onto $S_{\bfz,\bfy}:=\text{span}\{\ket{\phi(\bfx)}:p_{\bfX\bfY}(\bfx,\bfy)>0, f^{(m)}(\bfx,\bfy) 
 = \bfz \}$. 
 If $\sum_{\bfz}\Pi^{\bfz,\bfy}\ne I$, Bob adds $I-\sum_{\bfz}\Pi^{\bfz,\bfy}$ to complete a POVM. If $\bfx,\bfx'$ is such that $p_{\bfX\bfY}(\bfx,\bfy)p_{\bfX\bfY}(\bfx',\bfy)>0$, and $f^{(m)}(\bfx,\bfy)=\bfz,f^{(m)}(\bfx',\bfy)=\bfz'$ for some $\bfz\ne \bfz'$, then $\ket{\phi(\bfx)} \in S_{\bfz,\bfy}$ and $\ket{\phi(\bfx')}\in S_{\bfz',\bfy}$. Since $\phi$ is an orthogonal representation of $\overline{G}$, then  $\Pi^{\bfz,\bfy}\perp\Pi^{\bfz',\bfy}$, which implies that  the code is zero-error.   The code has rate $\frac{1}{m}\log_2\xi(\overline{G^{(m)}})$ as the orthogonal representation $\phi$ is $\xi(\overline{G^{(m)}})$-dimensional.


Conversely, a code   of rate $\frac{1}{m}\log_2d$ induces an orthogonal representation $\psi:V(G^{(m)})\mapsto \bbC^{d}$. Let $\rho_{\bfx}$ be a $d\times d$  density operator associated with input $\bfx$. If $\rho_{\bfx}$ is a pure state $\ket{v}\bra{v}$ for some $\ket{v}\in \bbC^d$, then we set  
 $\psi(\bfx) = \ket{v}$. If $\rho_{\bfx}$ is a mixed state with its density operator written as  
 $\rho_{\bfx} =\sum_{i=1}^k \lambda_i\ket{v_i}\bra{v_i}\text{ with }\lambda_i\ne 0\text{ for }i=1,\dots,k.$
Then let $l\in[k]$ be arbitrary and $\psi(\bfx ) = \ket{v_l}$. 


Now we check it is indeed an orthogonal representation. Let    $(\bfx,\bfx')\in E(G^{(m)})$ be such that $p_{\bfX\bfY}(\bfx,\bfy)p_{\bfX\bfY}(\bfx',\bfy)>0$ and $f^{(m)}(\bfx,\bfy)\ne f^{(m)}(\bfx',\bfy)$.  This code  is zero-error, so the POVM $\{\Lambda^\bfz_{f(\cdot,\cdot), \bfy}\}_{\bfz \in \calZ^m}$ must distinguish $\rho_{\bfx},\rho_{\bfx'}$ perfectly. This implies that $\rho_{\bfx}\perp\rho_{\bfx'}$. We have 
\begin{align*}
   &  \sum_{i=1}^k \lambda_i\ket{v_i}\bra{v_i}  \perp  \sum_{i=1}^k \lambda_i'\ket{v_i'}\bra{v_i'}  \Rightarrow \ket{v_i}\perp\ket{v_j'}\text{ for all }i,j
   \Rightarrow \psi(\bfx) \perp \psi(\bfx').
\end{align*}
Since $(\bfx,\bfx')\in E(G)$ is arbitrary, it implies that $\psi$ is an orthogonal representation.  
\end{proof}

\begin{theorem}\label{thm:witsenhausen_quantum_rate}  The rate in the quantum setting is given by 
\begin{align*}
    R_{\text{quantum}} &= \inf_m \frac{\log_2\xi(\overline{G^{(m)}})}{m}.
\end{align*}
\end{theorem}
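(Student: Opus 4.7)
The plan is to read off the theorem directly from Lemma \ref{lemma:lemma4} once we fix the operational definition of $R_{\text{quantum}}$. By analogy with Definition \ref{defn:witsenhausen_classical_rate}, $R_{\text{quantum}}$ is the infimum, over all block lengths $m \ge 1$, of the best achievable $m$-instance quantum rate. Lemma \ref{lemma:lemma4} identifies that best $m$-instance rate as $\frac{\log_2 \xi(\overline{G^{(m)}})}{m}$, so taking the infimum over $m$ immediately yields the claimed formula.

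To mirror the classical statement more closely and justify that the infimum is also attained as a genuine limit, I would additionally establish sub-multiplicativity of the orthogonal rank,
\[
\xi(\overline{G^{(m+n)}}) \;\le\; \xi(\overline{G^{(m)}}) \cdot \xi(\overline{G^{(n)}}).
\]
Given optimal orthogonal representations $\phi_m$ of $\overline{G^{(m)}}$ in $\bbC^{d_m}$ and $\phi_n$ of $\overline{G^{(n)}}$ in $\bbC^{d_n}$, I would define $\phi_{m+n}(\bfx_1,\bfx_2) := \phi_m(\bfx_1) \otimes \phi_n(\bfx_2) \in \bbC^{d_m d_n}$, where $(\bfx_1,\bfx_2)$ splits any length-$(m+n)$ sequence into its first $m$ and last $n$ coordinates. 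Each image is a unit vector. For $(\bfx,\bfx') \in E(G^{(m+n)})$ witnessed by some $\bfy = (\bfy_1,\bfy_2)$ with $p_{\bfX\bfY}(\bfx,\bfy)\, p_{\bfX\bfY}(\bfx',\bfy) > 0$, the function-value disagreement must sit in either the first $m$ or last $n$ coordinates, placing the corresponding sub-pair in $E(G^{(m)})$ or $E(G^{(n)})$ and orthogonalizing that tensor factor; the other factor has inner product at most $1$ (equal to $1$ exactly when its two inputs coincide). Hence $\phi_{m+n}(\bfx) \perp \phi_{m+n}(\bfx')$, giving sub-multiplicativity and, via Fekete's lemma, $\inf_m \frac{\log_2 \xi(\overline{G^{(m)}})}{m} = \lim_{m \to \infty} \frac{\log_2 \xi(\overline{G^{(m)}})}{m}$.

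I do not anticipate a real obstacle, since the substantive work has already been done in Lemma \ref{lemma:lemma4}, which established the equivalence between zero-error $m$-instance quantum protocols and orthogonal representations of $\overline{G^{(m)}}$. The only mild subtlety is the case analysis for edges of $G^{(m+n)}$: the witnessing $\bfy$ determines where the function values disagree and hence which tensor factor must carry the orthogonality. The tensor-product construction itself is of the same flavor as the argument behind Proposition \ref{prop:xi}, so no new technique is required.
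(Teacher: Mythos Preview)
Your proposal is correct and follows essentially the same route as the paper: establish sub-multiplicativity $\xi(\overline{G^{(m+n)}}) \le \xi(\overline{G^{(m)}})\,\xi(\overline{G^{(n)}})$ and apply Fekete's lemma. The only cosmetic difference is that the paper phrases the sub-multiplicativity at the protocol level (``run the $m$-instance and $n$-instance protocols separately'') while you work directly with the tensor product of orthogonal representations; by Lemma~\ref{lemma:lemma4} these are the same construction.
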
 
\begin{proof} 
It can be seen that $\log_2\xi(\overline{G^{(m+n)}})\le\log_2\xi(\overline{G^{(m)}})+\log_2\xi(\overline{G^{(n)}})$. Indeed, fix a code for $G^{(m)}$ and $G^{(n)}$ respectively. If Alice and Bob have inputs $\bfx = (\bfx_1,\bfx_2)\in \calX^{m+n}$ and $\bfy = (\bfy_1,\bfy_2) \in \calY^{m+n}$ where $\bfx_1\in \calX^m,\bfx_2\in \calX^n,\bfy_1\in\calY^m,\bfy_2\in\calY^n$, they can run the codes for $G^{(m)}$ and $G^{(n)}$ separately associated with inputs $(\bfx_1,\bfy_1)$ and $(\bfx_2,\bfy_2)$ respectively. This results in an $(m+n)$-instance code. 
By Fekete lemma \cite{Fekete_23}, the limit exists and is $\inf_m \frac{1}{m}\text{log}\, \xi (\overline{G^{(m)}})$.
\end{proof}


We emphasize that single-letter characterizations for the quantum rate and the FLC/VLC rates are not known in general. However,  for specific classes of $f$-confusion graphs, it is possible to obtain a single-letter characterization. 



In Lemma \ref{lemma_single_letter} below, if a) holds, then we have a single-letter characterization for quantum rate (asymptotic in $m$), but may not have a single-letter characterization for FLC rate. If (b) holds, i.e., if the confusion graph is perfect, then both quantum and FLC rate are equal and have a single-letter characterization. 

\begin{definition}{\it Perfect Graph, (Chapter 8 of \cite{west_2000}).}
A graph $G$ is perfect if $\chi(H)=\omega(H)$ for every induced subgraph $H$ of $G$.
\end{definition}

\begin{lemma}\label{lemma_single_letter}
    \begin{enumerate}[label=(\alph*)]
        \item If $\vartheta(\overline{G}) = \xi(\overline{G})$, then $\frac{1}{m}\log \xi(\overline{G^{(m)}}) =\xi (\overline{G})$ for all $m\ge 1$.  \label{lemma_single_letter_part_a}
        \item If $G$ is perfect, then $\xi(\overline{G^{(m)}})^{\frac{1}{m}} =\xi (\overline{G}) = \chi(G)=\chi(G^{(m)})^{\frac{1}{m}}$ for all $m\ge 1$.\label{lemma_single_letter_part_b} 
    \end{enumerate}
\end{lemma} The proof is given in Appendix \ref{app:proof_lemma_single_letter}. 
Suppose $G$ satisfies $\vartheta(\overline{G}) = \xi(\overline{G})$ or  is perfect. Taking infimum over $m$, we obtain a single-letter characterization  $\inf_m \frac{\log_2\xi(\overline{G^{(m)}})}{m} =\xi (\overline{G})$.  We point out that condition \ref{lemma_single_letter_part_b} implies condition \ref{lemma_single_letter_part_a} in the above Lemma. This is because if $G$ is perfect, then
\begin{align*}
    \omega(G) = \alpha(\overline{G}) \overset{(a)}{\le}  \vartheta(\overline{G})\overset{(b)}{\le}\xi(\overline{G})\overset{(c)}{\le}\chi(G) =\omega(G).
\end{align*}
where $(a),(b)$ are from  Lemma \ref{lemma:lovasz} \ref{lemma:lovasz_number_indep_compare} and \ref{lemma:lovasz_number_xi_compare} respectively, and $(c)$ is from Proposition \ref{prop:xi} \ref{prop:xi_chi_compare}. On the other hand, the condition in Lemma \ref{lemma_single_letter} \ref{lemma_single_letter_part_a} does not imply the condition in Lemma \ref{lemma_single_letter} \ref{lemma_single_letter_part_b} as there exist graphs such that $\vartheta(\overline{G}) = \xi(\overline{G})$, which are not perfect. One such example will be discussed in Section \ref{subsubsec:G13}.  

\section{Analyzing the behavior of quantum advantage in FLC}\label{sec:analysis_bahavior_quantum_advantage}
For any $m$-instance $f$-confusion graph $G^{(m)}, m \geq 1$, it is evident that the quantum rate is at most the classical FLC rate ({\it cf.} Proposition \ref{prop:xi} \ref{prop:quan_less_than_class}). We refer to the ratio of classical FLC and quantum rates as the quantum advantage in this section. As discussed in Section \ref{subsec:related_work}, there are several results in the literature that discuss the quantum advantage in the single-instance case. However, depending on the problem, a quantum advantage in the single-instance setting may not necessarily translate into a quantum advantage when considering $R_{\text{quantum}}$ and $R_{\text{FLC}}$.

In this section, we consider several examples of problems within function computation with side information and explore the behavior of the quantum advantage in the different settings (single-instance and asymptotic rate). For all the settings, we consider the quantum advantage under the two extremes of the corresponding $f$-confusion graph, namely $G^{\boxtimes m}$ and $G^{\lor m}$.

\subsection{No quantum advantage in both single-instance and multiple-instance settings.\label{subsec:C5}}

Consider the functions $\tilde f$ and $\tilde g$ described in the  Table \ref{table:simple_eg1} and \ref{table:simple_eg2}. The $m$-instance $f$-confusion graphs for $\tilde f$ and $\tilde g$ are strong and OR products of $C_5$ respectively.   The next proposition shows that there is no quantum advantage in both single-instance and multiple-instance cases for both $\tilde f$ and $\tilde g$. 
\begin{proposition}\label{prop:C5_AND_rates} The following statements hold. (i) $\xi(\overline{C_5}) = \chi(C_5)=3$, and (ii)
$R_{\text{quantum}}(\tilde f)  = R_{\text{FLC}}(\tilde f) = \frac{1}{2}\log_2 5$, (iii)
$R_{\text{quantum}}(\tilde g)  = R_{\text{FLC}}(\tilde g) =  \log_2 \frac{5}{2}$. 
\end{proposition}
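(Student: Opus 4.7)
The plan is to treat (i), (ii), (iii) in sequence, with (iii) being the main technical obstacle.

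For (i), I sandwich $\xi(\overline{C_5})$ between $\vartheta(\overline{C_5})$ and $\chi(C_5)$ using Proposition~\ref{prop:xi}(c) and Lemma~\ref{lemma:lovasz}(d). The upper bound $\chi(C_5)=3$ is the standard fact that an odd cycle is $3$-chromatic. For the lower bound, $C_5$ is self-complementary, so $\vartheta(\overline{C_5}) = \vartheta(C_5) = \sqrt{5}$ (Lov\'asz's original computation), and integrality of $\xi$ forces $\xi(\overline{C_5}) \ge \lceil \sqrt{5}\rceil = 3$.

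For (ii), Theorem~\ref{thm:theorem4}(a) applied to Table~\ref{table:simple_eg1} gives $G^{(m)}_{\tilde f} = C_5^{\boxtimes m}$, and then Proposition~\ref{prop:chi}(a) combined with self-complementarity gives $\overline{G^{(m)}_{\tilde f}} = C_5^{\lor m}$. I would then produce the chain
\[
5^{m/2} \;=\; \vartheta(C_5^{\lor m}) \;\le\; \xi(C_5^{\lor m}) \;\le\; \chi(C_5^{\boxtimes m}),
\]
where the equality uses Lemma~\ref{lemma:lovasz}(c) with $\vartheta(C_5) = \sqrt{5}$, and the last inequality is Proposition~\ref{prop:xi}(c) applied to $C_5^{\lor m}$. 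Pairing this with the Lov\'asz--Shannon value $\chi(C_5^{\boxtimes 2}) = 5$ and sub-multiplicativity of $\chi$ under $\boxtimes$ (Proposition~\ref{prop:chi}(c)) saturates the chain at $m=2$, giving $\xi(C_5^{\lor 2}) = \chi(C_5^{\boxtimes 2}) = 5$. Taking $\frac{1}{m}\log_2$ and passing to the infimum over $m$ pins both $R_{\text{classical}}(\tilde f)$ and $R_{\text{quantum}}(\tilde f)$ to exactly $\tfrac{1}{2}\log_2 5$.

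For (iii), Theorem~\ref{thm:theorem4}(b) applied to Table~\ref{table:simple_eg2} gives $G^{(m)}_{\tilde g} = C_5^{\lor m}$, so $\overline{G^{(m)}_{\tilde g}} = C_5^{\boxtimes m}$. The classical rate is immediate from Proposition~\ref{prop:chi_frac}(b): $R_{\text{classical}}(\tilde g) = \log_2 \chi_f(C_5) = \log_2 \tfrac{5}{2}$. The quantum upper bound $R_{\text{quantum}}(\tilde g) \le \log_2 \tfrac{5}{2}$ follows from $\xi(C_5^{\boxtimes m}) \le \chi(C_5^{\lor m})$ (Proposition~\ref{prop:xi}(c) applied to the complement).

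The hard part is the matching quantum lower bound. The Lov\'asz bound $\xi(C_5^{\boxtimes m}) \ge \vartheta(C_5)^m = 5^{m/2}$ only yields rate $\tfrac{1}{2}\log_2 5 < \log_2 \tfrac{5}{2}$, so I would instead appeal to the projective rank:
\[
\xi(C_5^{\boxtimes m}) \;\ge\; \xi_f(C_5^{\boxtimes m}) \;=\; \xi_f(C_5)^m \;=\; (5/2)^m,
\]
invoking multiplicativity of $\xi_f$ under strong product together with the value $\xi_f(C_5) = 5/2$. Both are established in the Mancinska--Roberson theory already cited in Lemma~\ref{lemma:lovasz}(d); the latter is the $k=2$ case of the formula $\xi_f(C_{2k+1}) = (2k+1)/k$ for odd cycles. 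Combining with the matching upper bound yields $R_{\text{quantum}}(\tilde g) = \log_2 \tfrac{5}{2} = R_{\text{classical}}(\tilde g)$.
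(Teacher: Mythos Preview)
Your argument is correct and follows essentially the same route as the paper: part (i) via the $\vartheta/\chi$ sandwich and self-complementarity of $C_5$; part (ii) via $\vartheta(C_5)^m=5^{m/2}$ as lower bound and $\chi(C_5^{\boxtimes 2})=5$ as upper bound; part (iii) via the projective rank lower bound $\xi_f(C_5^{\boxtimes m})=\xi_f(C_5)^m=(5/2)^m$ paired with $\chi_f(C_5)=5/2$. The only cosmetic difference is that the paper cites Roberson's thesis for $\xi_f(\overline{C_5})=5/2$ and \cite{CubittLRSSW_14} (Theorem~27) for the strong-product multiplicativity of $\xi_f$, rather than the Mancinska--Roberson reference used in Lemma~\ref{lemma:lovasz}(d).
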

\begin{proof}
    See Appendix \ref{app:pf_prop10}.
\end{proof}

\subsection{Quantum advantage in both single-instance and multiple-instance settings.}
\subsubsection{Example associated with graph OR product\label{subsubsec:G13}}
\begin{definition}{\it The thirteen vertex graph $G_{13}$.} 
\label{defn:G_thirteen}
    Let 
    \begin{align*}
    &S = \{A = (1,0,0), B = (0,1,0), C = (0,0,1), L = (0,1,\\&1), M = (0,1,-1), N = (1,0,1), P = (1,0,-1), Q = (1,1,\\&0), R = (1,-1,0), Y = (1,1,-1),X = (1,-1,1),
    Z = \\&(-1,1,1),W = (1,1,1)\}\subseteq \bbC^3. 
    \end{align*}
      $G_{13}$, the graph in Fig  \ref{fig:G13_4_coloring}, is the graph with vertex set $S$. Two vertices $u=(x_1,y_1,z_1),v=(x_2,y_2,z_2)$ are adjacent if $x_1x_2+y_1y_2+z_1z_2=0.$
\end{definition}
The graph $G_{13}$ was introduced in \cite{YuO12}.
Consider a function whose $m$-instance $f$-confusion graph is $G_{13}^{\lor m}$ for $m\ge 1$.
\begin{figure}[t]
    \centering
     
\begin{tikzpicture} 
[
  node distance=0.5cm,
  every node/.style={circle, draw, minimum size=4mm, inner sep=0pt, font = \normalsize},
  every edge/.style={draw, -} 
]

\node[draw=red, fill = red!20] (W) at (\xW,\yW) {W};
\node[draw=green, fill = green!20] (X) at (\xX,\yX) {X};
\node[draw=blue, fill = blue!20] (Y) at (\xY,\yY) {Y};
\node[draw=blue, fill = blue!20] (Z) at (\xZ,\yZ) {Z};

\node (M)[draw=green, fill = green!20] at (\xM,\yM) {M};
\node (L)[draw=red, fill = red!20] at (\xL,\yL) {L};
\node (N)[draw=green, fill = green!20] at (\xN,\yN) {N};
\node (P)[draw=blue, fill = blue!20] at (\xP,\yP) {P};
\node (Q)[draw=yellow, fill = yellow!20] at (\xQ,\yQ) {Q};
\node (R)[draw=green, fill = green!20] at (\xR,\yR) {R};

\node (A)[draw=blue, fill = blue!20] at (\xA,\yA) {A};
\node (B)[draw=yellow, fill = yellow!20] at (\xB,\yB) {B};
\node (C)[draw=red, fill = red!20] at (\xC,\yC) {C}; 


\draw (A) edge  (B);
\draw (A) edge  (C);
\draw (B) edge  (C);

\draw (L) edge  (M);
\draw (A) edge  (M);
\draw (A) edge  (L);

\draw (N) edge  (P);
\draw (B) edge  (N);
\draw (B) edge  (P);

\draw (Q) edge  (R);
\draw (Q) edge  (C);
\draw (R) edge  (C);

\draw (X) edge  (P);
\draw (Y) edge  (R);
\draw (N) edge  (Z);
\draw (M) edge  (Z);
\draw (L) edge  (Y);
\draw (Q) edge[bend left = 10]  (Z); 
\draw (W) edge  (M);
\draw (X) edge[bend right = 10]  (L);
\draw (R) edge  (W);
\draw (X) edge  (Q);
\draw (P) edge  (W);
\draw (N) edge[bend left = 10]  (Y); 
\end{tikzpicture}
    \caption{$G_{13}$ with a $4$-coloring.}
    \label{fig:G13_4_coloring}
\end{figure}
\begin{proposition}\label{prop:G_13_quantum_rate}
$$\xi(\overline{G_{13}^{\lor m}})  = \xi(\overline{G_{13}})^m = 3^m, \chi_f(G_{13}) = 35/11. $$    
\end{proposition}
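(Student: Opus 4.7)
The plan is to split the statement into two independent claims and treat them separately.

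For the first equality $\xi(\overline{G_{13}^{\lor m}}) = 3^m$, the first move is to apply De Morgan's law (Proposition \ref{prop:chi}\ref{prop:de_morgan}) iteratively to get $\overline{G_{13}^{\lor m}} = \overline{G_{13}}^{\boxtimes m}$, so it suffices to show $\xi(\overline{G_{13}}^{\boxtimes m}) = 3^m$. For the upper bound, I would first establish $\xi(\overline{G_{13}}) \le 3$ by exhibiting the obvious orthogonal representation: map each vertex $v \in V(G_{13}) \subseteq \bbC^3$ to itself. By Definition \ref{defn:G_thirteen}, adjacency in $G_{13}$ coincides with orthogonality of the associated vectors, so non-adjacent vertices in $\overline{G_{13}}$ (adjacent in $G_{13}$) get mapped to orthogonal vectors. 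Then by Proposition \ref{prop:xi}\ref{prop:xi_product} applied $m-1$ times, $\xi(\overline{G_{13}}^{\boxtimes m}) \le \xi(\overline{G_{13}})^m \le 3^m$.

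The lower bound is where Lov\'asz's number does the work. The vertices $A=(1,0,0), B=(0,1,0), C=(0,0,1)$ form a triangle in $G_{13}$, so $\omega(G_{13}) \ge 3$, equivalently $\alpha(\overline{G_{13}}) \ge 3$. Combining Lemma \ref{lemma:lovasz}\ref{lemma:lovasz_number_indep_compare} and \ref{lemma:lovasz_number_xi_compare} yields $3 \le \alpha(\overline{G_{13}}) \le \vartheta(\overline{G_{13}}) \le \xi(\overline{G_{13}}) \le 3$, so $\vartheta(\overline{G_{13}}) = 3$. Multiplicativity (Lemma \ref{lemma:lovasz}\ref{lemma:lovasz_number_AND_product}) plus \ref{lemma:lovasz_number_xi_compare} then gives $\xi(\overline{G_{13}}^{\boxtimes m}) \ge \vartheta(\overline{G_{13}}^{\boxtimes m}) = \vartheta(\overline{G_{13}})^m = 3^m$, completing this half.

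For the second equality $\chi_f(G_{13}) = 35/11$, I would work directly with the LP formulation of the fractional chromatic number. The upper bound is established by exhibiting an explicit $35{:}11$-coloring: a collection of independent sets $I_1,\ldots,I_{35}$ of $G_{13}$ (with repetition allowed) such that each vertex lies in exactly $11$ of them; assigning weight $1/11$ to each $I_j$ in the LP $\min \sum_I x_I$ subject to $\sum_{I \ni v} x_I \ge 1$ then certifies $\chi_f(G_{13}) \le 35/11$. The lower bound uses LP duality: produce nonnegative vertex weights $w : V(G_{13}) \to \bbQ_{\ge 0}$ with $\sum_v w(v) = 35/11$ and $\sum_{v \in I} w(v) \le 1$ for every independent set $I$ of $G_{13}$, which requires first enumerating (or at least classifying up to dominance) the maximal independent sets of $G_{13}$.

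The main obstacle is the last paragraph: both directions for $\chi_f(G_{13})$ rest on specific combinatorial features of $G_{13}$, and in particular the lower bound demands either a case analysis over the maximal independent sets of $G_{13}$ or a clever symmetry argument exploiting the vector-orthogonality structure on the $13$ points in $\bbC^3$. Verifying the dual constraints against every maximal independent set, and matching the value exactly to $35/11$, is the delicate computational step; the rest of the argument is essentially bookkeeping using the earlier propositions.
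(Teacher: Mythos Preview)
Your argument for $\xi(\overline{G_{13}^{\lor m}}) = 3^m$ is correct, with one cosmetic point: the vectors in $S$ must be normalized to unit norm before they constitute an orthogonal representation. Your lower bound via the Lov\'asz number is a slightly different route from the paper's: the paper argues directly that the triangle $\{A,B,C\}$ in $G_{13}$ yields a clique $K_{3^m}$ in $G_{13}^{\lor m}$, hence an independent set of size $3^m$ in $\overline{G_{13}^{\lor m}}$, forcing any orthogonal representation to use at least $3^m$ mutually orthogonal vectors. Your detour through $\vartheta$ reaches the same conclusion and has the mild bonus of pinning down $\vartheta(\overline{G_{13}}) = 3$ along the way; the paper's argument is more elementary in that it never invokes the multiplicativity of the Lov\'asz number.

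For $\chi_f(G_{13}) = 35/11$ you propose constructing a full LP primal/dual certificate. That would certainly work, but the paper does not carry out any such computation: it simply cites \cite{GuptaSXCA_23}, where the value is obtained numerically, and then uses it only to conclude $\chi(G_{13}) \ge \lceil 35/11 \rceil = 4$. So the ``main obstacle'' you flag is real if you insist on a self-contained argument, but it is not part of what the paper actually proves.
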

\begin{proof}
     The work of \cite{GuptaSXCA_23} demonstrates that $\chi_f(G_{13}) = \frac{35}{11}$ via a numerical calculation. As $\chi(G_{13}) \geq \lceil \chi_f(G_{13}) \rceil$, this implies that $\chi(G_{13}) \ge 4$. In fact, $\chi(G_{13}) = 4$ as we have a 4-coloring as depicted in Fig. \ref{fig:G13_4_coloring}. Since the labeling in Definition \ref{defn:G_thirteen} provides an orthogonal representation of $\overline{G_{13}}$, this implies that there is a quantum advantage in the single-instance case. 
     
     Next, we compare the asymptotic rates. Since making all vectors in $S$ unit-norm yields an orthogonal representation of $\overline{G_{13}}$ in $\bbC^3$, we have that $\xi(\overline{G_{13}^{\lor m}}) = \xi(\overline{G_{13}}^{\boxtimes m}) \le \xi(\overline{G_{13}})^m \le 3^m$. Since $A,B,C$ are mutually orthogonal, they form a triangle ($K_3$) in $G_{13}$. Therefore, there is a $\overline{K_3^{\lor m}} = \overline{K_{3^m}}$ in $\overline{G_{13}^{\lor m}}$. It is an independent set of $3^m$ vertices. Any orthogonal representation of $\overline{G_{13}^{\lor m}}$ has to assign these vertices mutually orthogonal vectors. Therefore, $\xi(\overline{G_{13}^{\lor m}})\ge3^m$. 
\end{proof}
  We have that 
\begin{align*}
   & \chi(G^{\lor m})^{\frac{1}{m}} \ge \chi_f(G^{\lor m})^{\frac{1}{m}} =  \chi_f(G)  \\&= (35/11)  >3=\xi(\overline{G^{\lor m}})^{\frac{1}{m}}
\end{align*}
where the first equality holds from   (\ref{eqn:chi>=chi_f}), the first equality holds from Proposition \ref{prop:chi_frac} \ref{prop:OR_chi_f}.
Therefore, there is a quantum advantage in computing this function over $m$ instances for $m\ge 1$.

\begin{remark}
    The above example shows that there exists $G$ such that $\inf_m\frac{1}{m}\,\text{log}_2\,\xi(\overline{G^{\lor m}}) < \chi_f(G)$. It implies that the quantum rate is strictly smaller than the fractional chromatic number. Also,  \cite{ManinskaR16} (Section 4.1) showed that $\vartheta(\overline{G_{13}})= 3=\xi(\overline{G_{13}})$ and $\omega(G_{13})=3\ne 4=\chi(G_{13})$. This implies that $G_{13}$ satisfies  Lemma \ref{lemma_single_letter} (a) but does not satisfy Lemma \ref{lemma_single_letter} (b).
\end{remark} 

\subsubsection{Example associated with graph strong product\label{subsubsec:Hn}}
We consider a family of graphs $\{H_n:n=4p^l-1, l\ge 1, p\ge 11\text{ is a prime number}\}$ from \cite{BrietBL_15} (Definition 2.2). $H_n$ is defined as follows. The vertex set $V(H_n)$ consists of vectors in $\{-1,1\}^n$ with an even number of ``$-1$'' entries. The edge set consists of the pairs with inner product  $-1$. In \cite{BrietBL_15}, the authors show that $H_n$ has an exponential quantum advantage between entanglement-assisted and classical source coding. We point out that their result also implies an exponential separation in our setting.
\begin{proposition}[Corollary 5.8 of \cite{BrietBL_15}]\label{prop:lower_bound_H_n}
    \begin{align*}
       \chi(H_n^{\lor m})^{\frac{1}{m}} \ge \chi(H_n^{\boxtimes m})^{\frac{1}{m}} \ge 2^{0.154n - 1}\text{ for }m\ge 1.
    \end{align*}
\end{proposition}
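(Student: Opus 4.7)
The first inequality is essentially immediate: iterating Proposition \ref{prop:chi} \ref{prop:chi_product} gives $\chi(H_n^{\boxtimes m}) \le \chi(H_n^{\lor m})$ for every $m \ge 1$, and taking $m$-th roots yields the claim.

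For the second inequality, my plan is to deploy the standard Lov\'asz sandwich. For every graph $G$ one has the elementary bound $\chi(G) \ge |V(G)|/\alpha(G)$; applied to $G = H_n^{\boxtimes m}$ and combined with Lemma \ref{lemma:lovasz} \ref{lemma:lovasz_number_indep_compare} and \ref{lemma:lovasz_number_AND_product} (which give $\alpha(H_n^{\boxtimes m}) \le \vartheta(H_n^{\boxtimes m}) = \vartheta(H_n)^m$), this yields
\[
\chi(H_n^{\boxtimes m})^{1/m} \;\ge\; \left(\frac{|V(H_n)|^m}{\vartheta(H_n)^m}\right)^{1/m} \;=\; \frac{2^{n-1}}{\vartheta(H_n)},
\]
where $|V(H_n)| = 2^{n-1}$ counts the $\{-1,1\}^n$-vectors with an even number of $-1$ entries. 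The target lower bound $2^{0.154\, n - 1}$ is therefore equivalent to the spectral estimate
\[
\vartheta(H_n) \;\le\; 2^{0.846\, n}.
\]

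That spectral estimate is the genuinely hard part, and my proposal is to invoke it directly from \cite{BrietBL_15} (their Theorem 5.7, of which the displayed claim is Corollary 5.8). Their proof constructs an explicit feasible point for the semidefinite program defining $\vartheta(H_n)$; it exploits the fact that $H_n$ is a Cayley graph on the parity-even sub-cube, so its adjacency matrix is simultaneously diagonalized by Walsh characters, and the arithmetic requirement $n = 4p^{\ell} - 1$ with $p \ge 11$ prime is precisely what allows one to control the extreme eigenvalues sharply enough to push the ratio $|V(H_n)|/\vartheta(H_n)$ above the required exponential threshold. Once that bound is cited, the rest of the argument above is routine bookkeeping. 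The only real obstacle is the SDP construction itself, which is specific to the algebraic structure of $H_n$ and does not follow from any result stated earlier in the present paper.
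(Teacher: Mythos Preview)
Your proposal is correct. The paper itself offers no proof of this proposition; it simply imports it as Corollary~5.8 of \cite{BrietBL_15}. Your first inequality is exactly the one-line consequence of Proposition~\ref{prop:chi}\ref{prop:chi_product} you describe, and your reduction of the second inequality via $\chi(G)\ge |V(G)|/\alpha(G)$ together with Lemma~\ref{lemma:lovasz}\ref{lemma:lovasz_number_indep_compare}--\ref{lemma:lovasz_number_AND_product} to the spectral bound $\vartheta(H_n)\le 2^{0.846n}$ is a clean and accurate summary of where the real content lies. That bound (equivalently $\vartheta(\overline{H_n})\ge 2^{0.154n-1}$, since $H_n$ is vertex-transitive) is indeed what \cite{BrietBL_15} establishes using the Cayley/eigenvalue structure you sketch, so there is nothing further to compare: you have essentially reconstructed the cited argument rather than diverged from it.
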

\begin{proposition}(Lemma 5.2 of \cite{BrietBL_15})\label{prop:xi_of_H_n}
    \begin{align*}
 \xi(\overline{H_n}) \le n+1\text{ for }m\ge 1.
    \end{align*}
\end{proposition}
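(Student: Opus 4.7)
The plan is to exhibit an explicit orthogonal representation of $\overline{H_n}$ of dimension $n+1$; since $\xi(\overline{H_n})$ is the minimum such dimension, this immediately yields the bound. Recall that an orthogonal representation of $\overline{H_n}$ assigns a unit vector to each vertex such that any two vertices non-adjacent in $\overline{H_n}$ (equivalently, adjacent in $H_n$, i.e., pairs $v,w \in \{-1,1\}^n$ with even numbers of $-1$ entries and $\langle v,w\rangle = -1$) are mapped to orthogonal vectors. Note that $n = 4p^l - 1$ is odd, so $\langle v,w \rangle = -1$ is actually attainable (it forces the Hamming distance between $v$ and $w$ to be $(n+1)/2$).

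The key idea is to lift each vertex one dimension by appending a constant coordinate that ``cancels out'' the inner product $-1$. Concretely, I would define
\begin{align*}
\phi : V(H_n) &\to \bbR^{n+1},\\
\phi(v) &= \frac{1}{\sqrt{n+1}}\,(v,\,1).
\end{align*}
Since $\|v\|^2 = n$ for every $v \in \{-1,1\}^n$, we have $\|\phi(v)\|^2 = (n+1)/(n+1) = 1$, so each $\phi(v)$ is indeed a unit vector in $\bbR^{n+1} \subseteq \bbC^{n+1}$.

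For the orthogonality condition, a direct computation gives
\begin{align*}
\langle \phi(v), \phi(w)\rangle = \frac{\langle v,w\rangle + 1}{n+1}.
\end{align*}
Hence $\langle \phi(v), \phi(w)\rangle = 0$ if and only if $\langle v,w\rangle = -1$, which is precisely the adjacency relation of $H_n$. Therefore $\phi$ is a valid orthogonal representation of $\overline{H_n}$ in dimension $n+1$, and so $\xi(\overline{H_n}) \le n+1$.

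There is no real obstacle here: the proof is a one-line construction plus a unit-norm/inner-product check. The only mildly subtle point is recognizing that appending the extra coordinate $1$ shifts every pairwise inner product by $+1$ (after normalization), which is exactly what is needed to turn the ``$-1$'' edges of $H_n$ into orthogonal pairs. The restriction to vectors with an even number of $-1$ entries plays no role in the construction; the same formula works on all of $\{-1,1\}^n$, but we only need it on $V(H_n)$.
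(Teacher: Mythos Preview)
Your construction is correct and is exactly the standard argument: append a constant coordinate to shift the inner product by $+1$, so that the $H_n$-adjacency condition $\langle v,w\rangle=-1$ becomes orthogonality in $\bbR^{n+1}$. The paper does not give its own proof of this proposition but simply cites it as Lemma~5.2 of \cite{BrietBL_15}; your write-up is essentially the proof from that reference.
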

We have $\xi(\overline{H_n^{\boxtimes m}})^{\frac{1}{m}} \le \xi(\overline{H_n^{\lor m}})^{\frac{1}{m}} \le \xi(\overline{H_n})$
by Proposition \ref{prop:xi}  \ref{prop:xi_product}. 
Combining this and Proposition \ref{prop:lower_bound_H_n} and noting that $2^{0.154n-1} > n+1$ for $n=4p^l-1,l\ge 1,p\ge 11$, we have,
\begin{align*}
&\chi(H_n^{\lor m})^{\frac{1}{m}} \ge\chi(H_n^{\boxtimes m})^{\frac{1}{m}} \ge 2^{0.154n - 1} >n+1
\ge \xi(\overline{H_n^{\lor m}})^{\frac{1}{m}} \ge \xi(\overline{H_n^{\boxtimes m}})^{\frac{1}{m}}\text{ for }m\ge 1.
\end{align*}
This gives the desired result. 

We point out that a qualitatively similar result can be obtained by considering the work of \cite{Bar-YossefJK08}. The $f$-confusion graph used in \cite{Bar-YossefJK08}, denoted by $F_n$ where $n=4p$ and $p$ is an odd prime number, is defined as follows (Theorem 6.2 of \cite{Bar-YossefJK08}). The vertex set of the $f$-confusion graph $F_n$ consists of all $n$-bit binary strings. $(x,x')\in E(F_n)$ if and only if their Hamming distance\footnote{Hamming distance between two strings of equal length is the number of positions where they differ.} is $n/2$.   $H_{n-1}$ is the induced subgraph of $F_n$ on the set of binary strings with even Hamming weight\footnote{Hamming weight of a binary string is the number of $1$'s in the string.} and the first entry of every binary string is $0$ (Definition 2.2 of \cite{BrietBL_15}).  Proposition \ref{prop:AND_power_m-instance_graph_OR_power} states that for any $m$-instance $f$-confusion graph $G$, $G^{\boxtimes m}\subseteq G^{(m)}\subseteq G^{\lor m}$ for all $m\ge 1$. Therefore, $H_{n-1}^{\boxtimes m} $ is a subgraph (not necessarily induced) of $F_n^{(m)}$ as $H_{n-1}^{\boxtimes m}$ is a subgraph of $F_n^{\boxtimes m}$.  Proposition \ref{prop:lower_bound_H_n} implies the classical rate is lower bounded by $0.154 (n-1) - 1=\Omega(n)$. In \cite{Bar-YossefJK08}, a $O(\log_2 n)$-bits quantum code is presented for the single-instance case, which implies the quantum rate is at most $O(\log_2 n)$. It can be observed that this quantum advantage continues to hold even when considering $m$-instances.

\subsection{Quantum advantage in single-instance setting, but no quantum advantage in multiple-instance setting.\label{sec:line_graph_eg}}

For constructing this scenario, we require a few auxiliary definitions, which utilize directed graphs. We emphasize that we only consider simple graphs.
\begin{definition}{\it Line graph of a directed graph.} \label{defn:directed_line_graph}
    Let $G = (V,E)$ be a directed graph. Its line graph $\mathfrak{L}_D(G)$ is an undirected graph defined as follows.  The vertex set $V(\mathfrak{L}_D(G))$ is $E$. $(u_1,v_1),(u_2,v_2)\in E$ are adjacent if and only if $(u_1,v_1),(u_2,v_2)$ form a directed walk of length 2, i.e., one of the following is true.
    \begin{align*}
    \big ( u_1 = v_2\text{ and }u_2 = v_1 \big )\text{ or }\\
    \big ( u_1 \ne v_2\text{ and }u_2 = v_1 \big )\text{ or }\\
    \big ( u_1 = v_2\text{ and }u_2 \ne v_1 \big ).
    \end{align*}
    If $G$ is an undirected graph, we view $G$ as a directed graph where each undirected edge is replaced with two directed edges with opposite orientations and $\frL_D(G)$ is then defined as above.
\end{definition}

\begin{proposition}\label{prop:chi_and_xi_of_line_graph}
    Let $G$ be an undirected graph. Then $\chi(\frL_D(G) )\le \chi(G)$ and $\xi(\overline{\frL_D(G)}) \le \xi(\overline{G})$. 
\end{proposition}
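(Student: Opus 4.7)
The plan is to prove both inequalities simultaneously by exhibiting a single construction: each directed edge is labeled by a quantity (color or vector) associated with its \emph{head}. Intuitively, if two directed edges form a directed walk of length two, say the head of $(u_1,v_1)$ equals the tail of $(u_2,v_2)$, then the ``head-label'' of the first edge coincides with the ``tail-label'' of the second, which is forced to differ from the head-label of $(u_2,v_2)$ by the coloring/orthogonal-representation property on $G$.

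More precisely, for the chromatic inequality I would start with an optimal proper coloring $c:V(G)\to [\chi(G)]$ of $G$ and define $c':V(\frL_D(G))\to [\chi(G)]$ by $c'((u,v)) := c(v)$. To verify propriety, take any two distinct edges $(u_1,v_1),(u_2,v_2)$ adjacent in $\frL_D(G)$. By Definition \ref{defn:directed_line_graph}, one of $v_1=u_2$ or $v_2=u_1$ holds. In the first case, $c'((u_1,v_1))=c(v_1)=c(u_2)$ and $c'((u_2,v_2))=c(v_2)$; since $(u_2,v_2)$ is a directed edge of $G$, the endpoints $u_2,v_2$ are adjacent in the undirected $G$, so $c(u_2)\neq c(v_2)$. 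The other case is symmetric. Hence $c'$ is a proper coloring of $\frL_D(G)$, giving $\chi(\frL_D(G))\le \chi(G)$.

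For the orthogonal rank inequality I would take an orthogonal representation $\phi:V(G)\to \mathbb{C}^d$ of $\overline{G}$ with $d=\xi(\overline{G})$, so that $\phi(u)\perp\phi(v)$ whenever $(u,v)\in E(G)$, and define $\psi((u,v)) := \phi(v)$. Clearly $\psi$ is a map into unit vectors in $\mathbb{C}^d$. For distinct $(u_1,v_1),(u_2,v_2)$ adjacent in $\frL_D(G)$, the same case analysis applies: if $v_1=u_2$ then $\psi((u_1,v_1))=\phi(u_2)$ and $\psi((u_2,v_2))=\phi(v_2)$, which are orthogonal because $(u_2,v_2)\in E(G)$; the other case is symmetric. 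Thus $\psi$ is an orthogonal representation of $\overline{\frL_D(G)}$ in dimension $d$, yielding $\xi(\overline{\frL_D(G)})\le \xi(\overline{G})$.

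There is no real technical obstacle here; the only subtle point worth flagging is that two distinct vertices of $\frL_D(G)$ may receive the same label under this construction, for example two directed edges $(u_1,v)$ and $(u_2,v)$ sharing a head. This is harmless: by Definition \ref{defn:directed_line_graph}, edges sharing only their head (and likewise edges sharing only their tail) are \emph{not} adjacent in $\frL_D(G)$, so no coloring/orthogonality constraint is imposed on the pair. The only remaining case is the two opposite orientations $(u,v)$ and $(v,u)$ of the same undirected edge, which are adjacent in $\frL_D(G)$; here $\psi$ assigns them $\phi(v)$ and $\phi(u)$, which are orthogonal since $(u,v)\in E(G)$, and $c'$ assigns $c(v)\neq c(u)$, so both constructions remain valid.
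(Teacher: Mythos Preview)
Your proof is correct and follows essentially the same approach as the paper. The only cosmetic difference is that the paper labels each directed edge $(u,v)$ by the value at its \emph{tail} $u$ rather than its head $v$; the two choices are mirror images of one another and the verification goes through identically.
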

\begin{proof}
Let $\psi:V(G) \mapsto [\chi(G)]$ be a $\chi(G)$-coloring of $G$. Define
\begin{align*}
    \phi:V(\frL_D(G)) \mapsto [\chi(G)], \phi(u,v) \mapsto \psi(u).
\end{align*}
Suppose $(u_1,v_1),(u_2,v_2)$ are adjacent in $\frL_D(G)$. By symmetry, we assume that $v_1 = u_2$. We have that $\phi(u_1,v_1)=\psi(u_1),\phi(u_2,v_2)=\psi(u_2)=\psi(v_1)$. 
By Definition \ref{defn:directed_line_graph}, $(u_1,v_1)$ are also adjacent in the undirected graph $G$. Since $\psi$ is a $\chi(G)$-coloring of $G$, we have $\psi(u_1)\ne \psi(v_1)$. Therefore, $\phi$ is a proper coloring of $\frL_D(G)$.

The second claim can be proved similarly. Specifically, now let $\psi:V(\overline{G}) \mapsto [\xi(\overline{G})]$ be a $\xi(\overline{G})$-dimensional orthogonal representation of $\overline{G}$. Define
\begin{align*}
    \phi:V(\overline{\frL_D(G)}) \mapsto [\xi(\overline{G})], \phi(u,v) \mapsto \psi(u).
\end{align*}
Suppose $(u_1,v_1),(u_2,v_2)\notin E(\overline{\frL_D(G)})$. It follows that $(u_1,v_1),(u_2,v_2)\in E(\frL_D(G))$. By symmetry, we assume that $v_1 = u_2$. We have that $\phi(u_1,v_1)=\psi(u_1),\phi(u_2,v_2)=\psi(u_2)=\psi(v_1)$. 
By Definition \ref{defn:directed_line_graph}, $(u_1,v_1)$ are also adjacent in the undirected graph $G$. Since $\psi$ is a $\xi(\overline{G})$-dimensional orthogonal representation of $\overline{G}$, we have $\psi(u_1)\perp \psi(v_1)$. Therefore, $\phi$ is a $\xi(\overline{G})$-dimensional orthogonal representation of $\frL_D(G)$.
\end{proof}


Denote $H=\frL_D(G_{13}).$ We will show  that 
\begin{align} 
     &\chi  (H) > \xi(\overline{H}) \text{ and } \inf_m\chi( {H^{\lor m}})^{1/m} = \inf_m \xi(\overline{H^{\lor m}})^{1/m} = \nonumber
     \\& \inf_m\chi( {H^{\boxtimes m}})^{1/m} = \inf_m \xi(\overline{H^{\boxtimes m}})^{1/m} =3. \label{eqn:calL_D(G13)_rates}
\end{align} 
We discuss the consequence of the above equations before we prove them. Pick a function and a valid p.m.f. such that the single-instance $f$-confusion graph is $H$. It has a quantum advantage in the single-instance case as $\chi(H) > \xi(\overline{H})$. Proposition \ref{prop:AND_power_m-instance_graph_OR_power} states that $H^{\boxtimes m}\subseteq H^{(m)}\subseteq H^{\lor m}$ for $m\ge 1$. By Proposition \ref{prop:xi} \ref{prop:subgraph_xi}, we have 
\begin{align*}
   \chi( {H^{\boxtimes m}})^{1/m}  \le \chi( {H^{(m)}})^{1/m} \le  \chi( {H^{\lor m}})^{1/m}, \text{~and}\\
   \xi( \overline{H^{\boxtimes m}})^{1/m}  \le \xi( \overline{H^{(m)}})^{1/m} \le  \xi( \overline{H^{\lor m}})^{1/m}.
\end{align*}
Then,  \eqref{eqn:calL_D(G13)_rates}  implies that there is a quantum advantage in the single-instance setting, but the advantage disappears in the asymptotic setting. We will prove  (\ref{eqn:calL_D(G13)_rates}) in the rest of this subsection.

\begin{claim}\label{claim:claim2}
    \begin{align*}
        \xi(\overline{H^{\boxtimes m}})=\xi(\overline{H^{\lor m}})    =3^m\text{ for }m\ge 1.
    \end{align*}
\end{claim}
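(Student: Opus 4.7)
The plan is to sandwich both $\xi(\overline{H^{\boxtimes m}})$ and $\xi(\overline{H^{\lor m}})$ between $3^m$. Since $H^{\boxtimes m}$ is a spanning subgraph of $H^{\lor m}$, taking complements reverses the inclusion, so Proposition \ref{prop:xi} \ref{prop:subgraph_xi} gives $\xi(\overline{H^{\boxtimes m}}) \le \xi(\overline{H^{\lor m}})$. It therefore suffices to establish the upper bound $\xi(\overline{H^{\lor m}}) \le 3^m$ and the matching lower bound $\xi(\overline{H^{\boxtimes m}}) \ge 3^m$.

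For the upper bound, we first apply Proposition \ref{prop:chi} \ref{prop:de_morgan} to rewrite $\overline{H^{\lor m}} = \overline{H}^{\boxtimes m}$, then use Proposition \ref{prop:xi} \ref{prop:xi_product} to obtain $\xi(\overline{H}^{\boxtimes m}) \le \xi(\overline{H})^m$. Proposition \ref{prop:chi_and_xi_of_line_graph} supplies $\xi(\overline{H}) = \xi(\overline{\frL_D(G_{13})}) \le \xi(\overline{G_{13}})$, and rescaling the vectors in Definition \ref{defn:G_thirteen} to unit norm yields an orthogonal representation of $\overline{G_{13}}$ in $\bbC^3$, so $\xi(\overline{G_{13}}) \le 3$. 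Chaining these bounds gives $\xi(\overline{H^{\lor m}}) \le 3^m$.

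For the lower bound, the strategy is to exhibit a clique of size $3^m$ in $H^{\boxtimes m}$; its complement is then an independent set of size $3^m$ in $\overline{H^{\boxtimes m}}$, whose vertices must receive pairwise orthogonal vectors under any orthogonal representation, forcing $\xi(\overline{H^{\boxtimes m}}) \ge 3^m$. Since the clique number is multiplicative under the strong product, it suffices to produce a triangle in $H = \frL_D(G_{13})$. The vectors $A,B,C$ from Definition \ref{defn:G_thirteen} are mutually orthogonal in $\bbC^3$ and so form a triangle in $G_{13}$. Viewing $G_{13}$ as a directed graph (each undirected edge replaced by two oppositely oriented arcs), we claim the three arcs $(A,B)$, $(B,C)$, $(C,A)$ form a triangle in $H$. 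This is verified by unpacking Definition \ref{defn:directed_line_graph}: each pair shares a middle vertex but is not a mutual reverse, e.g., $(A,B)$ and $(B,C)$ satisfy $u_2 = B = v_1$ and $u_1 = A \ne C = v_2$; the other two pairs are analogous.

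The argument is essentially routine given the previously established propositions; the only step that requires direct verification is the existence of the triangle in $\frL_D(G_{13})$, and once that is in hand, the rest is bookkeeping chaining the product/complement identities with the line-graph inequality.
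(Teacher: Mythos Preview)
Your proof is correct and follows essentially the same route as the paper: the same sandwich via $H^{\boxtimes m}\subseteq H^{\lor m}$, the same chain $\xi(\overline{H^{\lor m}})=\xi(\overline{H}^{\boxtimes m})\le \xi(\overline{H})^m\le \xi(\overline{G_{13}})^m\le 3^m$ for the upper bound, and the same triangle $(A,B),(B,C),(C,A)$ in $\frL_D(G_{13})$ yielding a $K_{3^m}$ in $H^{\boxtimes m}$ for the lower bound. If anything, your verification that these three arcs satisfy Definition~\ref{defn:directed_line_graph} is more explicit than the paper's.
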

\begin{proof}
Let $m\ge 1$. Since $\xi(\overline{G_{13}}) = 3$ and $H = \frL_D(G_{13})$, Proposition \ref{prop:chi_and_xi_of_line_graph} gives $\xi(\overline{H}) \le 3$. Therefore,
    \begin{align*}
    \xi(\overline{H^{\boxtimes m}})\le \xi(\overline{H^{\lor m}})=\xi(\overline{H}^{\boxtimes m}) \le \xi(\overline{H})^{m}     \le 3^m
    \end{align*}
where the first inequality holds by Proposition \ref{prop:xi}  \ref{prop:subgraph_xi}, the first equality holds by Proposition \ref{prop:chi}  \ref{prop:de_morgan}, the second inequality holds by Proposition \ref{prop:xi} \ref{prop:xi_product}. 

Next we show that $\xi(\overline{H^{\boxtimes m}})\ge 3^m$ for $m\ge 1$. Since $A,B,C$ form a triangle in $G_{13}$, $(A,B)(B,C)(C,A)$ form a triangle in $\frL_D(G_{13})$.  Therefore, there is a $\overline{K_3^{\boxtimes m}} = \overline{K_{3^m}}$ in $H^{\boxtimes m}$. It is an independent set of $3^m$ vertices. Any orthogonal representation of $\overline{H^{\boxtimes m}}$ has to assign these vertices mutually orthogonal vectors. Therefore, $\xi(\overline{H^{\boxtimes m}})\ge3^m$. 
\end{proof}

\begin{claim}
    \begin{align*}
     \inf_m\chi( {H^{\boxtimes m}})^{1/m} =   \inf_m\chi( {H^{\lor m}})^{1/m} = \chi_f(H) = 3
    \end{align*}
\end{claim}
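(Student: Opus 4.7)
The strategy is to reduce the upper bound $\chi_f(H) \leq 3$ to a computation on the smaller graph $\frL_D(K_4)$ by pushing the proper $4$-coloring of $G_{13}$ through the directed-line-graph construction, after which the two infimums follow by standard sandwich arguments. I begin with $\omega(H) = 3$: the directed triangle $(A,B),(B,C),(C,A)$ is a $3$-clique in $H$, while any $4$-clique would force, for a chosen edge $(u_0,v_0)$ in the clique, every other edge to have head $u_0$ or tail $v_0$; but edges sharing a common head (respectively, a common tail) are non-adjacent in $\frL_D(G)$ by Definition~\ref{defn:directed_line_graph}, so the clique has size at most $3$. This gives the lower bound $\chi_f(H) \geq \omega(H) = 3$.

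For the matching upper bound, let $\psi\colon V(G_{13}) \to [4]$ be the proper $4$-coloring of Fig.~\ref{fig:G13_4_coloring} and define $\pi\colon V(H) \to V(\frL_D(K_4))$ by $\pi(u,v) = (\psi(u),\psi(v))$. Since $\psi$ is proper, $\psi(u) \neq \psi(v)$ and $\pi$ is well-defined; whenever $(u_1,v_1) \sim (u_2,v_2)$ in $H$ with (w.l.o.g.) $v_1 = u_2$, the images $(\psi(u_1),\psi(v_1))$ and $(\psi(v_1),\psi(v_2))$ share the middle coordinate and are distinct (because $\psi(u_1) \neq \psi(v_1)$), so are adjacent in $\frL_D(K_4)$; thus $\pi$ is a graph homomorphism. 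Pulling back a fractional coloring of $\frL_D(K_4)$ along $\pi$ yields one of $H$ of the same value, so $\chi_f(H) \leq \chi_f(\frL_D(K_4))$. It remains to show $\chi_f(\frL_D(K_4)) = 3$. The lower bound follows from $\omega(\frL_D(K_4)) = 3$. For the upper bound I take the three bipartitions of $\{1,2,3,4\}$ into equal pairs, $\{12\,|\,34\},\{13\,|\,24\},\{14\,|\,23\}$, and form the six ordered independent sets $I_{(A,B)} = \{(i,j) : i \in A,\, j \in B\}$ of $\frL_D(K_4)$, one per ordered bipartition; each gets weight $1/2$ for a total weight of $3$, and since each directed edge $(i,j)$ of $K_4$ is separated by precisely two of the three bipartitions, it lies in exactly two of the six independent sets (those with $i$ on the left side) and is thus covered with weight exactly $1$.

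With $\chi_f(H) = 3$ established, Proposition~\ref{prop:chi_frac}\ref{prop:OR_Witsenhausen_rate} yields $\inf_m \chi(H^{\lor m})^{1/m} = \chi_f(H) = 3$ at once. The strong-product infimum is sandwiched between $\omega(H)^m = 3^m \leq \chi(H^{\boxtimes m})$ (clique lower bound) and $\chi(H^{\boxtimes m}) \leq \chi(H^{\lor m})$ (from $H^{\boxtimes m} \subseteq H^{\lor m}$ and Proposition~\ref{prop:chi}\ref{prop:chi_product}), so it also equals $3$. The most delicate step of the plan is verifying that $\pi$ is a well-defined graph homomorphism, since both well-definedness and edge-preservation hinge on $\psi$ being a proper coloring combined with the specific ``head-equals-tail'' adjacency rule of $\frL_D$; once this is in place, everything else reduces to the small and symmetric graph $\frL_D(K_4)$, for which the bipartition-based fractional coloring is essentially forced by the structure of $K_4$.
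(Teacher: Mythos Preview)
Your proof is correct and follows the same sandwich skeleton as the paper (clique lower bound on $\chi(H^{\boxtimes m})$, Proposition~\ref{prop:chi_frac}\ref{prop:OR_Witsenhausen_rate} for the OR-product infimum, and the trivial inequality $\chi(H^{\boxtimes m})\le\chi(H^{\lor m})$), but diverges at the crucial step of determining $\chi_f(H)$. The paper simply reports $\chi_f(H)=3$ as the output of a numerical computation (SageMath, together with an explicit LP-feasible point provided in a code repository); you instead supply a self-contained combinatorial argument. By pushing the proper $4$-coloring $\psi$ of $G_{13}$ through the $\frL_D$ construction you obtain a graph homomorphism $H\to\frL_D(K_4)$, reducing the problem to a $12$-vertex graph whose fractional chromatic number you compute by hand via the six ordered bipartitions of $[4]$. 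This is a genuine improvement over the paper's treatment: it is human-verifiable, it explains \emph{why} $\chi_f(H)=3$ (the $4$-colorability of $G_{13}$ is exactly what is needed), and it generalizes immediately---your argument in fact shows $\chi_f(\frL_D(G))\le 3$ for \emph{any} $G$ with $\chi(G)\le 4$, with equality whenever $G$ contains a triangle. Your $\omega(H)\le 3$ argument (via pigeonhole on the head/tail conditions) is also a small addition beyond what the paper states explicitly, though the paper's observation that $H^{\boxtimes m}$ contains $K_{3^m}$ amounts to the same lower bound.
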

\begin{proof}
In the proof of Claim \ref{claim:claim2}, we showed $H^{\boxtimes m}$ contains a subgraph $K_{3^m}$. Therefore, $\omega(H^{\boxtimes m}) \ge 3^m$ and then  $\inf_m\chi( {H^{\boxtimes m}})^{1/m} \ge 3$. 
$\inf_m\chi( {H^{\boxtimes m}})^{1/m} \le   \inf_m\chi( {H^{\lor m}})^{1/m}$ because $\chi( H^{\boxtimes m})\le \chi(H^{\lor m})$ for all $m\ge 1$.
$\inf_m\chi( {H^{\lor m}})^{1/m} = \chi_f(H)$ holds by Proposition \ref{prop:chi_frac} \ref{prop:OR_Witsenhausen_rate}. $\chi_f(H) = 3$ holds by the computation discussed in Remark \ref{remark:chi_f_line_graph_computation} below. Thus, we have equality.   
\end{proof}
\begin{remark}\label{remark:chi_f_line_graph_computation}
We computed $\chi_f(H)$ in two different ways.  In the first way, we used the built-in method from SageMath\cite{sagemath} to calculate the value of $\chi_f(H)$, which yields $\chi_f(H)=3$.  In the second way, we used the LP in Definition \ref{defi:Fraction_chromatic_number} in Appendix \ref{app:chi_f_LP}. We found a point $(X_I)_{I\in \calI}$ that is feasible in the LP with objective function value $3$.   This implies that $\chi_f(H) \leq 3$. On the other hand, the above argument shows that $\chi(\overline{H^{\lor m}}) \ge 3^m$, which implies $\chi_f(H) \ge 3$. Python code recreating our results is available at \cite{meng2024github}.
\end{remark}


\begin{claim} \label{claim:chi(L(G13)=4} For $H = \frL_D(G_{13})$, we have $\chi(H) =4$.
\end{claim}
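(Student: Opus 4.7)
The plan is to prove $\chi(H) = 4$ by establishing matching upper and lower bounds. The upper bound $\chi(H) \le 4$ is immediate: applying Proposition~\ref{prop:chi_and_xi_of_line_graph} to $G = G_{13}$ and invoking the $4$-coloring shown in Fig.~\ref{fig:G13_4_coloring} gives $\chi(H) \le \chi(G_{13}) = 4$. The substance lies in the lower bound $\chi(H) \ge 4$, which I would establish by contradiction: assume a proper $3$-coloring $\phi: V(H) \to \{1,2,3\}$ exists, and extract from it a graph homomorphism from $G_{13}$ into a $3$-chromatic auxiliary graph $P$. Composing with a $3$-coloring of $P$ would then produce a $3$-coloring of $G_{13}$, contradicting $\chi(G_{13}) = 4$.

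To build $P$ and the homomorphism, for each $v \in V(G_{13})$ I would define the \emph{in-color set} $I(v) = \{\phi(u,v) : \{u,v\} \in E(G_{13})\}$ and the \emph{out-color set} $O(v) = \{\phi(v,w) : \{v,w\} \in E(G_{13})\}$. The key observation is that every directed in-edge at $v$ is adjacent in $H$ to every directed out-edge at $v$ (they concatenate into length-$2$ walks through $v$), so properness of $\phi$ forces $I(v) \cap O(v) = \emptyset$. Since every vertex of $G_{13}$ has positive degree, both sets are non-empty, so $f(v) := I(v)$ is a non-empty proper subset of $\{1,2,3\}$. Then for each edge $\{u,v\} \in E(G_{13})$, the color $\phi(u,v)$ lies in $O(u) \cap I(v) \subseteq (\{1,2,3\} \setminus f(u)) \cap f(v)$, forcing $f(v) \not\subseteq f(u)$; by symmetry $f(u) \not\subseteq f(v)$. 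Hence $f(u)$ and $f(v)$ are incomparable under set inclusion.

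Thus $f$ is a graph homomorphism from $G_{13}$ into the graph $P$ on the six non-empty proper subsets of $\{1,2,3\}$ whose edges are incomparable pairs. A direct check identifies $P$ with the triangular prism: the three singletons $\{1\},\{2\},\{3\}$ form one triangle, the three pairs $\{1,2\},\{1,3\},\{2,3\}$ form another, and the three complementary pairs such as $\{1\} \leftrightarrow \{2,3\}$ form a perfect matching between them. This prism is $3$-chromatic, so composing $f$ with any $3$-coloring of $P$ yields a proper $3$-coloring of $G_{13}$, contradicting $\chi(G_{13}) = 4$. Therefore $\chi(H) \ge 4$.

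The main conceptual obstacle is identifying the target prism $P$; once the disjointness condition $I(v) \cap O(v) = \emptyset$ is written down, the structure emerges from the requirement that each edge of $G_{13}$ distribute its two directed color assignments into complementary sides of a partition. The proof relies only on the established value $\chi(G_{13}) = 4$ and avoids any exhaustive computational enumeration.
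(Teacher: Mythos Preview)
Your argument is correct and takes a genuinely different route from the paper. Both proofs obtain the upper bound $\chi(H)\le 4$ identically from Proposition~\ref{prop:chi_and_xi_of_line_graph} and $\chi(G_{13})=4$. For the lower bound, however, the paper carries out a direct structural analysis of a hypothetical $3$-edge-coloring of $G_{13}$: it first pins down the coloring on the triangle $\{A,B,C\}$ up to symmetry, propagates this to the three outer triangles $\{A,M,L\},\{B,N,P\},\{C,Q,R\}$, and then invokes the Kochen--Specker-type common-neighbor property (Proposition~\ref{prop:G13_strcture}) to reach a vertex in $\{X,Y,Z,W\}$ whose incoming edges already exhaust all three colors. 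Your argument instead proves the general implication ``$\chi(\frL_D(G))\le 3 \Rightarrow \chi(G)\le 3$'' for any $G$ without isolated vertices, by packaging the in/out color sets $I(v),O(v)$ into a homomorphism $f(v)=I(v)$ from $G$ to the triangular prism on the nonempty proper subsets of $\{1,2,3\}$, which is $3$-chromatic. Applied to $G=G_{13}$ this contradicts $\chi(G_{13})=4$.

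Your approach is shorter, more conceptual, and works uniformly for any graph with $\chi\ge 4$; it treats $\chi(G_{13})=4$ as a black box. The paper's approach is longer and specific to $G_{13}$, but is self-contained in the sense that the contradiction is derived from the explicit structure of $G_{13}$ rather than by appeal to the cited value of $\chi(G_{13})$. Either proof is fully valid for the claim.
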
 

Since $\chi( {G_{13}}) = 4$ \cite{ManinskaR16} and $H = \frL_D(G_{13})$, Proposition \ref{prop:chi_and_xi_of_line_graph} gives $\chi( {H}) \le 4$. To show $\chi(H) >3$, we need the following tools.
\begin{definition}{\it Directed walk.} 
Let $D$ be a directed graph. A $k$-walk of $D$ is a collection of edges $e_1,\dots, e_k$ such that  the head of $e_{i-1}$ is the tail of $e_i$ for $i=2,\dots,k$. 
\end{definition}
\begin{definition}{\it  Edge chromatic number of directed graphs.}  
Let $D$ be a directed graph. A (proper) $k$ edge coloring of $D$ is  a mapping $c: E(D)\mapsto [k]$ such that  there is no $2$-walk whose two edges have the same color. The smallest $k$ such that  there is a $k$-edge-coloring of $D$ is the edge chromatic number $\chi_e(D)$.
\end{definition}

\begin{proposition}
Let $D$ be a directed graph. Then,
    $$\chi_e(D) = \chi \big (\frL_D(D)\big).
    $$
\end{proposition}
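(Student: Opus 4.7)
The proof is essentially a translation of definitions, since both quantities count the minimum number of colors in a map $c : E(D) \to [k]$ subject to the same constraint.

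First I would observe that $V(\mathfrak{L}_D(D)) = E(D)$ by Definition \ref{defn:directed_line_graph}, so every $k$-coloring of $\mathfrak{L}_D(D)$ is literally the same object (a function $E(D) \to [k]$) as a $k$-edge-coloring of $D$. The question is whether the propriety conditions coincide.

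Next I would unfold both propriety conditions side by side. A proper vertex coloring of $\mathfrak{L}_D(D)$ demands $c(e_1) \neq c(e_2)$ whenever $e_1, e_2$ are adjacent vertices of $\mathfrak{L}_D(D)$, which by Definition \ref{defn:directed_line_graph} happens precisely when $e_1 = (u_1,v_1)$ and $e_2 = (u_2,v_2)$ with $v_1 = u_2$ or $v_2 = u_1$ (i.e., they form a directed $2$-walk in one of the two orderings). A proper $k$-edge-coloring of $D$ demands that no directed $2$-walk $e_1, e_2$ has $c(e_1) = c(e_2)$. These two conditions are manifestly equivalent (modulo relabeling $e_1 \leftrightarrow e_2$, which is harmless because the condition $c(e_1) = c(e_2)$ is symmetric).

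Therefore the sets of feasible colorings on $[k]$ coincide for each $k$, and taking the minimum $k$ for which either set is nonempty gives $\chi_e(D) = \chi(\mathfrak{L}_D(D))$.

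There is no real obstacle here; the only mild subtlety is that the $\mathfrak{L}_D$ adjacency is defined symmetrically (undirected line graph) while edge-coloring asks about ordered $2$-walks, so one must check that a $2$-walk in either direction gives the same constraint $c(e_1) \neq c(e_2)$, which is immediate.
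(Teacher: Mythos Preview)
Your proof is correct and follows essentially the same approach as the paper: both argue that a map $c:E(D)\to[k]$ is a proper edge coloring of $D$ (no monochromatic $2$-walk) if and only if it is a proper vertex coloring of $\mathfrak{L}_D(D)$, since adjacency in $\mathfrak{L}_D(D)$ is precisely the $2$-walk relation. Your extra remark about the symmetry of the constraint is a fine clarification but not needed beyond what the paper's three-line equivalence already states.
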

\begin{proof}
The result follows from the following equivalences.
$c$ is an edge coloring of $D$ \newline $\Leftrightarrow$ 
 for every 2-walk $(a,b),(b,c)$, we have $c(a,b)\ne c(b,c)$\newline $\Leftrightarrow$ $c$ is a vertex coloring of $\frL_D(D)$.
\end{proof}
 
Viewing each edge of $G_{13}$ as two directed edges with opposite directions, we obtain the directed graph in Fig. \ref{fig:G13}.  We will use the following proposition from \cite{ManinskaR16}; a proof is given for completeness.
\begin{proposition}[Pg. 8 of \cite{ManinskaR16}]\label{prop:G13_strcture}
    Let the graph be $G_{13}$ and $\{M,L,N,P,Q,R,X,Y,Z,W\}$ be vertices in $G_{13}$ (see Fig. \ref{fig:G13}). Let  $u\in\{M,L\},v\in \{N,P\}, w\in \{Q,R\}$ be arbitrary and $\overline{u},\overline{v},\overline{w}$ be the other vertex in $ \{M,L\},\{N,P\} ,\{Q,R\}$ respectively. 
    Then, exactly one of the following will hold.
    \begin{enumerate}
        \item $u,v,w$ have a common neighbor in $\{X,Y,Z,W\}$.
        \item $\overline{u},\overline{v},\overline{w}$ have a common neighbor in $\{X,Y,Z,W\}$.
    \end{enumerate}  
\end{proposition}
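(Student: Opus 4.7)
The plan is to prove the proposition by direct case enumeration: the three pairs $\{L,M\}, \{N,P\}, \{Q,R\}$ yield only $2^3=8$ triples $(u,v,w)$, and only four candidates, namely $X,Y,Z,W$, can serve as a common neighbor, so the whole statement reduces to bookkeeping $24$ inner products.

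Using the defining relation that two vertices of $G_{13}$ are adjacent iff their coordinate inner product vanishes, I would first compute, for each of $W=(1,1,1)$, $X=(1,-1,1)$, $Y=(1,1,-1)$, $Z=(-1,1,1)$, its neighborhood within the six-vertex set $\{L,M,N,P,Q,R\}$. A direct calculation yields
\[
N(W)\cap\{L,\dots,R\}=\{M,P,R\},\quad N(X)\cap\{L,\dots,R\}=\{L,P,Q\},
\]
\[
N(Y)\cap\{L,\dots,R\}=\{L,N,R\},\quad N(Z)\cap\{L,\dots,R\}=\{M,N,Q\}.
\]
The key structural observation is that each of $W,X,Y,Z$ has \emph{exactly three} neighbors in this set, and those three neighbors form a transversal of the pairs $\{L,M\},\{N,P\},\{Q,R\}$ (one element chosen from each pair).

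Next, I would observe that the four transversals above are distinct, and their pair-wise complements (replacing each selected element by its pair-mate) are $\{L,N,Q\}, \{M,N,R\}, \{M,P,Q\}, \{L,P,R\}$, which are precisely the remaining four of the eight transversals. Consequently, the $8$ triples split into $4$ complementary pairs, and each pair contains exactly one triple that coincides with $N(s)\cap\{L,\dots,R\}$ for some $s\in\{X,Y,Z,W\}$, and exactly one that does not. Since alternative~1 for $(u,v,w)$ is precisely the statement that $\{u,v,w\}=N(s)\cap\{L,\dots,R\}$ for some $s\in\{X,Y,Z,W\}$, it follows that for every choice of $(u,v,w)$, exactly one of alternatives~1 and~2 holds.

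The only real ``obstacle'' is the bookkeeping in verifying that each of $X,Y,Z,W$ has no further neighbors among $\{L,M,N,P,Q,R\}$ beyond the three listed; this is immediate but tedious from enumerating the $24$ inner products, and will be tabulated explicitly in the proof. No deeper structural argument is needed because the symmetry group of the configuration $(L,M),(N,P),(Q,R)$ together with the $\pm 1$-symmetry acting on $X,Y,Z,W$ already forces the $4+4$ split into complementary transversals, which is the entire content of the claim.
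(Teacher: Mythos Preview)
Your proposal is correct and is essentially the same case-enumeration argument as the paper's proof, just organized dually: the paper fixes each of the four (up to complementation) triples $(u,v,w)$ and computes their common neighbor set in $\{X,Y,Z,W\}$, whereas you fix each of $X,Y,Z,W$ and compute its neighborhood in $\{L,M,N,P,Q,R\}$. Both amount to tabulating the same $24$ incidences and observing that the four resulting transversals together with their four pair-wise complements exhaust all $8$ choices; neither proof uses any idea beyond direct verification.
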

\begin{proof}
For $u,v,w$ described above, denote $N(u,v,w)$ and $N(\overline{u},\overline{v},\overline{w})$ as the common neighbor sets of $u,v,w$ and $\overline{u},\overline{v},\overline{w}$  in $\{X,Y,Z,W\}$ respectively. There are eight cases, but we only need to check four cases as the role of $u,v,w$ and $\overline{u},\overline{v},\overline{w}$ are symmetric.
These are checked in Table \ref{table:check_G13}.  
\begin{table}[t]
\centering
\begin{tabular}{|ccc|ccc|c|c|}
\hline
$u$ & $v$ & $w$ & $\overline{u}$ & $\overline{v}$ & $\overline{w}$ & $N(u,v,w)$ & $N(\overline{u},\overline{v},\overline{w})$ \\ \hline
$M$ & $N$ & $Q$ & $L$ & $P$ & $R$ & $\{Z\}$ & $\emptyset$ \\ \hline
$M$ & $N$ & $R$ & $L$ & $P$ & $Q$ & $\emptyset$ & $\{X\}$ \\ \hline
$M$ & $P$ & $Q$ & $L$ & $N$ & $R$ & $\emptyset$ & $\{Y\}$ \\ \hline
$M$ & $P$ & $R$ & $L$ & $N$ & $Q$ & $\{W\}$ & $\emptyset$ \\ 
\hline
\end{tabular}
\vspace{3mm}
\caption{Each possible case in Proposition \ref{prop:G13_strcture}.}
\label{table:check_G13}
\end{table}
\end{proof}

\begin{figure}[t]
    \centering
     
\begin{tikzpicture} 
[
  node distance=0.5cm,
  every node/.style={circle, draw, minimum size=4mm, inner sep=0pt, font = \normalsize},
  every edge/.style={draw, postaction={decorate}, decoration={markings, mark=at position 0.8 with {\arrow{Stealth}}}}
] 

\node (W) at (\xW,\yW) {W};
\node (X) at (\xX,\yX) {X};
\node (Y) at (\xY,\yY) {Y};
\node (Z) at (\xZ,\yZ) {Z};

\node (M) at (\xM,\yM) {M};
\node (L) at (\xL,\yL) {L};
\node (N) at (\xN,\yN) {N};
\node (P) at (\xP,\yP) {P};
\node (Q) at (\xQ,\yQ) {Q};
\node (R) at (\xR,\yR) {R};

\node (A) at (\xA,\yA) {A};
\node (B) at (\xB,\yB) {B};
\node (C) at (\xC,\yC) {C}; 


\draw (A) edge  (B);
\draw (B) edge  (A);
\draw (A) edge  (C);
\draw (C) edge  (A);
\draw (B) edge  (C);
\draw (C) edge  (B);

\draw (L) edge  (M);
\draw (M) edge  (L);
\draw (A) edge  (M);
\draw (M) edge  (A);
\draw (A) edge  (L);
\draw (L) edge  (A);

\draw (N) edge  (P);
\draw (P) edge  (N);
\draw (B) edge  (N);
\draw (N) edge  (B);
\draw (B) edge  (P);
\draw (P) edge  (B);

\draw (Q) edge  (R);
\draw (R) edge  (Q);
\draw (Q) edge  (C);
\draw (C) edge  (Q);
\draw (R) edge  (C);
\draw (C) edge  (R);

\draw (X) edge  (P);
\draw (P) edge  (X);
\draw (Y) edge  (R);
\draw (R) edge  (Y);
\draw (N) edge  (Z);
\draw (Z) edge  (N);
\draw (M) edge  (Z);
\draw (Z) edge  (M);
\draw (L) edge  (Y);
\draw (Y) edge  (L);
\draw (Q) edge[bend left = 10]  (Z);
\draw (Z) edge[bend right = 10]  (Q); 
\draw (W) edge  (M); 
\draw (M) edge  (W);
\draw (X) edge[bend right = 10]  (L);
\draw (L) edge[bend left = 10]  (X);
\draw (R) edge  (W);
\draw (W) edge  (R);
\draw (X) edge  (Q);
\draw (Q) edge  (X);
\draw (P) edge  (W);
\draw (W) edge  (P);
\draw (N) edge[bend left = 10]  (Y);
\draw (Y) edge[bend right = 10]  (N); 
\end{tikzpicture} 
    \caption{$G_{13}$ with each undirected edge viewed as two opposite directed edges.}
    \label{fig:G13}
\end{figure}
\begin{proposition}
    $$
 \chi(\frL_D(G_{13}))= \chi_e(G_{13}) >3
    $$
\end{proposition}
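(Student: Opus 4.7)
The equality $\chi(\frL_D(G_{13})) = \chi_e(G_{13})$ is the content of the preceding proposition, so it suffices to show $\chi_e(G_{13}) > 3$. The plan is to assume, for contradiction, that a proper 3-edge-coloring $c\colon E(D)\to\{1,2,3\}$ of $G_{13}$ (viewed as the bidirected graph $D$) exists, and then propagate the coloring outward from the central triangle $ABC$ until every possibility is eliminated. The first step is to reformulate the coloring condition: since no directed 2-walk can be monochromatic, at each vertex $v$ the set $I_v$ of colors entering $v$ is disjoint from the set $O_v$ of colors leaving $v$, and both are nonempty, so $|I_v|,|O_v|\in\{1,2\}$ with $|I_v|+|O_v|\le 3$. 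As a warm-up, if $|I_v|=1$ at every $v$ then $v\mapsto I_v$ is a proper 3-coloring of $G_{13}$, contradicting $\chi(G_{13})=4$; the symmetric statement about $O_v$ shows we are forced into a mixed regime.

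Next I would fix the coloring on the inner triangle $ABC$. A short counting argument---each undirected edge receives two distinct colors on its two orientations, and each color class, being bipartite on $\{A,B,C\}$, covers at most 2 of the 3 edges of $ABC$---forces each color to cover exactly 2 of those edges, so each of $A,B,C$ is the ``center'' (shared endpoint of the two same-colored edges) for exactly one color. A quick check shows that mixed source/sink configurations on $ABC$ immediately contradict the disjointness of $I_v$ and $O_v$, so, after relabeling colors and possibly reversing all arrows globally, we may assume $A,B,C$ are sources for colors $1,2,3$ respectively. Disjointness then forces $O_A=\{1\}$, $I_A=\{2,3\}$ and the analogues at $B,C$, which in particular gives $c(A,L)=c(A,M)=1$, $c(B,N)=c(B,P)=2$, and $c(C,Q)=c(C,R)=3$. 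Applying the same triangle analysis to $ALM$, $BNP$, and $CQR$ then shows that each of the six outer vertices $L,M,N,P,Q,R$ is a source whose outgoing color is determined: writing $c_L\in\{2,3\}$ with $c_M=\{2,3\}\setminus\{c_L\}$, and similarly $c_N,c_P\in\{1,3\}$ and $c_Q,c_R\in\{1,2\}$, every outgoing edge at $L$ has color $c_L$, every outgoing edge at $M$ has color $c_M$, and so on. This leaves $2^3=8$ possible configurations.

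The concluding step is to rule out all 8 configurations using the degree-3 vertices $X,Y,Z,W$, whose neighborhoods are $\{L,P,Q\}$, $\{L,N,R\}$, $\{M,N,Q\}$, $\{M,P,R\}$ respectively (these are the four ``common-neighbor triples'' singled out by Proposition \ref{prop:G13_strcture}). At $X$ the three incoming colors are $(c_L,c_P,c_Q)\in\{2,3\}\times\{1,3\}\times\{1,2\}$, and $|I_X|\le 2$ forces these to have a repeated value; a direct inspection shows the three are all distinct in exactly two of the eight triples, namely $(2,3,1)$ and $(3,1,2)$. The analogous constraint at each of $Y,Z,W$ rules out two configurations each. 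The key combinatorial fact---and the main obstacle of the proof---is to verify that these four pairs of forbidden configurations are pairwise disjoint and together exhaust all 8 choices; this ``complementarity'' of the four neighborhoods singled out by Proposition \ref{prop:G13_strcture} is exactly what makes the count work and yields the required contradiction, completing the proof that $\chi_e(G_{13})\ge 4$.
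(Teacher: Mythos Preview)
Your proof is correct and follows essentially the same strategy as the paper's: reduce (up to relabeling and global reversal) to the all-source configuration on the central triangle $ABC$, propagate to force each of $L,M,N,P,Q,R$ to be a source with a constrained out-color, and then obtain a contradiction at one of $X,Y,Z,W$. The only presentational difference is in the final step: the paper selects the triple $u,v,w$ with out-colors $2,3,1$ and invokes Proposition~\ref{prop:G13_strcture} once to locate a common neighbor receiving all three colors, whereas you carry out the equivalent case-check over the eight configurations and verify that the four vertices $X,Y,Z,W$ eliminate two configurations each.
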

\begin{proof}
We prove this by contradiction.  Suppose $G_{13}$ has a 3-edge-coloring with colors in $\{1,2,3\}$. Let $S=\{A,B,C\}$ and consider the induced subgraph $G_{13}[S]$ (we refer to Fig. \ref{fig:coloring_ABC}).  $A$ has two incoming edges and two outgoing edges. Since there are three colors, there are two edges having the same color among these four edges. The two edges with the same color cannot be exactly one incoming edge and one outgoing edge because that forms a 2-walk whose two edges have the same color. Therefore, these  two edges must be either both incoming edges or both outgoing edges. Since reversing the direction of each edge of $G_{13}$ gives an isomorphism of $G_{13}$, we may assume w.l.o.g. that these two edges are both outgoing edges.   

By symmetry of $1,2,3$, we may assume $(A,B),(A,C)$ have color $1$ and $(B,A)$ has color $2$ and $(C,A)$ has color $3$. Since $(A,B)$ has color 1 and $(C,A)$ has color 3, $(B,C)$ must have color 2. For a similar reason, $(C,B)$ must have color 2. Then, the induced subgraph $G_{13}[S]$ are colored as Fig. \ref{fig:coloring_ABC}.  $B$ has incoming edges of colors $1$ and $3$. It implies that all outgoing edges of $B$ must have color $2$. Similarly,
all outgoing edges of $A$ and $C$ must have color $1$ and $3$ respectively. 
\begin{figure}[t]
    \centering
    \scalebox{0.7}{%
    \begin{tikzpicture}
\node[node_1] (A) at (\xA,\yA) {A};
\node[node_1] (B) at (\xB,\yB) {B};
\node[node_1] (C) at (\xC,\yC) {C};

\node[node_2] (BC) at (\xBC,\yBC) {2};
\node[node_2] (CB) at (\xCB,\yCB) {3};
\node[node_2, rotate = -60] (AC) at (\xAC,\yAC) {1};
\node[node_2, rotate = -60] (CA) at (\xCA,\yCA) {3};
\node[node_2, rotate = 60] (BA) at (\xBA,\yBA) {2};
\node[node_2, rotate = 60] (AB) at (\xAB,\yAB) {1};

\draw  (A) edge[edge_1]  (AB);
\draw  (AB) edge[edge_2]  (B);
\draw  (B) edge[edge_1]  (BA);
\draw  (BA) edge[edge_2]  (A);

\draw  (C) edge[edge_1]  (CB);
\draw  (CB) edge[edge_2]  (B);
\draw  (B) edge[edge_1]  (BC);
\draw  (BC) edge[edge_2]  (C);

\draw  (A) edge[edge_1]  (AC);
\draw  (AC) edge[edge_2]  (C);
\draw  (C) edge[edge_1]  (CA);
\draw  (CA) edge[edge_2]  (A);

\end{tikzpicture}%
    }
    \caption{Edge coloring of $G_{13}[A,B,C]$. The color is the label in the middle of each edge.}
    \label{fig:coloring_ABC}
\end{figure} 
Now consider the subgraph induced by vertices $A,B,C,M,N,L,P,Q,R$. Since all outgoing edges of $A,B,C$ must have color $1,2,3$ respectively. We must color these outgoing edges with the corresponding colors like Fig. \ref{fig:subgraph_phase2}. Consider Fig. \ref{fig:subgraph_phase2}. Denote the color of  $(L,M)$ (the red "?") $a\in \{1,2,3\}$. $a\ne 1$ because otherwise $(A,L),(L,M)$ is a 2-walk of color $1$. 

If $a=2$, then $M$ has 2 incoming edges of color $1$ and $2$ respectively. This implies that all outgoing edges of $M$ are of color $3$. In particular, $(M,L)$ is of color $3$ and $L$ has 2 incoming edges of color $1$ and $3$ respectively. It means that all outgoing edges of $L$ are of color $2$. The case $a=3$ is similar. Therefore, the following holds. 
\begin{enumerate}
    \item For each $v\in \{M,L\}$, the outgoing edges of $v$ have the same color, call it $\psi(v)$.
    \item Let $\overline{v}$ be the other vertex in $\{M,L\}$ and the corresponding color be $\psi(\overline{v})$. We have that $\{\psi(v)\}\cup \{\psi(\overline{v})\}\cup\{1\}= \{1,2,3\}$.
\end{enumerate} 
Similar arguments hold for the case of $\{N,P\}$ and $\{Q,R\}$. Let $\psi(X)$ be the color of the outgoing edges for $X\in \{M,N,L,P,Q,R\}$.  Let $u\in \{M,L\},v\in\{N,P\},w\in\{Q,R\}$ such that $\psi(u) = 2, \psi(v)=3,\psi(w)=1$ and $\overline{u},\overline{v},\overline{w}$ be the other vertex in $ \{M,L\},\{N,P\} ,\{Q,R\}$ respectively. Note that $\psi(\overline{u}) =3, \psi(\overline{v}) =1, \psi(\overline{w}) =2$. By Proposition \ref{prop:G13_strcture}, we have that exactly one of the following holds.
    \begin{enumerate}
        \item $u,v,w$ have a common out-neighbor in $\{X,Y,Z,W\}$.
        \item $\overline{u},\overline{v},\overline{w}$ have a common out-neighbor in $\{X,Y,Z,W\}$.
    \end{enumerate}  
Let $z\in\{X,Y,Z,W\}$ be the common out-neighbor above. Since the colors of the incoming edges from either $u,v,w$ or $\overline{u},\overline{v},\overline{w}$ equals $\{1,2,3\}$, outgoing edges of $z$ have no color to choose. This gives the desired contradiction.
\begin{figure}[t]
    \centering
    \scalebox{0.55}{%
      
\begin{tikzpicture}
\node[node_1] (A) at (\xA,\yA) {A};
\node[node_1] (B) at (\xB,\yB) {B};
\node[node_1] (C) at (\xC,\yC) {C}; 

\node[node_2] (BC) at (\xBC,\yBC) {2};
\node[node_2] (CB) at (\xCB,\yCB) {3};
\node[node_2, rotate = -60] (AC) at (\xAC,\yAC) {1};
\node[node_2, rotate = -60] (CA) at (\xCA,\yCA) {3};
\node[node_2, rotate = 60] (BA) at (\xBA,\yBA) {2};
\node[node_2, rotate = 60] (AB) at (\xAB,\yAB) {1};

\draw  (A) edge[edge_1]  (AB);
\draw  (AB) edge[edge_2]  (B);
\draw  (B) edge[edge_1]  (BA);
\draw  (BA) edge[edge_2]  (A);

\draw  (C) edge[edge_1]  (CB);
\draw  (CB) edge[edge_2]  (B);
\draw  (B) edge[edge_1]  (BC);
\draw  (BC) edge[edge_2]  (C);

\draw  (A) edge[edge_1]  (AC);
\draw  (AC) edge[edge_2]  (C);
\draw  (C) edge[edge_1]  (CA);
\draw  (CA) edge[edge_2]  (A);

\node[node_1] (M) at (\xMM,\yMM) {M};
\node[node_1] (L) at (\xLL,\yLL) {L};

\node[node_2] (ML) at (\xML,\yML) {$\hphantom{1}$};
\node[node_2, text=red,font=\Large] (LM) at (\xLM,\yLM) {?};
\node[node_2, rotate = 60] (AL) at (\xAL,\yAL) {1};
\node[node_2, rotate = 60] (LA) at (\xLA,\yLA) {$\hphantom{1}$};
\node[node_2, rotate = -60] (MA) at (\xMA,\yMA) {$\hphantom{1}$};
\node[node_2, rotate = -60] (AM) at (\xAM,\yAM) {1};

\draw  (A) edge[edge_1]  (AL);
\draw  (AL) edge[edge_2]  (L);
\draw  (L) edge[edge_1]  (LA);
\draw  (LA) edge[edge_2]  (A); 

\draw  (A) edge[edge_1]  (AM);
\draw  (AM) edge[edge_2]  (M);
\draw  (M) edge[edge_1]  (MA);
\draw  (MA) edge[edge_2]  (A); 

\draw  (M) edge[edge_1]  (ML);
\draw  (ML) edge[edge_2]  (L);
\draw  (L) edge[edge_1]  (LM);
\draw  (LM) edge[edge_2]  (M);

\node[node_1] (N) at (\xNN,\yNN) {N};
\node[node_1] (P) at (\xPP,\yPP) {P};

\node[node_2] (NB) at (\xNB,\yNB) {$\hphantom{1}$};
\node[node_2] (BN) at (\xBN,\yBN) {2};
\node[node_2, rotate = 60] (BP) at (\xBP,\yBP) {2};
\node[node_2, rotate = 60] (PB) at (\xPB,\yPB) {$\hphantom{1}$};
\node[node_2, rotate = -60] (NP) at (\xNP,\yNP) {$\hphantom{1}$};
\node[node_2, rotate = -60] (PN) at (\xPN,\yPN) {$\hphantom{1}$};

\draw  (B) edge[edge_1]  (BN);
\draw  (BN) edge[edge_2]  (N);
\draw  (N) edge[edge_1]  (NB);
\draw  (NB) edge[edge_2]  (B); 

\draw  (B) edge[edge_1]  (BP);
\draw  (BP) edge[edge_2]  (P);
\draw  (P) edge[edge_1]  (PB);
\draw  (PB) edge[edge_2]  (B); 

\draw  (P) edge[edge_1]  (PN);
\draw  (PN) edge[edge_2]  (N);
\draw  (N) edge[edge_1]  (NP);
\draw  (NP) edge[edge_2]  (P);

\node[node_1] (Q) at (\xQQ,\yQQ) {Q};
\node[node_1] (R) at (\xRR,\yRR) {R};

\node[node_2] (RC) at (\xRC,\yRC) {$\hphantom{1}$};
\node[node_2] (CR) at (\xCR,\yCR) {3};
\node[node_2, rotate = -60] (CQ) at (\xCQ,\yCQ) {3};
\node[node_2, rotate = -60] (QC) at (\xQC,\yQC) {$\hphantom{1}$};
\node[node_2, rotate = 60] (RQ) at (\xRQ,\yRQ) {$\hphantom{1}$};
\node[node_2, rotate = 60] (QR) at (\xQR,\yQR) {$\hphantom{1}$};

\draw  (C) edge[edge_1]  (CR);
\draw  (CR) edge[edge_2]  (R);
\draw  (R) edge[edge_1]  (RC);
\draw  (RC) edge[edge_2]  (C); 

\draw  (C) edge[edge_1]  (CQ);
\draw  (CQ) edge[edge_2]  (Q);
\draw  (Q) edge[edge_1]  (QC);
\draw  (QC) edge[edge_2]  (C); 

\draw  (Q) edge[edge_1]  (QR);
\draw  (QR) edge[edge_2]  (R);
\draw  (R) edge[edge_1]  (RQ);
\draw  (RQ) edge[edge_2]  (Q);

\end{tikzpicture}%
    }
    \caption{Partial coloring of the subgraph induced by $A,B,C,M,N,L,P,Q,R$. An edge with an empty label in the middle means the edge is not colored yet.}
    \label{fig:subgraph_phase2}
\end{figure}
\end{proof}

\begin{remark}
    We also numerically verified that $\chi(\frL_D(G_{13}))=4$ (code available at \cite{meng2024github}).  
\end{remark}

\section{VLC rate and graph entropies}\label{sec:VLC_rate_graph_entropy} 

In Section \ref{subsec:problem_formulation}, we formally defined a classical VLC for our problem and its corresponding rate. In this section, we present further results on the VLC rate. In particular, we first discuss the connection between the asymptotic optimal VLC rate ({\it cf.} Definition \ref{def:VLC_rate}) and appropriately defined graph entropies. Following this, we discuss previous definitions of VLCs for the problem of source coding with side information \cite{AlonO_96} and 
point out how our definition captures the more general function computation with side information problem. Lastly, we discuss some properties of graph entropy that are useful for our rate analyses.

In the discussion below, we use standard definitions within source coding ({\it cf.} Chapter 5 of \cite{CoverT05}).\\  Let $H_D(X),H_D(X|Y),I_D(X,Y)$ be the $D$-ary entropy, conditional entropy and mutual information. If we omit the subscript $D$, it means the $D=2$. 

\begin{definition}{ \it $D$-ary  Chromatic entropy.}\label{def:chromatic_entropy}
Let $G=(V,E)$ be a graph and $p_X(\cdot)$ be a distribution over $V$.  The $D$-ary chromatic
entropy $H_{\chi,D}(G,X)$ is defined by
\begin{align*}
    H_{\chi,D}(G,X)=\inf \{H_D(c(X))|\text{ $c$ is a coloring of $G$}\}.
\end{align*}
If $D=2$, we omit subscript $2$ and denote $H_{\chi}(G,X)$ (Section II of \cite{AlonO_96}).

\end{definition}

\begin{theorem}\label{thm:VLC_optimal_rate}
Let $G^{(m)}$ be the $m$-instance $f$-confusion graph and let $p_\bfX(\cdot)$ (induced by $p_{\bfX\bfY}(\cdot,\cdot)$) be the distribution over $V(G^{(m)})$, and the alphabet be $D$-ary.  The optimal VLC rate  is
    \begin{align*}
        R_{VLC}    = \lim_m\frac{1}{m} H_{\chi,D}(G^{(m)},\bfX).
    \end{align*}
\end{theorem}
 
The literature on classical source coding also uses the terminology of prefix-free and non-singular encodings of sources \cite{CoverT05}. To avoid any confusion with our definitions that are stated in the context of function computation with side information we refer to these encodings as regular. In particular, we have the following definitions.
\begin{definition}
Let $X$ be a random variable taking value on a finite set $\calX$, and $\phi_{e}:\calX\mapsto\{0,\dots, D-1\}^*$ be an $D$-ary encoding map. We say that $\phi_{e}$ is regular non-singular if  $x_1\ne x_2 \implies \phi_{e}(x_1)\ne \phi_{e}(x_2)$, and $\phi_{e}$ is regular prefix-free if $\phi_e(x_1)$ is not a prefix of $\phi_e(x_2)$ for all distinct $x_1,x_2\in \calX$.

Furthermore, we denote $l_{r.n.s.}(X,D),l_{r.p.f.}(X,D)$ the minimum expected length of any regular non-singular/regular prefix-free encoding of $X$ over a $D$-ary alphabet. 
\end{definition}



\begin{lemma}  The following holds. \label{lemma:11_pf_encoding_bounds}
\begin{enumerate}[label=(\alph*)]
    \item $H_D(X)\le l_{r.p.f.}(X,D)\le H_D(X)+1$ (see pg. 87 of \cite{CoverT05}).\label{lemma:11_pf_encoding_bounds_part_1}
    \item $H_D(X)  - \log_D\left(\frac{D}{D-1}\right) - \log_D(1+\ln |\calX|) \le l_{r.n.s.}(X,D)$. This bound essentially follows from \cite{Dunham80} who proved it for the binary case. We prove a $D$-ary version for completeness in Appendix H.  \label{lemma:11_pf_encoding_bounds_part_2}
\end{enumerate} 

Our proof of Theorem \ref{thm:VLC_optimal_rate} is similar to the one in \cite{AlonO_96} and appears below. 
\begin{proof}[Proof of Theorem \ref{thm:VLC_optimal_rate}] 
    Let $\phi_e$ be the $m$-instance code in Definition \ref{def:VLC_rate} achieving $R^{(m)}_{VLC}$. Then, $\phi_e$ is a coloring of $G^{(m)}$ by definition. The "color" $\phi_e(\bfX)$ is a random variable taking values in the image set of $\phi_e$ over $V(G^{(m)})$. Its  distribution function is   $\Pr (\,\phi_e(\bfX)=c\,) = \sum_{\substack{\bfx\in \phi_e^{-1}(c)}} p_{\bfX}(\bfx)$ where $\phi_e^{-1}(c)$ is the set of $\bfx$ such that $\phi_e (\bfx)=c$, and $p_{\bfX}(\cdot)$ is the marginal distribution of $p_{\bfX\bfY}(\cdot,\cdot)$.
    
    The identity map on  $\phi_e(\bfX)$ is a regular non-singular encoding of itself. We consider the identity map so that Lemma \ref{lemma:11_pf_encoding_bounds} \ref{lemma:11_pf_encoding_bounds_part_2} can be applied. This gives 
    \begin{align*} 
    R^{(m)}_{VLC}\ge &\frac{1}{m}l_{r.n.s.}(\phi_e(\bfX),D)\\
    \ge & \frac{1}{m}\big ( H_D(\phi_e(\bfX)) - \log_D\left(\frac{D}{D-1}\right)  - \log_D(1+\ln |\text{Im}(\phi_e)|)\big )\\
    \ge & \frac{1}{m}\big ( H_D(\phi_e(\bfX))- \log_D\left(\frac{D}{D-1}\right)   - \log_D(1+\ln |V(G)|^m)    \big )\\
    \ge & \frac{1}{m}\big ( H_{\chi,D}(G^{(m)},\bfX)- \log_D\left(\frac{D}{D-1}\right)   - \log_D(1+\ln |V(G)|^m)    \big )
    \end{align*}
where the first inequality holds because identify map on $\phi_e(\bfX)$ is a regular non-singular encoding of itself, the second inequality holds by the lower bound of Lemma \ref{lemma:11_pf_encoding_bounds}  \ref{lemma:11_pf_encoding_bounds_part_2} and $\text{Im}(\phi_e)$ denotes the image set of $\phi_e$, the third inequality holds because the number of colors in $\phi_e$ is at most size of vertex set, i.e. $|\text{Im}(\phi_e)|\le |V(G^{(m)})|=|V(G )|^m$, the fourth inequality holds because $\phi_e(\bfX)$ is a coloring, and $H_D(\phi_e(\bfX))\ge H_D(G^{(m)},\bfX)$.

On the other hand, let $C$ be any coloring of $G$ such that $H_D(C(\bfX))$ achieves $H_{\chi,D}(G^{(m)},\bfX)$. Any regular prefix-free encoding of   $C(\bfX)$ corresponds to a  code in Definition \ref{def:VLC_rate}. Therefore, 
    \begin{align*}
        R^{(m)}_{VLC}\le  \frac{1}{m}l_{r.p.f.}(C(\bfX),D)\le \frac{1}{m}\big ( H_{ D}(C(\bfX))+1\big ) = \frac{1}{m}\big ( H_{\chi,D}(G^{(m)},\bfX)+1\big ).
    \end{align*}
Combining the inequalities and taking limits we have,
\begin{align*}
  &R_{VLC}  =\lim_m \frac{1}{m} H_{\chi,D}(G^{(m)},\bfX).
\end{align*}
Furthermore, $\lim_m\frac{1}{m} H_{\chi,D}(G^{(m)},\bfX) = \inf_m\frac{1}{m} H_{\chi,D}(G^{(m)},\bfX)$ because of Fekete Lemma \cite{Fekete_23}. This means the following. Let $G^{(m)},G^{(n)}$ be associated with $\bfX_1,\bfX_2$ respectively where $\bfX_1,\bfX_2$ are independent. Suppose $C_1,C_2$ are colorings of $G^{(m)}$ and $G^{(n)}$  such that $H_D(C_1(\bfX_1)),H_D(C_1(\bfX_2))$ achieves $H_{\chi,D}(G^{(m)},\bfX_1),H_{\chi,D}(G^{(n)},\bfX_2)$  respectively. Let $\bfX=(\bfX_1,\bfX_2)$ where $\bfX_1,\bfX_2$ are the first $m$ and last $n$ components of $\bfx$ respectively. Then, define
\begin{align*}
    C_1\times C_2(\bfX) := (C_1(\bfX_1),C_2(\bfX_2)),
\end{align*}
which is a valid coloring for $G^{(m+n)}$. 
Use $p(C_1\times C_2(\bfx)),p(C_1(\bfx)), p(C_2(\bfx))$ as a shorthand for $p_{C_1\times C_2(\bfX)}(C_1\times C_2(\bfx)),p_{C_1(\bfX)}(C_1(\bfx)), p_{C_2(\bfX)}(C_2(\bfx))$ respectively. Then 
\begin{align*}  
    H_{\chi, D}(G^{(m+n)},\bfX )\le&  H_D(C_1\times C_2(\bfX)) \\
    =&\sum_{C_1\times C_2(\bfx)} -p(C_1\times C_2(\bfx))\log_D \big( C_1\times C_2(\bfx) \big)\\
    =&\sum_{C_1(\bfx_1)}\sum_{C_2(\bfx_2)} -p\big (C_1(\bfx_1)\big )p\big ( C_2(\bfx_2)\big )\Big (\log_D \big( C_1 (\bfx_1) )\big)+\log_D \big( C_2(\bfx_2) \big)\Big )\\
    =&\sum_{C_1(\bfx_1)}-p\big (C_1(\bfx_1)\big ) \log_D \big( C_1( \bfx_1) )\big) +  \sum_{C_2(\bfx_2)} - p\big ( C_2(\bfx_2)\big ) \log_D \big( C_2(\bfx_2) \big) \\
    =& H_D(C_1(\bfX_1)) + H_D(C_2(\bfX_2))\\ 
    =&H_{\chi,D}(G^{(m)},\bfX_1)+H_{\chi,D}(G^{(n)},\bfX_2). 
\end{align*}
Then, Fekete Lemma implies $\inf_m\frac{1}{m} H_{\chi,D}(G^{(m)},\bfX)=\lim_m\frac{1}{m} H_{\chi,D}(G^{(m)},\bfX)$.
\end{proof}
\end{lemma}  
\noindent From the proof above, we can see that the non-singular and prefix-free codes have the same asymptotic rate.   


We emphasize that we consider only prefix-free codes even in the single-instance case because they can always be extended to instantaneously decodable codes ({\it cf.} Section \ref{subsubsec:VLC_setting}). The upcoming Example \ref{example:C5_VLC} shows that non-singular codes cannot be extended in this manner in general. We point out that the work of \cite{AlonO_96} places different requirements on a prefix-free variable length code in the source coding with side information context ({\it cf.} Remark \ref{remark:AlonO_prefix-free}). Example \ref{example:C5_VLC} also shows that their definition may not allow for extension to instantaneously decodable codes, and hence our definition is the correct one to use. 

One way to arrive at a prefix-free variable length code is to simply choose a coloring of the $m$-instance graph $G^{(m)}$ and then encode the colors using a regular prefix-free encoding. Example \ref{example:C5_VLC_f} demonstrates that there are variable length prefix-free codes ({\it cf.} Definition \ref{def:pf_code}) that do not belong to these class of encodings. 


\begin{table*}[h]
\centering 
\begin{minipage}{.32\textwidth}
  \centering
  \begin{tabular}{c|c|ccccc}
    \multicolumn{2}{c|}{} & \multicolumn{5}{c}{\( x \)} \\
    \cline{3-7}
    \multicolumn{2}{c|}{\multirow{-2}{*}{\( P_{XY}(x,y) \)}} & 1 & 2 & 3 & 4 & 5 \\
    \hline
    \multirow{5}{*}{\( y \)} 
    & 1 & $\frac{1}{2}$ & $\frac{1}{2}$ & * & * & * \\
    & 2 & * & $\frac{1}{2}$ & $\frac{1}{2}$ & * & * \\
    & 3 & * & * & $\frac{1}{2}$ & $\frac{1}{2}$ & * \\
    & 4 & * & * & * & $\frac{1}{2}$ & $\frac{1}{2}$ \\
    & 5 & $\frac{1}{2}$ & * & * & * & $\frac{1}{2}$ \vspace{3mm}
  \end{tabular} 
  \caption{$P_{XY}(x,y)$ for the scenario of $\tilde f$}
  \label{table:simple_eg1_pxy}
\end{minipage}%
\quad\quad
\begin{minipage}{.32\textwidth}
  \centering
  \begin{tabular}{c|c|ccccc}
    \multicolumn{2}{c|}{} & \multicolumn{5}{c}{\( x \)} \\
    \cline{3-7}
    \multicolumn{2}{c|}{\multirow{-2}{*}{\( P_{XY}(xy) \)}} & 1 & 2 & 3 & 4 & 5 \\
    \hline
    \multirow{5}{*}{\( y \)} 
    & 1 & $\frac{1}{3}$ & $\frac{1}{3}$ & $\frac{1}{3}$ & * & * \\
    & 2 & * & $\frac{1}{3}$ & $\frac{1}{3}$ & $\frac{1}{3}$ & * \\
    & 3 & * & * & $\frac{1}{3}$ & $\frac{1}{3}$ & $\frac{1}{3}$ \\
    & 4 & $\frac{1}{3}$ & * & * & $\frac{1}{3}$ & $\frac{1}{3}$ \\
    & 5 & $\frac{1}{3}$ & $\frac{1}{3}$ & * & * & $\frac{1}{3}$ \vspace{3mm}
  \end{tabular}
  \caption{$P_{XY}(x,y)$ for the scenario of  $\tilde g$}
  \label{table:simple_eg2_pxy}
\end{minipage}  
\end{table*}  
\begin{example}\label{example:C5_VLC}
Consider the computation of the function $\tilde g$ ({\it cf.} Table \ref{table:simple_eg2}) where the probability distribution is given in Table \ref{table:simple_eg2_pxy}. Suppose that Alice uses the encoding in Fig. \ref{fig:c5_vlc_encoding}. 
It is also a non-singular encoding according to our Definition \ref{def:ns_code}.
\begin{figure}[h]
    \centering
    \begin{tikzpicture} 
[
  node distance=0.5cm,
  every node/.style={ draw, minimum size=4mm, inner sep=1pt, font = \normalsize, text width = 2 cm, align = center},
  every edge/.style={draw, -} 
]

\node[draw=black] (1) at (0,3) {$x=1$ \\ $\phi_e(x) = 1$};
\node[draw=black] (2) at (-2.8531,0.9271) {$x=2$ \\ $\phi_e(x) = 0$};
\node[draw=black] (3) at (-1.7634,-2.4271) {$x=3$ \\ $\phi_e(x) = 1$};
\node[draw=black] (4) at (1.7634,-2.4271) {$x=4$ \\ $\phi_e(x) = 00$}; 
\node[draw=black] (5) at (2.8531,0.9271) {$x=5$ \\ $\phi_e(x) = 01$}; 

\draw  (1) edge (2);
\draw  (3) edge (2);
\draw  (3) edge (4);
\draw  (4) edge (5);
\draw  (5) edge (1);
\end{tikzpicture}
    \caption{Variable length code employed by Alice.}
    \label{fig:c5_vlc_encoding}
\end{figure}
Since adjacent vertices are not prefixes of each other, this is an encoding that meets the criterion of \cite{AlonO_96} and is a non-singular code. 
In the single-instance case, this is a valid code. However, if we consider extension of the code, the encoding causes error. Suppose Bob observes $[2 ~ 4]$. Consider the following two cases. 
\begin{enumerate}
    \item Suppose that Alice observes $[2~ 5]$. She uses the extension of $\phi_e$ and sends $(\phi_e(2),\phi_e(5) ) = 001$. The function value should be $[\tilde g(2,2)~\tilde g(5,4)]=[1~0]$. 
    \item Suppose that Alice observes $[4~ 1]$. She uses the extension of $\phi_e$ and sends $(\phi_e(4),\phi_e(1) ) = 001$. The function value should be $[\tilde g(4,2)~\tilde g(1,4)]=[1~1]$. 
\end{enumerate}
We observe that the function evaluations in the cases above are different, but the messages sent to Bob are all the same, i.e. "001". Therefore, Bob cannot unambiguously distinguish between the cases.
\end{example}

\begin{example}  \label{example:C5_VLC_f} Consider the computation of the function $\tilde f$ ({\it cf.} Table \ref{table:simple_eg1}) where the probability distribution is given in Table \ref{table:simple_eg1_pxy}. Suppose  Alice still uses the same encoding as the previous example, i.e., the encoding in Fig. \ref{fig:c5_vlc_encoding}. Compared to Example \ref{example:C5_VLC}, the same encoding meets the prefix-free code criterion ({\it cf.} Section \ref{subsubsec:VLC_setting}) because  $(X,Y)$ is taking value in a smaller set. Therefore, this code can be extended. 

This encoding is a $4$-coloring of the graph. However, it is not a regular prefix-free encoding of the colors. For example, $\phi_e(2)=0$ is a prefix of $\phi_e(4)=00$ and $\phi_e(5)=01$.  
\end{example}

The following results will be used to bound the optimal VLC rates. See  Appendix \ref{app:proof_lemma_graph_entropy} for the proof of Lemma \ref{lemma:prefix_free_OR_prod}, Lemma \ref{lemma:ont_shot_entropy_OR_prod} and Lemma \ref{lemma:graph_entropy} \ref{lemma:chromatic_entropy_chi_upper_bound} and \ref{lemma:korner_entropy_chi_f_upper_bound}.
\begin{lemma}\label{lemma:prefix_free_OR_prod}
For a $f$-confusion graph $G$, suppose that all $(x,x')\notin E(G)$ are due to C2.  Let $\phi_e$ be an arbitrary prefix-free encoding function for $G^{(m)}$, and  $\text{Im}(\phi_e)$ be the image set of encoding $\phi_e$. If $(\zeta_1,\zeta_2)$ is such that  $\zeta_1\ne \zeta_2$ and $\zeta_1,\zeta_2\in \text{Im}(\phi_e)$,  then $\zeta_1$ is not a prefix of $\zeta_2$ and vice versa. 
\end{lemma} 
\begin{lemma}\label{lemma:ont_shot_entropy_OR_prod}
Let the alphabet be $D$-ary. For a $f$-confusion graph $G$, suppose that all $(x,x')\notin E(G)$ are due to C2. Then $R_{VLC}^{(1)} \ge H_{\chi,D}(G,X)$ where $X$ has distribution function $p_X(\cdot)$.
\end{lemma}

\begin{definition}{\it $D$-ary K\"orner's graph entropy.} 
\label{def:korner_entropy}
Let $\Gamma(G)$ denote the set of all independent sets of $G$. Let $X$ be distributed over the vertices of $G$. Then, the $D$-ary K\"orner's graph entropy $H_{\kappa}(G,X)$ is 
\begin{align*}
    H_{\kappa,D}(G,X)=\min_{X\in Z\in\Gamma(G)}I_D(X;Z)
\end{align*}
where   the minimization is taken over all conditional distributions $p_{Z|X}(z|x)$ with $Z$ taking value in $\Gamma(G)$. The constraint "$X\in Z\in\Gamma(G)$"  means that  the random variable $X$ belongs to the random independent set $Z$ with probability one. If $D=2$, then we just denote it $H_{\kappa}(G,X)$.
\end{definition} 
\begin{lemma} \label{lemma:graph_entropy}
\begin{enumerate}[label=(\alph*)]
    \item $\lim_{m\to \infty}\frac{1}{m}H_{\chi,D}(G^{\lor m},\bfX) = H_{\kappa,D}(G,X)$ (Theorem 5 of \cite{AlonO_96}). \label{lemma:chromatic_entropy_OR_prod_limit}
    \item If $\bfX$ is uniformly distributed over   $V(G^{(m)})$, then $H_{\chi,D}(G^{(m)},\bfX) \ge \log_D \left(\frac{|V(G^{(m)})|}{\alpha(G^{(m)})}\right)$ (Lemma 10 of \cite{AlonO_96}).\label{lemma:chromatic_entropy_ind_set_lower_bound}
    \item  $H_{\chi,D}(G^{(m)},\bfX) \le \log_D \chi(G^{(m)}).$   \label{lemma:chromatic_entropy_chi_upper_bound}
    \item  $H_{\kappa,D}(G^{(m)},\bfX) \le H_{\chi,D}(G^{(m)},\bfX)$ (Section V-B of \cite{AlonO_96}). \label{lemma:korner_vs_chromatic_entropy}
    \item $H_{\kappa,D}(G^{(m)},\bfX) \le \log_D \chi_f(G^{(m)})$. \label{lemma:korner_entropy_chi_f_upper_bound}
\end{enumerate}
\end{lemma} 
For  Lemma \ref{lemma:graph_entropy} \ref{lemma:chromatic_entropy_OR_prod_limit}, \ref{lemma:chromatic_entropy_ind_set_lower_bound}, \ref{lemma:korner_vs_chromatic_entropy}, \cite{AlonO_96} proved the binary case, i.e., $D=2$. The binary case immediately implies the $D$-ary case because $H(X) = \frac{\ln D}{\ln 2}H_D(X),I(X;Y) = \frac{\ln D}{\ln 2}I_D(X;Y).$

\section{Analyzing the behavior of quantum advantage with VLC}\label{sec:analysis_bahavior_quantum_advantage_vlc} 

In this section, we compare variable length prefix-free codes with quantum codes. The alphabet is binary throughout. 
The comparison is for both single-instance and multiple instances. In the single-instance setting, we compare $\log_2\xi(\overline{G})$ and $R^{(1)}_{VLC}$ ({\it cf.} Definition \ref{def:VLC_rate}). In the multiple-instance setting, we compare $R_{\text{quantum}}$ ({\it cf.} Theorem \ref{thm:theorem4}) and $R_{VLC}$ ({\it cf.} Definition \ref{def:VLC_rate}).  There are three possibilities.  The rate of VLC can be strictly less than, or strictly larger than, or equal to  the quantum rate. We call them VLC advantage, quantum advantage and equivalent rates respectively. We will revisit examples that are used in the quantum vs. FLC comparison where $p_X(\cdot)$ is uniform. While some of them have a similar result as the FLC case, there are  examples leading to different results.

\subsection{VLC advantage   in both single-instance and multiple-instance  setting\label{sec:VLC_k3}}
 
As pointed out in Remark \ref{remark:VLC_depends_on_p_X}, the VLC rate also depends on $p_{\bfX}(\cdot)$. On the other hand, the quantum rate and the FLC rate only depend upon $G^{(m)}$. Thus, the rate of the VLC rate can be very small for a ``favorable'' marginal distribution $p_{\bfX}(\cdot)$. We demonstrate this by means of an example below.


Let $\epsilon>0$ be sufficiently small, $\calX=\{x_0,x_1,x_2\}$, and $\calY=\{y_0\}$ and $f(x,y) = x$.  We set 
\begin{align*}
 p_{XY}(x,y) =&\begin{cases}
     1-\epsilon,\text{ if }(x,y)=(x_0,y_0),\\
     \epsilon/ 2 \text{ if }(x,y) = (x_1,y_0),\\
     \epsilon/2  \text{ if }(x,y) = (x_2,y_0),
 \end{cases}
\end{align*} 
The single-instance $f$-confusion graph is $K_3$, and the multiple-instance $f$-confusion graph is $K_3^{\boxtimes m}=K_3^{\lor m}$. Therefore, the single-instance and multiple-instance quantum rate is $\log_2 3 = 1.58$.

In comparison, in the single-instance setting, Alice can use the following encoding:
\begin{align*}
    x_0\mapsto 0, x_1\mapsto 10, x_2\mapsto 11. 
\end{align*}
The expected code length is $(1-\epsilon)\cdot 1 + \frac{\epsilon}{2}\cdot 2+ \frac{\epsilon}{2}\cdot 2 = 1+\epsilon < \log_2 3$ for sufficiently small $\epsilon$. In the multiple-instance case, suppose Alice sends $\bfx$ using the Huffman code, then the asymptotic rate is $H(1-\epsilon,\epsilon/2,\epsilon/2) = H_2(\epsilon) + \epsilon$. This can be made arbitrarily close to $0$ by letting $\epsilon \to 0$.

\subsection{VLC advantage in single-instance setting, but equivalent rate  in multiple-instance setting.\label{sec:VLC_C5_AND}}
Consider function $\tilde f$. Let $p_{XY}(\cdot,\cdot)$  be shown in Table \ref{table:simple_eg1_pxy}. Recall that the $f$-confusion graph is $C_5^{\boxtimes m}$. From Proposition \ref{prop:C5_AND_rates}, we have $\log_2\xi(\overline{C_5}) =\log_23\approx 1.585$, $R_{\text{quantum}}(\tilde f)  =\frac{1}{2}\log_2 5$. The following proposition proved in Appendix \ref{app:proof_VLC_rate_c5} shows $\tilde f$ is the example we need.
\begin{proposition}\label{prop:VLC_rate_c5_AND}
\begin{enumerate}
    \item $R^{(1)}_{VLC}(\tilde f) = 1.4$.
    \item $R_{VLC}(\tilde f)=\frac{1}{2}\log_2 5$.
\end{enumerate}
\end{proposition}


\subsection{Quantum advantage in single-instance setting, but equivalent rate  in multiple-instance setting.\label{sec:VLC_C5_OR}}
Consider function $\tilde g$. Let $p_{XY}(\cdot,\cdot)$  be shown in Table \ref{table:simple_eg2_pxy}. Then, $f$-confusion graph is $C_5^{\lor m}$. From Proposition \ref{prop:C5_AND_rates}, we have $\log_2\xi(\overline{C_5}) =\log_23\approx 1.585$,   $R_{\text{quantum}}(\tilde g) =  \log_2 \frac{5}{2}$. The following proposition proved in Appendix \ref{app:proof_VLC_rate_c5} shows  $\tilde g$ is the example we need.
\begin{proposition}\label{prop:VLC_rate_c5_OR}
\begin{enumerate}
    \item $ R^{(1)}_{VLC}(\tilde g) = 1.6$.
    \item  $R_{VLC}(\tilde g) =  \log_2 \frac{5}{2}$.
\end{enumerate}
\end{proposition}
\begin{remark}
   For Sections \ref{sec:VLC_C5_AND} and \ref{sec:VLC_C5_OR}, if we allow ternary ($D=3$) alphabet for classical communication and qutrits for quantum communication, then they would have same single-instance rate, which equals 1. This is because the rate is at least one. For quantum communication, $\xi(C_5) =3$ implies only one qutrit is needed. For VLC, the $3$-coloring of $C_5$ already induces an encoding of expected length 1.
   %
\end{remark} 
\subsection{Quantum advantage both in single-instance and in multiple-instance setting.\label{sec:VLC_quantum_in_single_and_multi_instance}}

In this section, we use the characterization of the asymptotic VLC rate corresponding to $G^{\lor m}$ as the {K\"orner} entropy ({\it cf.} Definition \ref{def:korner_entropy}). A numerical computation of  {K\"orner} graph entropy requires optimizing over the class of independent sets in $G$. For smaller graphs, such an enumeration is feasible. Moreover, the work of \cite{HarangiNB24} provides a Blahut-Arimoto-like algorithm for computing  K\"orner's graph entropy. We use their method in our calculations.

\subsubsection{Example associated with OR product.}  \label{sec:VLC_G13}
By Theorem \ref{thm:construct_function_based_on_G} and Remark \ref{remark:free_to_choose_p_x}, there exists a function and a joint distribution such that its $m$-instance $f$-confusion graph is $G_{13}^{\lor m}$ ({\it cf.} \ref{subsubsec:G13}) and $p_X(\cdot)$ is uniform over $\calX$.  From Proposition \ref{prop:G_13_quantum_rate}, we have that $  R_{\text{quantum}}(G_{13}) = \log_2\xi(G_{13}) = \log_2 3=1.585$. 

 By Theorem \ref{thm:VLC_optimal_rate} and  Lemma \ref{lemma:graph_entropy}     \ref{lemma:chromatic_entropy_OR_prod_limit}, $R_{VLC} = \lim_m\frac{1}{m} H_{\chi,D}(G_{13}^{\lor m},\bfX) = H_{\kappa}(G_{13},X)$. By Lemma \ref{lemma:ont_shot_entropy_OR_prod} and  Lemma \ref{lemma:graph_entropy}  \ref{lemma:korner_vs_chromatic_entropy}, $R_{VLC}^{(1)}(G_{13},X)\ge H_{\chi}(G_{13},X)\ge H_{\kappa}(G_{13},X)$. 


We used the method provided  in \cite{HarangiNB24} and obtained $H_{\kappa}(G_{13},X) \approx 1.646$ (code available at \cite{meng2024github}). 
This implies there is a   quantum advantage in both multiple-instance and single-instance setting. 

\subsubsection{Example associated with strong product.} \label{sec:VLC_Hn}
Consider a function and a joint distribution such that its single-instance $f$-confusion graph $H_n$ ({\it cf.} \ref{subsubsec:Hn}) and $p_X(\cdot)$ is uniform over $\calX$.  By Theorem \ref{thm:construct_function_based_on_G} and Remark \ref{remark:free_to_choose_p_x}, there exists such a function and a joint distribution.

\begin{proposition}[Lemma 5.5 of \cite{BrietBL_15}]\label{prop:upper_bound_alpha_H_n}
    \begin{align*}
       \alpha((H_n^{\lor m})^{\frac{1}{m}} \le \alpha((H_n^{ ( m)})^{\frac{1}{m}}\le \alpha((H_n^{\boxtimes m})^{\frac{1}{m}} < 2^{0.846n}\text{ for }m\ge 1.
    \end{align*}
\end{proposition}
Lemma 1 part \ref{lemma:chromatic_entropy_ind_set_lower_bound} and the above Proposition gives
\begin{align*}
\frac{1}{m}H_{\chi}(G^{(m)},\bfX)\ge \frac{1}{m} \log \frac{|V(G^{(m)})|}{\alpha(G^{(m)})} \ge \log \frac{2^{n}}{2^{0.864n}} = {0.154 n}.
\end{align*}
We already showed in Section \ref{subsubsec:Hn} that  
\begin{align*} n+1  \ge \xi(\overline{H_n^{\lor m}})^{\frac{1}{m}}\ge\xi(\overline{H_n^{(m)}})^{\frac{1}{m}} \ge \xi(\overline{H_n^{\boxtimes m}})^{\frac{1}{m}}\text{ for }m\ge 1.
\end{align*}
Then we have
\begin{align*}
    R_{VLC} \ge 0.154n > \log (n +1)\ge  \frac{\log_2\xi(\overline{H_n^{\boxtimes }})}{m}\ge R_{\text{quantum}}.
\end{align*}
This gives the following conclusion. If the marginal distribution $p_X(\cdot)$ is uniform, then both $H^{\lor m},H^{\boxtimes m}$ are $f$-confusion graphs where the quantum rate is exponentially smaller than the VLC rate.

\subsection{Quantum advantage   in single-instance setting, but VLC advantage in multiple-instance setting.\label{sec:VLC_line_graph}}

Consider a function and a joint distribution such that its $m$-instance $f$-confusion graph $\frL(G_{13})^{\lor m}$ ({\it cf.} \ref{sec:line_graph_eg}) and $p_X(\cdot)$ is uniform over $\calX$. From Section \ref{sec:line_graph_eg}, we have that $R_{\text{quantum}}(\frL(G_{13}))=\log_2\xi(\frL(G_{13}))=\log_2 3.$ 

Since the $f$-confusion graph is the OR product, all $x\ne x'$ and $(x,x')\notin E(G)$ are due to condition $C2$ by Theorem \ref{thm:theorem4} \ref{thm:theorem4_part_b}. Then,  $R_{VLC}^{(1)}(\frL(G_{13}),X)\ge H_{\chi}(\frL(G_{13}),X)$ by Lemma \ref{lemma:ont_shot_entropy_OR_prod}. We obtained a lower bound on $H_{\chi}(\frL(G_{13}),X)$ by numerical calculation (code available at \cite{meng2024github}), which is described as follows.  

We found $\alpha(\frL(G_{13}))=18$  by numerical calculation. Since every independent set of a graph is a clique of its complement graph, $\alpha(\frL(G_{13}))= \omega(\overline{\frL(G_{13})})$. We used Bron-Kerbosch algorithm \cite{BronKerbosch73} on $\overline{\frL(G_{13})}$ to enumerate all maximal clique sets and calculated $\omega(\overline{\frL(G_{13})})$. 

 Let $c$ be the coloring of  $\frL(G_{13})$ that achieves $H_{\chi}(G,X)$, and $C_1,C_2,\dots, C_k$ each color class (vertex set of same color). Recall that Claim \ref{claim:chi(L(G13)=4} states $\chi(\frL(G_{13})) = 4$. Then, the number of colors  $k\ge \chi(\frL(G_{13}))=4$.  Define  $\beta_i:=|C_i|,i\in k$. Since $p_X(\cdot)$ is uniform, we have 
\begin{align*}
  H_{\chi}(\frL(G_{13}),X)= \sum_{i=1}^k - \frac{|C_i|}{|V(\frL(G_{13}))|} \log_2 \Big  (\frac{|C_i|}{|V(\frL(G_{13}))|}\Big  )= \sum_{i=1}^k - \frac{\beta_i}{|V(\frL(G_{13}))|} \log_2 \Big (\frac{\beta_i}{|V(\frL(G_{13}))|}\Big ).
\end{align*} 
Since each color class is an independent set, $\beta_i\le \alpha(\frL(G_{13})) = 18$ for each $i$. The sum of size of all color classes must equal the number of vertices.  Therefore, we have
\begin{equation}\label{eq:line_graph_color}
        \sum_{i=1}^k\beta_i=|V(\frL(G_{13}))| = 48\text{ and }\beta_i\le 18, \forall i\in[k].
\end{equation}
Without loss of generality, we assume that $\beta_1\ge \beta_2 \ge \dots \ge \beta_k$. It turns out $(\beta_1, \beta_2 , \dots , \beta_k)$ can be viewed as an integer partition of 48 such that the number of parts $k$ is at least $\chi(\frL(G_{13})=4$  and each part  $\beta_i$  is at most $\alpha(\frL(G_{13})=18$. In particular, Sagemath has a built-in method to enumerate all such $(\beta_1, \beta_2 , \dots , \beta_k)$'s. Upon calculating the quantity 
\begin{align*}
    \sum_{i=1}^k - \frac{\beta_i}{|V(\frL(G_{13}))|} \log_2 \Big (\frac{\beta_i}{|V(\frL(G_{13}))|}\Big ),
\end{align*}
the minimum value obtained was $1.66$ (code available at \cite{meng2024github}). Therefore, we conclude that 
\begin{align*}
     H_{\chi}(\frL(G_{13}),X)\ge 1.66>\log_2 3.
\end{align*}
 By Theorem \ref{thm:VLC_optimal_rate} and  Lemma \ref{lemma:graph_entropy} \ref{lemma:chromatic_entropy_OR_prod_limit}, $R_{VLC} = \lim_m\frac{1}{m} H_{\chi,D}(\frL(G_{13})^{\lor m},\bfX) = H_{\kappa}(\frL(G_{13}),X)$. Using the method in \cite{HarangiNB24} we obtained $H_{\kappa}(\frL(G_{13}),X) = 1.45 < \log_2 3$ (code available at \cite{meng2024github}).
   
\section{Conclusions and Future Work}
\label{sec:conclusion}

In this work, we considered the problem of function computation with side information under the zero-error constraint. Alice and Bob have sources $X$ and $Y$ respectively with joint p.m.f. $p_{XY}(X,Y)$ and Bob wants to compute $f(X,Y)$. In all three settings (FLC, VLC and quantum), the $f$-confusion graph $G$ and its $m$-instance version $G^{(m)}$ plays a key role in the analysis of the optimal rate. We provided a characterization of necessary and sufficient conditions on the function $f(\cdot, \cdot)$ and joint p.m.f. $p_{XY}(\cdot,\cdot)$ such that $G^{(m)}$ equals  $G^{\boxtimes m}$ (strong product) or $G^{\lor m}$ (OR product). In general, $G^{(m)}$ lies in between these two extremes. We compared quantum rate versus FLC rate and  quantum rate versus VLC rate, and presented several examples of $f$-confusion graphs that demonstrate the varied behavior of the quantum advantage, when considering the single-instance and asymptotic rates. 

%
 
\newpage

\bibliographystyle{IEEEtran}
\bibliography{ref} 

\begin{thebibliography}{10}
\providecommand{\url}[1]{#1}
\csname url@samestyle\endcsname
\providecommand{\newblock}{\relax}
\providecommand{\bibinfo}[2]{#2}
\providecommand{\BIBentrySTDinterwordspacing}{\spaceskip=0pt\relax}
\providecommand{\BIBentryALTinterwordstretchfactor}{4}
\providecommand{\BIBentryALTinterwordspacing}{\spaceskip=\fontdimen2\font plus
\BIBentryALTinterwordstretchfactor\fontdimen3\font minus \fontdimen4\font\relax}
\providecommand{\BIBforeignlanguage}[2]{{%
\expandafter\ifx\csname l@#1\endcsname\relax
\typeout{** WARNING: IEEEtran.bst: No hyphenation pattern has been}%
\typeout{** loaded for the language `#1'. Using the pattern for}%
\typeout{** the default language instead.}%
\else
\language=\csname l@#1\endcsname
\fi
#2}}
\providecommand{\BIBdecl}{\relax}
\BIBdecl

\bibitem{akyildiz2002survey}
I.~F. Akyildiz, W.~Su, Y.~Sankarasubramaniam, and E.~Cayirci, ``A survey on sensor networks,'' \emph{IEEE Commun. Mag.}, vol.~40, no.~8, pp. 102--114, 2002.

\bibitem{OrlitskyR_01}
A.~Orlitsky and J.~Roche, ``Coding for computing,'' \emph{IEEE Trans. on Info. Th.}, vol.~47, no.~3, pp. 903--917, 2001.

\bibitem{slepian1973noiseless}
D.~Slepian and J.~Wolf, ``Noiseless coding of correlated information sources,'' \emph{IEEE Trans. on Info. Th.}, vol.~19, no.~4, pp. 471--480, 1973.

\bibitem{Witsenhausen_76}
H.~Witsenhausen, ``The zero-error side information problem and chromatic numbers (corresp.),'' \emph{IEEE Trans. on Info. Th.}, vol.~22, no.~5, pp. 592--593, 1976.

\bibitem{FergusonB_75}
M.~Ferguson and D.~Bailey, ``Zero-error coding for correlated sources,'' 1975.

\bibitem{ahlswede1979coloring}
R.~Ahlswede, ``Coloring hypergraphs: A new approach to multi-user source coding,'' \emph{J. Comb.}, vol.~4, no.~1, pp. 76--115, 1979.

\bibitem{AlonO_95}
N.~Alon and A.~Orlitsky, ``{Repeated communication and Ramsey graphs},'' \emph{IEEE Trans. on Info. Th.}, vol.~41, no.~5, pp. 1276--1289, 1995.

\bibitem{LinialV89}
N.~Linial and U.~Vazirani, ``Graph products and chromatic numbers,'' in \emph{Proceedings of the 30th Annual Symposium on Foundations of Computer Science (FOCS)}, 1989, pp. 124--128.

\bibitem{KornerO_98}
J.~Korner and A.~Orlitsky, ``Zero-error information theory,'' \emph{IEEE Trans. on Info. Th.}, vol.~44, no.~6, pp. 2207--2229, 1998.

\bibitem{Charpenay_23}
\BIBentryALTinterwordspacing
N.~Charpenay, ``{Zero-error network information theory: graphs, coding for computing and source-channel duality},'' Theses, {Universit{\'e} de Rennes}, Nov. 2023. [Online]. Available: \url{https://theses.hal.science/tel-04384481}
\BIBentrySTDinterwordspacing

\bibitem{AlonO_96}
N.~Alon and A.~Orlitsky, ``Source coding and graph entropies,'' \emph{IEEE Trans. on Info. Th.}, vol.~42, no.~5, pp. 1329--1339, 1996.

\bibitem{CharpenayTR23ITW}
N.~Charpenay, M.~L. Treust, and A.~Roumy, ``Optimal zero-error coding for computing under pairwise shared side information,'' in \emph{IEEE Info. Th. Workshop}, 2023, pp. 97--101.

\bibitem{CharpenayTR23ISIT}
------, ``Complementary graph entropy, and product, and disjoint union of graphs,'' in \emph{IEEE Intl. Symp. on Info. Th.}, 2023, pp. 2446--2451.

\bibitem{Shayevitz_14}
O.~Shayevitz, ``Distributed computing and the graph entropy region,'' \emph{IEEE Trans. on Info. Th.}, vol.~60, no.~6, pp. 3435--3449, 2014.

\bibitem{Malak_22}
D.~Malak, ``Fractional graph coloring for functional compression with side information,'' in \emph{IEEE Info. Th. Workshop}, 2022, pp. 750--755.

\bibitem{shannon1956zero}
C.~Shannon, ``The zero error capacity of a noisy channel,'' \emph{IEEE Trans. on Info. Th.}, vol.~2, no.~3, pp. 8--19, 1956.

\bibitem{lovasz1979shannon}
L.~Lovasz, ``{On the Shannon capacity of a graph},'' \emph{IEEE Trans. on Info. Th.}, vol.~25, no.~1, pp. 1--7, 1979.

\bibitem{HammackGraphProduct}
R.~Hammack, W.~Imrich, and S.~Klavzar, \emph{Handbook of Product Graphs, Second Edition}, 2nd~ed.\hskip 1em plus 0.5em minus 0.4em\relax USA: CRC Press, Inc., 2011.

\bibitem{NayakTR06}
J.~Nayak, E.~Tuncel, and K.~Rose, ``Zero-error source–channel coding with side information,'' \emph{IEEE Transactions on Information Theory}, vol.~52, no.~10, pp. 4626--4629, 2006.

\bibitem{CoverT05}
T.~M. Cover and J.~A. Thomas, \emph{Elements of Information Theory (Wiley Series in Telecommunications and Signal Processing)}.\hskip 1em plus 0.5em minus 0.4em\relax USA: Wiley-Interscience, 2006.

\bibitem{wilde_17}
M.~M. Wilde, \emph{Quantum Information Theory}, 2nd~ed.\hskip 1em plus 0.5em minus 0.4em\relax Cambridge University Press, 2017.

\bibitem{MEDEIROSD05}
\BIBentryALTinterwordspacing
R.~A.~C. MEDEIROS and F.~M. DE~ASSIS, ``Quantum zero-error capacity,'' \emph{Int. J. Quantum Inf.}, vol.~03, no.~01, pp. 135--139, 2005. [Online]. Available: \url{https://doi.org/10.1142/S0219749905000682}
\BIBentrySTDinterwordspacing

\bibitem{BraunsteinFGH98}
S.~Braunstein, C.~Fuchs, D.~Gottesman, and H.-K. Lo, ``A quantum analog of huffman coding,'' in \emph{Proceedings. 1998 IEEE International Symposium on Information Theory (Cat. No.98CH36252)}, 1998, pp. 353--.

\bibitem{orlitsky_worst_case90}
A.~Orlitsky, ``Worst-case interactive communication. i. two messages are almost optimal,'' \emph{IEEE Trans. on Info. Th.}, vol.~36, no.~5, pp. 1111--1126, 1990.

\bibitem{KoulgiTRR_03}
P.~Koulgi, E.~Tuncel, S.~Regunathan, and K.~Rose, ``On zero-error coding of correlated sources,'' \emph{IEEE Trans. on Info. Th.}, vol.~49, no.~11, pp. 2856--2873, 2003.

\bibitem{HanK87}
T.~Han and K.~Kobayashi, ``A dichotomy of functionsf(x, y)of correlated sources(x, y),'' \emph{IEEE Transactions on Information Theory}, vol.~33, no.~1, pp. 69--76, 1987.

\bibitem{KornerM79}
J.~Korner and K.~Marton, ``How to encode the modulo-two sum of binary sources (corresp.),'' \emph{IEEE Transactions on Information Theory}, vol.~25, no.~2, pp. 219--221, 1979.

\bibitem{NazerG07}
B.~Nazer and M.~Gastpar, ``Computation over multiple-access channels,'' \emph{IEEE Transactions on Information Theory}, vol.~53, no.~10, pp. 3498--3516, 2007.

\bibitem{wilson_etal10}
M.~P. Wilson, K.~Narayanan, H.~D. Pfister, and A.~Sprintson, ``Joint physical layer coding and network coding for bidirectional relaying,'' \emph{IEEE Transactions on Information Theory}, vol.~56, no.~11, pp. 5641--5654, 2010.

\bibitem{AllaixLYPHCJ24}
\BIBentryALTinterwordspacing
M.~Allaix, Y.~Lu, Y.~Yao, T.~Pllaha, C.~Hollanti, and S.~A. Jafar, ``<italic>n</italic>-sum box: An abstraction for linear computation over many-to-one quantum networks,'' \emph{IEEE Trans. Inf. Theor.}, vol.~71, no.~2, p. 1121–1139, Dec. 2024. [Online]. Available: \url{https://doi.org/10.1109/TIT.2024.3514921}
\BIBentrySTDinterwordspacing

\bibitem{cubitt2010improving}
T.~S. Cubitt, D.~Leung, W.~Matthews, and A.~Winter, ``Improving zero-error classical communication with entanglement,'' \emph{Phys. Rev. Lett.}, vol. 104, no.~23, p. 230503, 2010.

\bibitem{held2009kochen}
C.~Held, ``{Kochen—Specker Theorem},'' in \emph{Compendium of Quantum Physics}.\hskip 1em plus 0.5em minus 0.4em\relax Springer, 2009, pp. 331--335.

\bibitem{bennett2002entanglement}
C.~H. Bennett, P.~W. Shor, J.~A. Smolin, and A.~V. Thapliyal, ``{Entanglement-assisted capacity of a quantum channel and the reverse Shannon theorem},'' \emph{IEEE Trans. on Info. Th.}, vol.~48, no.~10, pp. 2637--2655, 2002.

\bibitem{BrietBL_15}
J.~Bri\"et, H.~Buhrman, M.~Laurent, T.~Piovesan, and G.~Scarpa, ``Entanglement-assisted zero-error source-channel coding,'' \emph{IEEE Trans. on Info. Th.}, vol.~61, no.~2, pp. 1124--1138, 2015.

\bibitem{stahlke2015quantum}
D.~Stahlke, ``Quantum zero-error source-channel coding and non-commutative graph theory,'' \emph{IEEE Trans. on Info. Th.}, vol.~62, no.~1, pp. 554--577, 2016.

\bibitem{BuhrmanCW_98}
\BIBentryALTinterwordspacing
H.~Buhrman, R.~Cleve, and A.~Wigderson, ``Quantum vs. classical communication and computation,'' in \emph{Proc. of the 30th ACM Symp. Theory of Computing}, ser. STOC '98.\hskip 1em plus 0.5em minus 0.4em\relax New York, NY, USA: Association for Computing Machinery, 1998, p. 63–68. [Online]. Available: \url{https://doi.org/10.1145/276698.276713}
\BIBentrySTDinterwordspacing

\bibitem{GavinskyKKRd_07}
\BIBentryALTinterwordspacing
D.~Gavinsky, J.~Kempe, I.~Kerenidis, R.~Raz, and R.~de~Wolf, ``Exponential separation for one-way quantum communication complexity, with applications to cryptography,'' \emph{SIAM J. Comput.}, vol.~38, no.~5, pp. 1695--1708, 2009. [Online]. Available: \url{https://doi.org/10.1137/070706550}
\BIBentrySTDinterwordspacing

\bibitem{Bar-YossefJK08}
\BIBentryALTinterwordspacing
Z.~Bar-Yossef, T.~S. Jayram, and I.~Kerenidis, ``Exponential separation of quantum and classical one-way communication complexity,'' \emph{SIAM J. Comput.}, vol.~38, no.~1, pp. 366--384, 2008. [Online]. Available: \url{https://doi.org/10.1137/060651835}
\BIBentrySTDinterwordspacing

\bibitem{GuptaSXCA_23}
\BIBentryALTinterwordspacing
S.~Gupta, D.~Saha, Z.-P. Xu, A.~Cabello, and A.~S. Majumdar, ``Quantum contextuality provides communication complexity advantage,'' \emph{Phys. Rev. Lett.}, vol. 130, p. 080802, Feb 2023. [Online]. Available: \url{https://link.aps.org/doi/10.1103/PhysRevLett.130.080802}
\BIBentrySTDinterwordspacing

\bibitem{Korner_73}
J.~Korner, ``Coding of an information source having ambiguous alphabet and the entropy of graphs,'' in \emph{Proc. 6th Prague Conf Inf. Theory}, 1973, pp. 411--425.

\bibitem{west_2000}
D.~B. West, \emph{Introduction to Graph Theory, 2nd edition}.\hskip 1em plus 0.5em minus 0.4em\relax Prentice Hall, 2017.

\bibitem{ScheinermanU97FGT}
\BIBentryALTinterwordspacing
E.~R. Scheinerman and D.~H. Ullman, ``Fractional graph theory: A rational approach to the theory of graphs,'' 1997. [Online]. Available: \url{https://api.semanticscholar.org/CorpusID:118630383}
\BIBentrySTDinterwordspacing

\bibitem{Roberson_13}
\BIBentryALTinterwordspacing
{Roberson, David E.}, ``Variations on a theme: Graph homomorphisms,'' Ph.D. dissertation, 2013. [Online]. Available: \url{http://hdl.handle.net/10012/7814}
\BIBentrySTDinterwordspacing

\bibitem{Mancinska16}
\BIBentryALTinterwordspacing
L.~Mančinska and D.~E. Roberson, ``Quantum homomorphisms,'' \emph{J. Comb. Theory Ser. B}, vol. 118, pp. 228--267, 2016. [Online]. Available: \url{https://www.sciencedirect.com/science/article/pii/S0095895615001550}
\BIBentrySTDinterwordspacing

\bibitem{Fekete_23}
\BIBentryALTinterwordspacing
M.~Fekete, ``{\"U}ber die verteilung der wurzeln bei gewissen algebraischen gleichungen mit ganzzahligen koeffizienten,'' \emph{Mathematische Zeitschrift}, vol.~17, no.~1, pp. 228--249, Dec 1923. [Online]. Available: \url{https://doi.org/10.1007/BF01504345}
\BIBentrySTDinterwordspacing

\bibitem{YuO12}
\BIBentryALTinterwordspacing
S.~Yu and C.~H. Oh, ``{State-Independent Proof of Kochen-Specker Theorem with 13 Rays},'' \emph{Phys. Rev. Lett.}, vol. 108, p. 030402, Jan 2012. [Online]. Available: \url{https://link.aps.org/doi/10.1103/PhysRevLett.108.030402}
\BIBentrySTDinterwordspacing

\bibitem{ManinskaR16}
\BIBentryALTinterwordspacing
L.~Man{\v c}inska and D.~E. Roberson, ``Oddities of quantum colorings,'' \emph{Balt. J. Mod. Comput.}, vol.~4, 2016. [Online]. Available: \url{https://api.semanticscholar.org/CorpusID:55212863}
\BIBentrySTDinterwordspacing

\bibitem{sagemath}
{The Sage Developers}, \emph{{S}ageMath, the {S}age {M}athematics {S}oftware {S}ystem ({V}ersion 10.2)}, 2024, {\tt https://www.sagemath.org}.

\bibitem{meng2024github}
R.~Meng, ``Rate calculation of $g_{13}$ and $\mathfrak{L}_\text{D}(\text{G}_{13})$,'' \url{https://github.com/mengruoyu/Rate_calculation_of_G_13_and_L_D_G13}, 2025, gitHub.

\bibitem{Dunham80}
J.~Dunham, ``Optimal noiseless coding of random variables (corresp.),'' \emph{IEEE Transactions on Information Theory}, vol.~26, no.~3, pp. 345--345, 1980.

\bibitem{HarangiNB24}
V.~Harangi, X.~Niu, and B.~Bai, ``Conditional graph entropy as an alternating minimization problem,'' \emph{IEEE Transactions on Information Theory}, vol.~70, no.~2, pp. 904--919, 2024.

\bibitem{BronKerbosch73}
\BIBentryALTinterwordspacing
C.~Bron and J.~Kerbosch, ``Algorithm 457: finding all cliques of an undirected graph,'' \emph{Commun. ACM}, vol.~16, no.~9, p. 575–577, Sep. 1973. [Online]. Available: \url{https://doi.org/10.1145/362342.362367}
\BIBentrySTDinterwordspacing

\bibitem{CubittLRSSW_14}
T.~Cubitt, L.~Mančinska, D.~E. Roberson, S.~Severini, D.~Stahlke, and A.~Winter, ``Bounds on entanglement-assisted source-channel coding via the lovász $\vartheta $ number and its variants,'' \emph{IEEE Trans. on Info. Th.}, vol.~60, no.~11, pp. 7330--7344, 2014.

\end{thebibliography}

\newpage
\thispagestyle{empty} 

\mbox{}

\appendices

 \section{Proofs of proposition \ref{prop:chi}} \label{app:appendix_A}

\subsection{Proof of Proposition \ref{prop:chi} \ref{prop:de_morgan}}
\begin{proof}
We only need to show that $E\big (    \overline{G\boxtimes H} \big ) = E\big (    \overline{G}\lor \overline{H}  \big )$. Toward this end, note that
\begin{align*}
&\big( (v_1,u_1),(v_2,u_2) \big)\in E\big (    \overline{G\boxtimes H} \big )\\
\Leftrightarrow&\big( (v_1,u_1),(v_2,u_2)\big) \notin E\big (    G\boxtimes H  \big )\text{~and~}(v_1,u_1)\ne(v_2,u_2)\\
\Leftrightarrow&\big( v_1\ne v_2\text{ and }(v_1,v_2)  \notin E(G) \big )\text{ or } \\
&\big( u_1\ne u_2\text{ and }(u_1,u_2) \notin E(H) \big ) \\
\Leftrightarrow&\big( (v_1,u_1),(v_2,u_2) \big)\in E\big (    \overline{G}\lor \overline{H}  \big ).
\end{align*}
We conclude that $E\big (    \overline{G\boxtimes H} \big ) = E\big (    \overline{G}\lor \overline{H}  \big )$.
\end{proof}

\subsection{Proof of Proposition \ref{prop:chi} \ref{prop:alpha_AND_OR_product}}
\begin{proof}

    Let $S_1$ and $S_2$ be independent sets in $G$ and $H$ respectively. Then, $S_1\times S_2$ is independent in $G\lor H$. Thus, $\alpha(G\lor H) \ge \alpha(G)\alpha(H)$. Now we show the other inequality. Let $S$ be a maximum independent set in $G\lor H$. Define 
    \begin{align*}
        &S_g := \{g\in V(G): \exists h\in V(H), (g,h) \in S\},
        \\  &S_h := \{h\in V(h): \exists g\in V(G), (g,h) \in S\}
    \end{align*}
    The role of $S_g,S_h$ is interchangeable, so it suffices to show $S_g$ is independent in $G$. Let distinct $g_1,g_2\in S_g$. Then there exist $h_1,h_2$ such that   $(g_1,h_1),(g_2,h_2)\in S$. Since $S$ is independent in $G\lor H$, $(g_1,h_1),(g_2,h_2)\notin E(G\lor H)$ and thus $(g_1,g_2)\notin E(G)$. It follows that $S_g$ is  independent.  Therefore, $\alpha(G\lor H) \le \alpha(G)\alpha(H)$ holds.
    
For $\alpha(G)\alpha(H)\le \alpha(G\boxtimes H),$ we note that if   $S_1$ and $S_2$ are independent in $G$ and $H$ respectively, then $S_1\times S_2$ is independent in $G\boxtimes H$. Then, $\alpha(G\boxtimes H) \ge \alpha(G) \alpha(H)$ holds.
\end{proof}

\subsection{Proof of Proposition \ref{prop:chi} \ref{prop:chi_product}}
\begin{proof}
$\chi(G\boxtimes H)\le\chi(G\lor H)$ holds because $G\boxtimes H\subseteq G\lor H$.

If $c:V(G)\mapsto [\chi(G)]$ and $d:V(H)\mapsto [\chi(H)]$ are proper colorings  of $G$ and $H$ respectively, then 
\begin{align*}c\times d:V(G)\times V(H)&\mapsto [\chi(G)]\times [\chi(H)],\\
(u,v) &\mapsto (c(u),d(v))
\end{align*}  is a proper coloring of $G\lor H$. Therefore, $\chi(G \lor H) \le \chi(G)\chi(H).$
\end{proof}

\section{Proof of Proposition \ref{prop:AND_power_m-instance_graph_OR_power}}

 \label{app:pf_prop5}

\begin{proof}Since the vertex sets of $G^{\boxtimes m}, G^{(m)}, G^{\lor m}$ are the same, namely $\calX^m$, it suffices to show  $E(G^{\boxtimes m})\subseteq E(G^{(m)})\subseteq E(G^{\lor m})$. 

Let $(\bfx,\bfx')\in E(G^{\boxtimes m})$ and $\bfx\ne \bfx'$. There exists $  i$ such that  $x_{  i}\ne x_{  i}'$ because  
  $\bfx\ne \bfx'$. For such $i$ indices, $(x_i,x_i')\in E(G)$, so there exists $y_i$ such that  $p_{XY}(x_i,y_i)p_{XY}(x_i',y_i)>0$ and $f(x_i,y_i)\ne f(x_i',y_i)$.  For the remaining $i$ indices such that  $x_i= x_i'$, there exists $y_i$ such that  $p_{XY}(x_i,y_i)>0$ by Assumption \ref{assume:irreducible}. Let $\bfy = \{y_i\}_{y=1}^m$. Then we have
\begin{align*}
p_{\bfX\bfY}(\bfx ,\bfy)p_{\bfX\bfY}(\bfx' ,\bfy) > 0\text{ and } f^{(m)}(\bfx ,\bfy) \neq f^{(m)}(\bfx' ,\bfy).
\end{align*}
Therefore, $(\bfx,\bfx')\in E(G^{(m)})$. It implies $E(G^{\boxtimes m})\subseteq E(G^{(m)})$ as  $(\bfx,\bfx')\in E(G^{(m)})$ is arbitrary.

Let $(\bfx,\bfx')\in E(G^{(m)})$ be arbitrary. Then, there exists $\bfy$ such that $ p_{\bfX\bfY}(\bfx ,\bfy)p_{\bfX\bfY}(\bfx' ,\bfy) > 0$ and   $f^{(m)}(\bfx ,\bfy) \neq f^{(m)}(\bfx' ,\bfy)$. It implies that  there exists $j \in [m]$ such that $p_{XY}(x_{j},y_j) p_{XY}(x_{j}', y_j) > 0$ and $f(x_{j},y_j) \neq f(x_{j}',y_j)$, i.e. $(x_j,x_j')\in E(G).$ This implies $(\bfx,\bfx')\in E(G^{\lor m})$. Thus, $E(G^{(m)}) \subseteq  E(G^{\lor m})$  as $(\bfx,\bfx')\in E(G^{\lor m})$ are   arbitrary.
\end{proof}

\section{Proofs of proposition \ref{prop:xi}}\label{app:appendix_C}

\subsection{Proof of Proposition \ref{prop:xi} 
 \ref{prop:subgraph_xi}}
\begin{proof}
Recall that we use $G\subseteq H$ to denote that $G$ is a spanning subgraph of $H$, i.e. $V(G) = V(H)$ and $E(G)\subseteq E(H)$. Therefore, every orthogonal representation of $G$ is an orthogonal representation of $H$.
\end{proof}

\subsection{Proof of Proposition \ref{prop:xi}  \ref{prop:xi_product}}
\begin{proof}
$\xi(G\lor H)\le \xi(G\boxtimes H)$ holds because of 
Proposition \ref{prop:xi} \ref{prop:subgraph_xi} and the fact that $G\boxtimes H\subseteq G\lor H$. 

Now we show that $\xi(G\boxtimes H)\le \xi(G)\xi(H).$ Suppose $\phi_G:V(G)\mapsto \bbC^{m_1},\phi_H:V(H)\mapsto \bbC^{m_2}$ are   orthogonal representations of $G,H$ respectively. We claim $$
\phi_G\times\phi_H:V(G)\times V(H) \mapsto \bbR^{m_1m_2}, (g,h)\mapsto \phi_G(g) \otimes \phi_H(h)
$$
where $\otimes$ denotes tensor product, is an orthogonal representation of $G\boxtimes H$. 

Indeed, let $(u_1,v_1),(u_2,v_2)$ be distinct and non-adjacent in $G\boxtimes H$. W.l.o.g., we assume $(u_1,u_2)\notin E(G)$ and $u_1\ne u_2$. Thus $\phi_G(u_1)\perp\phi_G(u_2)$. It follows that $\phi_G\times\phi_H(u_1,v_1)=\phi_G(u_1)\otimes\phi_H(v_1) \perp \phi_G(u_2)\otimes\phi_H(v_2)= \phi_G\times\phi_H(u_2,v_2)$.
\end{proof}

\subsection{Proof of Proposition \ref{prop:xi} \ref{prop:xi_chi_compare}}
\begin{proof}
Let $c:V(G)\mapsto[k]$ be a proper coloring of $\overline{G}$ and $e_i$ be the $i$-th elementary vector in $\bbC^{k}$ for $i\in[k]$. Then $c$ induces the following $k$-dimensional orthogonal representation of $G$.
$$
\phi:V(G)\mapsto \bbC^{k}, \phi(v)\mapsto e_i\text{ iff }c(v) = i.
$$
\end{proof}
 
\section{Proof of $\vartheta(G)\vartheta(H)=\vartheta(G\lor H)$}\label{app:vartheta_multiplicity}

From Theorem 7 of \cite{lovasz1979shannon}, we have that $\vartheta(G\boxtimes H) = \vartheta(G)\vartheta(H)$. Since $G\boxtimes H\subseteq G\lor H$, we have $\vartheta(G\lor H)\le \vartheta(G\boxtimes H) =\vartheta(G)\vartheta(H)$ by Lemma \ref{lemma:lovasz} \ref{prop:lovasz_number_subgraph}. Therefore, it suffices to show
\begin{align*}
    \vartheta(G)\vartheta(H)\le \vartheta(G\lor H).
\end{align*} 
We will use the following equivalent definition of $\vartheta(G)$.
\begin{theorem}[from \cite{lovasz1979shannon}]
    Let $G$ be a graph on vertices $[n]$. Let $\phi$ range over all orthogonal representation over $\overline{G}$ be s.t. $\phi(i)$ is a real vector for all $i\in V(G)$ and $d$ range over all real unit-norm vectors. Then 
    \begin{equation}\label{eqn:lovasz_number_thm5}
        \vartheta(G) = \max_{\phi,d}\sum_{i=1}^n(d^T\phi(i))^2
    \end{equation}
\end{theorem}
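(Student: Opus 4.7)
The plan is to establish the equivalence by proving the two inequalities separately: a tensor-product argument handles one direction, and SDP duality handles the other.

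For the easy direction $\max_{\phi,d}\sum_i(d^T\phi(i))^2 \le \vartheta(G)$, I would fix any orthogonal representation $\phi$ of $\overline{G}$ with real unit handle $d$, and any feasible pair $(\psi,c)$ in the original min-max definition achieving value $T=\max_i 1/(c^T\psi(i))^2$, so that $\psi$ is an orthogonal representation of $G$. Form the tensor-product vectors $u_i:=\psi(i)\otimes\phi(i)$. For distinct $i,j$, either $(i,j)\in E(G)$, forcing $\phi(i)\perp\phi(j)$ (since $\phi$ represents $\overline{G}$), or $(i,j)\notin E(G)$, forcing $\psi(i)\perp\psi(j)$ (since $\psi$ represents $G$). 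In either case $\langle u_i,u_j\rangle = \langle\psi(i),\psi(j)\rangle\langle\phi(i),\phi(j)\rangle = 0$, so $\{u_i\}$ is orthonormal. Applying Bessel's inequality to the unit vector $v:=c\otimes d$ yields
\begin{equation*}
\sum_i(c^T\psi(i))^2(d^T\phi(i))^2 = \sum_i|\langle v,u_i\rangle|^2 \le 1.
\end{equation*}
Using $(c^T\psi(i))^2 \ge 1/T$ for every $i$, this forces $\sum_i(d^T\phi(i))^2 \le T$. Taking the infimum over $(\psi,c)$ and the supremum over $(\phi,d)$ gives the claim.

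For the reverse inequality, I would invoke the standard SDP characterization of $\vartheta(G)$. Via Lagrangian duality applied to the min-max definition, $\vartheta(G)$ equals $\max\{\operatorname{tr}(JX) : X\succeq 0,\ \operatorname{tr}(X)=1,\ X_{ij}=0 \text{ for } (i,j)\in E(G)\}$, with $J$ the all-ones matrix; Slater's condition ensures strong duality so there is no gap. Given an optimal $X^\ast$ with Gram factorization $X^\ast = U^TU$, the columns of $U$ associated to distinct vertices are orthogonal whenever those vertices are adjacent in $G$; rescaling them to unit length yields an orthogonal representation $\phi$ of $\overline{G}$. The handle $d$ is then chosen as a normalized direction determined by the dual SDP optimizer, and a direct computation shows $\sum_i(d^T\phi(i))^2 = \operatorname{tr}(JX^\ast) = \vartheta(G)$, closing the loop.

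The main obstacle is the SDP extraction step. One must verify both that the primal SDP above actually equals the min-max definition of $\vartheta(G)$ (a standard but nontrivial calculation via Lagrangian duality together with an appropriate change of variables) and that the construction of the handle $d$ from the dual optimizer indeed realizes the claimed sum after the rescaling of the $\phi(i)$'s. Corner cases such as columns of $U$ with vanishing norm (vertices contributing nothing to $\operatorname{tr}(JX^\ast)$ that may be assigned arbitrary unit vectors consistent with orthogonality) require some care. Once these technicalities are in place, combining the two directions gives the claimed equality.
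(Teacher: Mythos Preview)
The paper does not prove this statement at all: it is quoted verbatim as a result from Lov\'asz (1979) and used as a black box in the proof of $\vartheta(G)\vartheta(H)\le\vartheta(G\lor H)$. There is therefore no ``paper's own proof'' to compare against.

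On the merits of your proposal: the first direction is correct and is essentially Lov\'asz's original Lemma~4 argument. The tensor construction $u_i=\psi(i)\otimes\phi(i)$ does produce an orthonormal system, and Bessel's inequality combined with $(c^T\psi(i))^2\ge 1/T$ gives the bound cleanly.

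The second direction has a genuine gap. The SDP you write down, $\max\{\operatorname{tr}(JX):X\succeq 0,\ \operatorname{tr}(X)=1,\ X_{ij}=0\text{ for }(i,j)\in E(G)\}$, is indeed equal to $\vartheta(G)$, but from its optimizer $X^\ast=U^TU$ the columns $u_i$ satisfy $\sum_i\|u_i\|^2=1$, not $\|u_i\|=1$. After you normalize to get unit vectors $\phi(i)=u_i/\|u_i\|$, the quantity $\sum_i(d^T\phi(i))^2$ no longer equals $\operatorname{tr}(JX^\ast)$ for any natural choice of handle built from the primal alone; the ``direct computation'' you allude to does not go through. What actually works is either (i) passing to the equivalent formulation $\max\{\lambda_{\max}(B):B\succeq 0,\ B_{ii}=1,\ B_{ij}=0\text{ for adjacent }i,j\}$, whose Gram vectors are already unit and whose top eigenvector is the handle, or (ii) invoking complementary slackness between your primal and its dual to pin down the extra structure needed after normalization. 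You gesture at the dual in your last paragraph but never carry out the construction, and your earlier claim that the sum equals $\operatorname{tr}(JX^\ast)$ by direct computation is not correct as stated.
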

The proof  is mostly the same as the one in Theorem 7 of \cite{lovasz1979shannon}. We include it for completeness.
\begin{proof}[Proof of $ \vartheta(G)\vartheta(H)\le \vartheta(G\lor H)$]
Let $\phi$ and $c$ be an orthogonal representation of $G$ and a vector in  (\ref{eqn:lovasz_number_thm5}) that achieves $\vartheta(G)$, and $\psi$ and $d$ be an orthogonal representation of $H$ and a vector in   (\ref{eqn:lovasz_number_thm5})   that achieves $\vartheta(H)$. We have that $\phi \times\psi(i,j):= \phi(i) \otimes \psi(j)$
is an orthogonal representation of $\overline{G}\boxtimes \overline{H}$ by the same reason used in Proposition \ref{prop:xi} \ref{prop:xi_product}. Since $G\lor H =\overline{\overline{G}\boxtimes\overline{H}}$, we have
\begin{align*}
    \vartheta(G\lor H)  \ge &\sum_{i=1}^{|V(G)|}\sum_{j=1}^{|V(H)|} \Big ( (d\otimes c)^T (u_i \otimes v_j) \Big )^2\\
        = &\sum_{i=1}^{|V(G)|} \sum_{j=1}^{|V(H)|}(d^Tu_i)^2(c^Tv_j)^2\\
        =&\sum_{i=1}^{|V(G)|}  (d^Tu_i)^2\sum_{j=1}^{|V(H)|}(c^Tv_j)^2 = \vartheta(G)\vartheta(H)
\end{align*}
\end{proof}

\section{Proof of Lemma \ref{lemma_single_letter}}\label{app:proof_lemma_single_letter}
\begin{proof}
We have $\overline{G}^{\boxtimes m}= \overline{G^{\lor m}}\subseteq \overline{G^{(m)}}\subseteq \overline{G^{\boxtimes m}} =\overline{G}^{\lor m}$ due to Proposition \ref{prop:chi} \ref{prop:de_morgan} and Proposition \ref{prop:AND_power_m-instance_graph_OR_power}. Lemma \ref{lemma:lovasz} \ref{prop:lovasz_number_subgraph} and   \ref{lemma:lovasz_number_AND_product} implies that 
\begin{align*}
\vartheta(\overline{G})^m = \vartheta(\overline{G}^{\lor m}) \le  \vartheta(\overline {G^{(m)}})\le \vartheta(\overline{G}^{\boxtimes  m}) = \vartheta(\overline{G})^m.
\end{align*}
Lemma \ref{lemma:lovasz} \ref{lemma:lovasz_number_xi_compare} gives that $\vartheta(\overline {G^{(m)}})\le\xi(\overline {G^{(m)}})$.  Proposition \ref{prop:xi}     \ref{prop:xi_product} and \ref{prop:subgraph_xi} gives $\xi(\overline {G^{(m)}}) \le\xi(\overline{G}^{\boxtimes m})  \le  \xi(\overline {G})^{m}$. Combining everything, we have
\begin{equation}
    \vartheta(\overline{G})^m\le \xi(\overline {G^{(m)}}) \le  \xi(\overline {G})^{m}.\label{eq:sandwich_version2}    
\end{equation}
If $\vartheta(\overline{G})=\xi(\overline {G})$, then we have $\xi(\overline {G^{(m)}})^{\frac{1}{m}}=\xi(\overline {G})$ for all $m\ge 1$ by  (\ref{eq:sandwich_version2}). Therefore, \ref{lemma_single_letter_part_a} is proved. 

Lemma \ref{lemma:lovasz} \ref{lemma:lovasz_number_indep_compare} gives $\omega(G) = \alpha(\overline{G}) \le \vartheta(\overline{G})$. Proposition \ref{prop:xi} \ref{prop:xi_chi_compare} gives $\xi(\overline {G})\le \chi(G)$. Plugging them into   (\ref{eq:sandwich_version2}), we have
\begin{align*}
 \omega(G)^m   \le \vartheta(\overline{G})^m\le \xi(\overline {G^{(m)}}) \le  \xi(\overline {G})^{m}   \le \chi(G)^m.
\end{align*}
Similarly, for chromatic number, we have
\begin{align*}
 \omega(G)^m   \le \vartheta(\overline{G})^m\le \xi(\overline {G^{(m)}}) {\le}  \chi( {G^{(m)}})    \le \chi(G)^m.
\end{align*}
If $G$ is perfect, then $ \omega(G) = \chi(G)$. By the above inequalities, we have $\xi(\overline {G^{(m)}})^{\frac{1}{m}}=\xi(\overline {G})=\chi(G)=\chi( {G^{(m)}})^{\frac{1}{m}}$ for all $m\ge 1$.  
\end{proof}





\section{Proof of Proposition \ref{prop:C5_AND_rates}}
\label{app:pf_prop10} At various points in the discussion below, we use the well-known fact that the pentagon graph, $C_5$, is self-complementary, i.e., $\overline{C_5}$ is isomorphic to $C_5$.
\begin{proof} 
For part (i), note that by Proposition \ref{prop:xi} \ref{prop:xi_chi_compare}, we have $\xi(\overline{C_5}) \le \chi(C_5) =3 $. Moreover, we know that $\overline{C_5}$ is isomorphic to $C_5$ and furthermore that $\vartheta(C_5) \le \xi(C_5)$. It is well known that $\vartheta(C_5) = \sqrt{5}$ \cite{lovasz1979shannon}. Since $\xi(C_5)$ is an integer, we must have $\xi(C_5) =3$.

\noindent For part (ii), note that $G^{(m)} = C_5^{\boxtimes m}$. It follows that 
\begin{equation}\label{eqn:lowbound_xi(C5_OR_m)}
    \xi(\overline{C_5^{\boxtimes m}}) \ge \vartheta(\overline{C_5^{\boxtimes m}})=\vartheta({C_5^{\lor m}})= \vartheta({C_5})^m = 5^{m/2}.    
\end{equation}
The first inequality follows from Lemma \ref{lemma:lovasz} \ref{lemma:lovasz_number_xi_compare}. The first equality holds because $\overline{C_5^{\boxtimes m}} = \overline{C_5}^{\lor m}= {C_5^{\lor m}}$. The second equality holds by  \eqref{eq:sandwich}. The last equality holds as $\vartheta(C_5) =\sqrt{5}$ \cite{lovasz1979shannon}. 

Consider $G^{(2)} = C_5^{\boxtimes 2}$. We know $\chi( C_5^{\boxtimes 2})=5$ by \cite{Witsenhausen_76}. Therefore, for $m\ge 1$, we have 
\begin{equation}\label{eqn:upbound_chi(C5_OR_m)}
\begin{split}
        &\chi({C_5^{\boxtimes m}}) \le
        \begin{cases}
          \chi({C_5^{\boxtimes 2}})^{m/2}\text{ if }m\text{ even}\\
        \chi(C_5)\chi({C_5^{\boxtimes 2})^{(m-1)/2}}\text{ if }m\text{ odd}
        \end{cases}
        \\ 
        &\le 3\cdot 5^{(m-1)/2}
\end{split}
\end{equation}
where the first inequality holds by Proposition 
\ref{prop:chi} \ref{prop:chi_product} and the second inequality holds by $\chi(C_5^{\boxtimes 2}) =5$ and $\chi(C_5)=3$. By Proposition \ref{prop:xi} \ref{prop:xi_chi_compare}, we have $\xi(\overline{C_5^{\boxtimes m}})\le \chi({C_5^{\boxtimes m}})$. This combined with bounds in (\ref{eqn:lowbound_xi(C5_OR_m)}) and (\ref{eqn:upbound_chi(C5_OR_m)})  gives  
\begin{align*} 
   {5^{m/2}} \le  \xi(\overline{C_5^{\boxtimes m}}) \le  \chi({C_5^{\boxtimes m}})\le 3\cdot 5^{(m-1)/2}.
\end{align*} 
This implies that
\begin{align*} 
 & \frac{1}{2}\log_2 {5} \le    \frac{1}{m}
\log_2 \xi(\overline{C_5^{\boxtimes m}})\le \frac{1}{m}
\log_2 \chi({C_5^{\boxtimes m}}) \\ & 
 \le  \frac{1}{m} \log_2 3 + \frac{m-1}{2m}\log_2 5.
\end{align*}
Taking infimum over $m$, we conclude that $R_{\text{quantum}}(\tilde f) = R_{\text{FLC}}(\tilde f) =\frac{1}{2}\log_25$.

Now, we prove part (iii). From Theorem 6.15.3 of \cite{Roberson_13}, we have that $\xi_f(\overline{C_5}) = \frac{5}{2}$. The fact that $ \chi_f(C_5) = \frac{5}{2}$ is well known (see pg. 12 of \cite{ScheinermanU97FGT}), so $\xi_f(\overline{C_5})=\frac{5}{2} = \chi_f(C_5).$

From Theorem 27 of \cite{CubittLRSSW_14}, we have that $\xi_f(\overline{G \lor H} ) = \xi_f(\overline{G } ) \xi_f(\overline{ H} )$ for all graphs $G,H$. Thus, 
\begin{align*}
     \xi_f(\overline{C_5 } )= \xi_f(   \overline{ C_5^{\lor m} } )^{\frac{1}{m}}\overset{(a)}{\le} \xi (   \overline{ C_5^{\lor m} } )^{\frac{1}{m}}\overset{(b)}{\le} \chi (   { C_5^{\lor m} } )^{\frac{1}{m}} 
\end{align*}
where $(a)$ follows from Lemma \ref{lemma:lovasz}    \ref{lemma:lovasz_number_xi_compare}, and $(b)$ follows from Proposition \ref{prop:xi}  \ref{prop:xi_chi_compare}.
Taking logarithm and infimum over $m$ and using Proposition \ref{prop:chi_frac}  \ref{prop:OR_Witsenhausen_rate}, we have
\begin{align*}
    &\log_2\xi_f(\overline{C_5 } )=\inf_{m}\frac{1}{m} \log_2\xi_f(   \overline{ C_5^{\lor m} } )  \le\\
    &\inf_{m} \frac{1}{m}\log_2 \chi (   { C_5^{\lor m} } )  = \log_2\chi_f(C_5).
\end{align*}  
Taking infimum over $m$ and we conclude that $R_{\text{quantum}}(\tilde g) = R_{\text{FLC}}(\tilde g) = \log_2\frac{5}{2}$.
\end{proof}

\section{Linear Programming Formulation of $\chi_f(G)$}\label{app:chi_f_LP}
The fractional chromatic number can also be expressed in terms of a linear program with variables corresponding to the independent sets of $G$ (pg. 30 of \cite{ScheinermanU97FGT}). 
\begin{definition}\label{defi:Fraction_chromatic_number}
Let $G=(V,E)$ be a simple graph and $\calI$ be the collection of its independent set. Its fractional chromatic number $\chi_f(G)$ is defined as
\begin{align*}
    \min& \sum x_{I}\\
    \text{ such that  }& \sum_{I\in \calI, v\in I} x_I \ge 1,  \forall v\in V,\\
    &  x_I \ge 0, \forall I\in \calI.
\end{align*}
\end{definition} 
\section{Proof of Lemma \ref{lemma:11_pf_encoding_bounds}  \ref{lemma:11_pf_encoding_bounds_part_2}, Lemma \ref{lemma:prefix_free_OR_prod}, Lemma \ref{lemma:ont_shot_entropy_OR_prod} and Lemma \ref{lemma:graph_entropy}   \ref{lemma:chromatic_entropy_chi_upper_bound} and \ref{lemma:korner_entropy_chi_f_upper_bound}} \label{app:proof_lemma_graph_entropy} 
\begin{proof}[Proof of Lemma \ref{lemma:11_pf_encoding_bounds}  \ref{lemma:11_pf_encoding_bounds_part_2}]
Without loss of generality, we  let $\text{Pr}(X=x_i) = p_i,i=1,\dots,n$, where $n=|\calX|$, and assume that $p_1\ge p_2\ge p_3\ge \dots \ge p_n >0$. The set
of available codewords is $\{0,1,\dots,D-1\}^*$. Let $l_i$ denote the length of the codeword for $x_i$. Then, an optimal regular non-singular code is such that $l_1\le l_2\le l_3\le \dots\le l_n$. Then,
\begin{align*}
    &l_i = 1, \text{ if }i\in \{1,2,\dots, D\}\\
    &l_i = 2, \text{ if }i\in \{D+1,D+2,\dots, D+D^2\}\\
    &l_i = 3, \text{ if }i\in \{D+D^2+1,D+D^2+2,\dots, D+D^2+D^3\}\\
    &\dots
\end{align*}
Suppose  $i\in \{D+\dots+D^j+1, \dots ,D+\dots+D^j+D^{j+1} \}$. We can write $i= D+\dots+D^j+k$ for some $k\in\{1,2,\dots, D^{j+1}\}$. We have $l_i = j+1$. Note that 
\begin{align*}
    &\ceil{\log_D[(D-1)\cdot \frac{i}{D}+1]}\\ 
    =&\ceil{ \log_D \left[\frac{D-1}{D} (D+\dots+D^j+k )+1\right]}\\ 
    =&\ceil{ \log_D \left[\frac{D-1}{D} \left(D \frac{D^j-1}{D-1}+k \right)+1 \right]} \\
    =&\ceil{ \log_D \left[D^j-1+\frac{D-1}{D} k +1 \right]} \\
    =&\ceil{ \log_D \left[D^j +\frac{D-1}{D} k \right]} \\
    =&j+1
\end{align*}
where the last equality holds because $k\in\{1,2,\dots, D^{j+1}\}$. Therefore, we can write $l_i = \ceil{\log_D[(D-1)\cdot \frac{i}{D}+1]}$, and the optimal regular non-singular code has expected length $\sum_{i=1}^n p_i \ceil{\log_D[(D-1)\cdot \frac{i}{D}+1]}$. Then, we have
    \begin{align*}
        &H_D(X) - \sum_{i=1}^n p_i \ceil{\log_D[(D-1)\cdot \frac{i}{D}+1]}\\
        \le&H_D(X) - \sum_{i=1}^n p_i {\log_D[(D-1)\cdot \frac{i}{D}+1]}\\
        =& \sum_{i=1}^n p_i\log_D\left(\frac{1}{p_i}\right) - \sum_{i=1}^n p_i {\log_D[(D-1)\cdot \frac{i}{D}+1]}\\
        =& \sum_{i=1}^n p_i\log_D\left(\frac{1}{p_i}\cdot \frac{D}{(D-1)\cdot i + D}\right)\\
        \
        \overset{(a)}{\le }&   \log_D \left(\sum_{i=1}^n \frac{D}{(D-1)\cdot i + D}\right)\\
        =&  1+  \log_D \left(\sum_{i=1}^n \frac{1}{(D-1)\cdot i + D} \right)\\ 
        =&   \log_D\left(\frac{D}{D-1}\right) +\log_D \left(\sum_{i=1}^n \frac{1}{ i + \frac{D}{D-1}}\right)\\ 
        \le &   \log_D\left(\frac{D}{D-1}\right) +\log_D \left(\sum_{i=1}^n \frac{1}{ i } \right)\\ 
        \overset{(b)}{\le} &   \log_D\left(\frac{D}{D-1}\right) +\log_D(1+\ln n)
    \end{align*}
    where $(a)$ holds by concavity, and $(b)$ holds by $\sum_{i=1}^n \frac{1}{ i }=1+\sum_{i=1}^{n-1}\frac{1}{n+1}\le1+\int_{1}^n\frac{1}{x}dx=1+\ln n $.
\end{proof}
\begin{proof}[Proof of Lemma \ref{lemma:prefix_free_OR_prod}]
Pick arbitrary $(\zeta_1,\zeta_2)$   such that  $\zeta_1\ne \zeta_2$ and $\zeta_1,\zeta_2\in \text{Im}(\phi_e)$. Since $\zeta_1,\zeta_2\in \text{Im}(\phi_e)$, there exist $\bfx_1,\bfx_2$ such that $\phi_e(\bfx_1)=\zeta_1,\phi_e(\bfx_2)=\zeta_2$. 
\begin{itemize} 

    \item Case 1: $(\bfx_1,\bfx_2)\in E(G)$. This means that  $\zeta_1=\phi_e(\bfx_1),\zeta_2=\phi_e(\bfx_2)$ cannot be a prefix of each other.

    \item Case 2:  $\bfx_1\ne\bfx_2$ and $(\bfx_1,\bfx_2)\notin E(G)$. Note that all $x\ne x'$ and $(x,x')\notin E(G)$ are due to condition C2. Thus, there exists $\bfy$ such that $p_{\bfX\bfY}(\bfx_1,\bfy)p_{XY}(\bfx_2,\bfy)>0$, (see proof of Theorem \ref{thm:theorem4} (b) for the construction of $\bfy$). Since $\phi_e$ is a prefix-free encoding function,  $\zeta_1=\phi_e(\bfx_1),\zeta_2=\phi_e(\bfx_2)$ cannot be a prefix of each other. 
\end{itemize}
\end{proof}
\begin{proof}[Proof of Lemma \ref{lemma:ont_shot_entropy_OR_prod}]
Let $\phi_e$ be the optimal encoding function for the single-instance setting. Then Lemma \ref{lemma:prefix_free_OR_prod} shows that if $\zeta_1\ne \zeta_2$ and $\zeta_1,\zeta_2\in \text{Im}(\phi_e)$, then $\zeta_1,\zeta_2$ are not prefix of each other. Therefore, identity map on $\phi_e(X)$ is a regular prefix-free encoding of itself.   Then, Lemma \ref{lemma:11_pf_encoding_bounds} \ref{lemma:11_pf_encoding_bounds_part_1} gives
 \begin{align*}  
   R_{VLC}^{(1)}=&\sum_xp_X(x) |\phi_e(X)|\\
   =&\sum_\zeta \sum_{x\in \phi_e^{-1}(\zeta)}p_X(x) |\phi_e(X)| \\
   = &   \sum_\zeta\Pr({\phi_e(X)}=\zeta) |\zeta| \\
   \ge&    l_{r.p.f.}(\phi_e(X),D)\\
   \ge& H_D(\phi_e(X))\\
   \ge&  H_{\chi,D} (G,X)
 \end{align*}
where $\phi_e^{-1}(\zeta)=\{x: \phi_e(x)=\zeta\}$, and the first inequality holds because identity map on $\phi_e(X)$ is a regular prefix-free encoding of itself, second inequality holds by Lemma \ref{lemma:11_pf_encoding_bounds} \ref{lemma:11_pf_encoding_bounds_part_1}, and third inequality holds by noting $\phi_e$ is a valid coloring. 
\end{proof}

\begin{proof}[Proof of  Lemma \ref{lemma:graph_entropy}    \ref{lemma:chromatic_entropy_chi_upper_bound}]
    Let $C$ be a coloring of $G^{(m)}$ that uses $\chi(G)$ colors.  Then, random color $C(\bfX)$ is supported on a set of size $\chi(G)$. Then
    \begin{align*}
       H_{\chi,D}(G^{(m)},\bfX) \le H_D(C(\bfX))\le \log_D\chi(G).
    \end{align*}
\end{proof}



\begin{proof}[Proof of  Lemma \ref{lemma:graph_entropy}   \ref{lemma:korner_entropy_chi_f_upper_bound}]
Fix  $\epsilon>0$. Let  $g$ be a $a:b$-coloring of $G^{(m)}$ that is $\epsilon$-close to $\chi_f(G^{(m)})$, i.e., $a/b-\chi_f(G^{(m)})\le \epsilon$. Thus, $\forall (\bfx_1,\bfx_2)\in E(G^{(m)}), g(\bfx_1)\cap g(\bfx_2) = \emptyset$ and $|g(\bfx)| = b$ for all $\bfx \in V(G^{(m)})$. This implies that   $S_i:=\{\bfx\in V(G^{(m)})|i\in g(\bfx)\}$ is an independent set for each $i\in [a]$. Let $Z$ denote a random variable taking values on the independent sets in $G^{(m)}$ and consider the following conditional distribution.
\begin{align}
    \text{Pr}(Z = z|\bfx) &= \begin{cases}
        \frac{1}{b} & \text{~if~} z = S_i \text{~for some~} i \in [a] \text{~and~} \bfx \in S_i,\\
        0 & \text{otherwise.}
        \end{cases}
\end{align}

We note that $Z$ is supported on $\{S_1,S_2,\dots, S_a\}$. This implies that $H_D(Z)\le \log_Da$. Given $\bfx$, we have $    H_D(Z|\bfx) = \log_D b$ as the distribution $\text{Pr}(Z = z|\bfx)$ is uniform for each $\bfx$. Then, we calculate the mutual information
\begin{align*}
    I_D(\bfX;Z) =& H_D(Z)-H_D(Z|\bfX)\\
    \le & \log_D a -H_D(Z|\bfX)\\
    = & \log_D a -\sum_{\bfx}p(\bfx)H_D(Z|\bfx)\\
    = & \log_D a -\sum_{\bfx}p(\bfx)\log_D b\\
    =& \log_D a -\log_D b\\ 
    \le &\log_D [\chi_f(G^{(m)}) + \epsilon].
\end{align*}
Since $\epsilon$ is arbitrary, we can take the infimum $\epsilon\to 0$ and the proof is complete.
\end{proof} 

\section{Proof of Proposition \ref{prop:VLC_rate_c5_AND} and  Proposition \ref{prop:VLC_rate_c5_OR}}\label{app:proof_VLC_rate_c5}
\begin{proof}[Proof of Proposition \ref{prop:VLC_rate_c5_AND}]
For $\tilde f$, consider the following   encoding in Fig. \ref{fig:c5_vlc_encoding}.
It can be verified that this encoding satisfies the conditions of a variable length prefix-free code ({\it cf.} Section \ref{subsec:problem_formulation}). The expected length is 
\begin{align*}
\frac{2}{5}\cdot 1 + \frac{1}{5}\cdot 1 + \frac{1}{5}\cdot 2 + \frac{1}{5}\cdot 2= \frac{7}{5} = 1.4.
\end{align*}
Therefore, $R_{VLC}^{(1)}(\tilde f) \le 1.4$. 

Toward a contradiction, we assume there is another encoding $\psi_e$ with strictly smaller expected length.  In $\phi_e$, we observe that there are three $x$s mapped to strings of length $1$, and other $x$s mapped to strings of length 2.   Since $\psi_e$ has smaller expected length, it must map at least four $x$s into strings of length 1. Since $C_5$ can be viewed as a regular pentagon $5-1-2-3-4-5$, and the symmetry group of the regular pentagon contains five rotations, we may assume that the $x$s in $\{1,2,4,5\}$ are mapped into a string of length $1$. 

If the $x$'s in $\{2,4\}$ are mapped to different strings, then $\{\psi_e(2),\psi_e(4)\} = \{0,1\}$. Then $x=3$ cannot be assigned a string that satisfies the prefix-free condition. Therefore, the $x$s in $\{2,4\}$ are mapped to the same string, say the string $0$. However, the $x$s in $\{1,5\}$ are adjacent, so we must have $\{\psi_e(1),\psi_e(5)\} = \{0,1\}$. This implies one of the $x$s in $\{1,5\}$ must have the same string with an adjacent $x'$ in $\{2,4\}$. This gives the required contradiction. Therefore, $R_{VLC}^{(1)}(\tilde f) = 1.4.$
 
For the multiple-instance case, we know that $p_{\bfX}(\cdot)$ is uniform over $C_5^{\boxtimes m}$ because $p_X(x)$ is uniform over $V(C_5)$. Note $\lim_m\frac{1}{m}\log \alpha(C_5^{\boxtimes m})=\frac{1}{2}\log 5$ from   \cite{lovasz1979shannon} and $\frac{1}{m} \log \chi(C_5^{\boxtimes m})  = \frac{1}{2}\log 5$ from \cite{Witsenhausen_76}. Use Lemma \ref{lemma:graph_entropy} \ref{lemma:chromatic_entropy_ind_set_lower_bound} for the lower bound  and  Lemma \ref{lemma:graph_entropy} \ref{lemma:chromatic_entropy_chi_upper_bound} for the upper bound. We have
\begin{align*}
     \frac{1}{m} \log \frac{|V(C_5^{\boxtimes m})|}{\alpha(C_5^{\boxtimes m})} \le R^{(m)} \le \frac{1}{m} \log \chi(C_5^{\boxtimes m}). \\
\end{align*}  
Taking infimum over $m$, we have $R_{VLC} = \frac{1}{2}\log 5$.
\end{proof}
\begin{proof}[Proof of Proposition \ref{prop:VLC_rate_c5_OR}]
For $\tilde g$, we note that every $x\ne x'$ and $(x,x')\notin E(C_5)$ is due to C2.  By Lemma \ref{lemma:prefix_free_OR_prod}, any prefix-free encoding $\phi_e$ must satisfy \textbf{Condition A}: if $\zeta_1\ne \zeta_2$ and $\zeta_1,\zeta_2\in \text{Im}(\phi_e)$, then   $\zeta_1,\zeta_2$ are not prefix of each other. Consider the following encoding.
\begin{align*}
\phi_e(x) =\begin{cases}
    0,\text{ if }x=1\text{ or }3,\\
    10,\text{ if }x=2\text{ or }4,\\
    11,\text{ if }x=5.
\end{cases}
\end{align*} 
It can be verified that this encoding satisfies the conditions of a variable length prefix-free code ({\it cf.} Section \ref{subsec:problem_formulation}). The expected length is  
\begin{align*}
\frac{2}{5}\cdot 1 + \frac{2}{5}\cdot 2 + \frac{1}{5}\cdot 2  = \frac{8}{5} = 1.6.
\end{align*}
Therefore, $R_{VLC}^{(1)}(\tilde g) \le 1.6$. 

Toward a contradiction, we assume there is another prefix-free encoding $\psi_e$ with strictly smaller expected length.  In $\phi_e$, we observe that there are three $x$s mapped to strings of length $2$, and other $x$s are mapped to strings of length 1.   Since $\psi_e$ has smaller expected length, it must map at least three $x$s into strings of length 1. Let them be $\{x_0,x_1,x_2\}$. Since $C_5$ can be viewed as a regular pentagon $5-1-2-3-4-5$, and the symmetry group of the regular pentagon contains 5 rotations, we may assume the following two cases.

\begin{itemize}
    \item Case $\{x_0,x_1,x_2\}=\{1,2,3\}$: Since $(1,2),(2,3)\in E(C_5)$, it must be that $\psi_e(1)=\psi_e(3)\ne \psi_e(2)$. Without loss of generality, $\psi_e(1)=\psi_e(3)=0,\psi_e(2)=1.$ Since $(3,4)\in E(C_5)$, $\psi_e(4)$ cannot be $0$ and cannot use $0$ as a prefix. Then, $1$ must be a prefix of $\psi_e(4)$ by \textbf{Condition A}.
\begin{itemize}
    \item  Subcase $|\psi_e(4)|=1$: Since $|\psi_e(4)|=1$ and $1$ is a prefix of $\psi_e(4)$, $\psi_e(4)=1$.  Since $(5,4)\in E(C_5)$, $\psi_e(5)$  cannot use $1$ as a prefix by \textbf{Condition A}. Similarly, $(1,5)\in E(C_5)$ implies $\psi_e(5)$  cannot use $0$ as a prefix. Then, there is no choice for $\psi_e(5)$. This is a contradiction.
    \item  Subcase $|\psi_e(4)|>1$: In this case, we have that $\psi_e(2)\ne \psi_e(4)$ because they have different length. Since  $\psi_e(4)$ contains $1=\phi_e(2)$ as a prefix, this contradicts  \textbf{Condition A}. 

\end{itemize}
    \item  
Case $\{x_0,x_1,x_2\}=\{1,2,4\}$:

Since   $(1,2) \in E(C_5)$,  $\psi_e(1),\psi_e(2)$   must be different strings of length 1, and $\psi_e(4)$ must equal to exactly one of $\psi_e(1),\psi_e(2)$. Since a regular  pentagon has 5 axial symmetries, and the role of bit $0$ and $1$ are symmetric, we may assume that $\psi_e(1)=\psi_e(4)=0,\psi_e(2)=1$.  Since $(3,4)\in E(C_5)$, $\psi_e(3)$  cannot use $0$ as a prefix by \textbf{Condition A}. Similarly, $(2,3)\in E(C_5)$ implies $\psi_e(3)$  cannot use $1$ as a prefix. Then, there is no choice for $\psi_e(3)$. This is a contradiction.
\end{itemize}
 
Combining all cases, we conclude that $\psi_e$ cannot exist, i.e. $R_{VLC}^{(1)}(\tilde g) \ge 1.6$


Since  the confusion graph $C_5^{\lor m}$ is OR product, Lemma \ref{lemma:graph_entropy} \ref{lemma:chromatic_entropy_OR_prod_limit} shows  $R_{VLC}(\tilde g)=H_{\kappa}(C_5,X)$. Example 3 of \cite{AlonO_96} gives $H_{\kappa}(C_5,X) = \log_2 \frac{5}{2}$. 
\end{proof}

\section{Two-fold $f$-confusion graphs of $\tilde f,\tilde g,\tilde h$}\label{app:2_fold_graphs}
In the figures below, we show the two-fold $f$-confusion graphs corresponding to $\tilde{f}, \tilde{g}$ and $\tilde{h}$. We point out that for easier graph visualization, there are multiple copies of a given node that are shown in the figures. For instance, in the figure for $\tilde{f}$, there are three copies of $(1,1)$. In the actual graph, all these will correspond to only one vertex $(1,1)$.
\begin{figure}[h]
    \centering
    \begin{tikzpicture} 
[
  node distance=0.5cm,
  every node/.style={circle, draw, minimum size=4mm, inner sep=0pt, font = \normalsize},
  every edge/.style={draw, -} 
]

\node[draw=black] (11) at (1,1) {1,1};
\node[draw=black] (12) at (1,2) {1,2};
\node[draw=black] (13) at (1,3) {1,3};
\node[draw=black] (14) at (1,4) {1,4};
\node[draw=black] (15) at (1,5) {1,5};

\node[draw=black] (21) at (2,1) {2,1};
\node[draw=black] (22) at (2,2) {2,2};
\node[draw=black] (23) at (2,3) {2,3};
\node[draw=black] (24) at (2,4) {2,4};
\node[draw=black] (25) at (2,5) {2,5};

\node[draw=black] (31) at (3,1) {3,1};
\node[draw=black] (32) at (3,2) {3,2};
\node[draw=black] (33) at (3,3) {3,3};
\node[draw=black] (34) at (3,4) {3,4};
\node[draw=black] (35) at (3,5) {3,5};

\node[draw=black] (41) at (4,1) {4,1};
\node[draw=black] (42) at (4,2) {4,2};
\node[draw=black] (43) at (4,3) {4,3};
\node[draw=black] (44) at (4,4) {4,4};
\node[draw=black] (45) at (4,5) {4,5};

\node[draw=black] (51) at (5,1) {5,1};
\node[draw=black] (52) at (5,2) {5,2};
\node[draw=black] (53) at (5,3) {5,3};
\node[draw=black] (54) at (5,4) {5,4};
\node[draw=black] (55) at (5,5) {5,5};

\node[draw=black] (01) at (0,1) {5,1};
\node[draw=black] (02) at (0,2) {5,2};
\node[draw=black] (03) at (0,3) {5,3};
\node[draw=black] (04) at (0,4) {5,4};
\node[draw=black] (05) at (0,5) {5,5};

\node[draw=black] (61) at (6,1) {1,1};
\node[draw=black] (62) at (6,2) {1,2};
\node[draw=black] (63) at (6,3) {1,3};
\node[draw=black] (64) at (6,4) {1,4};
\node[draw=black] (65) at (6,5) {1,5};

\node[draw=black] (10) at (1,0) {1,5};
\node[draw=black] (20) at (2,0) {2,5};
\node[draw=black] (30) at (3,0) {3,5};
\node[draw=black] (40) at (4,0) {4,5};
\node[draw=black] (50) at (5,0) {5,5};

\node[draw=black] (16) at (1,6) {1,1};
\node[draw=black] (26) at (2,6) {2,1};
\node[draw=black] (36) at (3,6) {3,1};
\node[draw=black] (46) at (4,6) {4,1};
\node[draw=black] (56) at (5,6) {5,1};

\draw (11) edge (12);
\draw (12) edge (13);
\draw (13) edge (14);
\draw (14) edge (15);

\draw (21) edge (22);
\draw (22) edge (23);
\draw (23) edge (24);
\draw (24) edge (25);

\draw (31) edge (32);
\draw (32) edge (33);
\draw (33) edge (34);
\draw (34) edge (35);

\draw (41) edge (42);
\draw (42) edge (43);
\draw (43) edge (44);
\draw (44) edge (45);

\draw (51) edge (52);
\draw (52) edge (53);
\draw (53) edge (54);
\draw (54) edge (55);

\draw (11) edge (21);
\draw (21) edge (31);
\draw (31) edge (41);
\draw (41) edge (51);

\draw (12) edge (22);
\draw (22) edge (32);
\draw (32) edge (42);
\draw (42) edge (52);

\draw (13) edge (23);
\draw (23) edge (33);
\draw (33) edge (43);
\draw (43) edge (53);

\draw (14) edge (24);
\draw (24) edge (34);
\draw (34) edge (44);
\draw (44) edge (54);

\draw (15) edge (25);
\draw (25) edge (35);
\draw (35) edge (45);
\draw (45) edge (55);

\draw (11) edge (22);
\draw (21) edge (32);
\draw (31) edge (42);
\draw (41) edge (52);

\draw (12) edge (23);
\draw (22) edge (33);
\draw (32) edge (43);
\draw (42) edge (53);

\draw (13) edge (24);
\draw (23) edge (34);
\draw (33) edge (44);
\draw (43) edge (54);

\draw (14) edge (25);
\draw (24) edge (35);
\draw (34) edge (45);
\draw (44) edge (55);

\draw (21) edge (12);
\draw (31) edge (22);
\draw (41) edge (32);
\draw (51) edge (42);
 
\draw (22) edge (13);
\draw (32) edge (23);
\draw (42) edge (33);
\draw (52) edge (43);
 
\draw (23) edge (14);
\draw (33) edge (24);
\draw (43) edge (34);
\draw (53) edge (44);
 
\draw (24) edge (15);
\draw (34) edge (25);
\draw (44) edge (35);
\draw (54) edge (45);

\draw[dashed] (25) edge (16);
\draw[dashed] (35) edge (26);
\draw[dashed] (45) edge (36);
\draw[dashed] (55) edge (46);

\draw[dashed] (20) edge (11);
\draw[dashed] (30) edge (21);
\draw[dashed] (40) edge (31);
\draw[dashed] (50) edge (41);

\draw[dashed] (10) edge (21);
\draw[dashed] (20) edge (31);
\draw[dashed] (30) edge (41);
\draw[dashed] (40) edge (51);

\draw[dashed] (15) edge (26);
\draw[dashed] (25) edge (36);
\draw[dashed] (35) edge (46);
\draw[dashed] (45) edge (56);

\draw[dashed] (52) edge (61);
\draw[dashed] (53) edge (62);
\draw[dashed] (54) edge (63);
\draw[dashed] (55) edge (64);

\draw[dashed] (02) edge (11);
\draw[dashed] (03) edge (12);
\draw[dashed] (04) edge (13);
\draw[dashed] (05) edge (14);

\draw[dashed] (01) edge (12);
\draw[dashed] (02) edge (13);
\draw[dashed] (03) edge (14);
\draw[dashed] (04) edge (15);

\draw[dashed] (51) edge (62);
\draw[dashed] (52) edge (63);
\draw[dashed] (53) edge (64);
\draw[dashed] (54) edge (65);

\draw[dashed] (51) edge (61);
\draw[dashed] (52) edge (62);
\draw[dashed] (53) edge (63);
\draw[dashed] (54) edge (64);
\draw[dashed] (55) edge (65);

\draw[dashed] (01) edge (11);
\draw[dashed] (02) edge (12);
\draw[dashed] (03) edge (13);
\draw[dashed] (04) edge (14);
\draw[dashed] (05) edge (15);

\draw[dashed] (15) edge (16);
\draw[dashed] (25) edge (26);
\draw[dashed] (35) edge (36);
\draw[dashed] (45) edge (46);
\draw[dashed] (55) edge (56);

\draw[dashed] (10) edge (11);
\draw[dashed] (20) edge (21);
\draw[dashed] (30) edge (31);
\draw[dashed] (40) edge (41);
\draw[dashed] (50) edge (51);
\end{tikzpicture}
    \caption{Two-fold $f$-confusion graphs of $\tilde f$.}
\end{figure}
\begin{figure}[h]
    \centering
    \begin{tikzpicture} 
[
  node distance=0.5cm,
  every node/.style={circle, draw, minimum size=4mm, inner sep=0pt, font = \normalsize},
  every edge/.style={draw, -} 
]

\node[draw=black] (11) at (1,1) {1,1};
\node[draw=black] (12) at (1,2) {1,2};
\node[draw=black] (13) at (1,3) {1,3};
\node[draw=black] (14) at (1,4) {1,4};
\node[draw=black] (15) at (1,5) {1,5};

\node[draw=black] (21) at (2,1) {2,1};
\node[draw=black] (22) at (2,2) {2,2};
\node[draw=black] (23) at (2,3) {2,3};
\node[draw=black] (24) at (2,4) {2,4};
\node[draw=black] (25) at (2,5) {2,5};

\node[draw=black] (31) at (3,1) {3,1};
\node[draw=black] (32) at (3,2) {3,2};
\node[draw=black] (33) at (3,3) {3,3};
\node[draw=black] (34) at (3,4) {3,4};
\node[draw=black] (35) at (3,5) {3,5};

\node[draw=black] (41) at (4,1) {4,1};
\node[draw=black] (42) at (4,2) {4,2};
\node[draw=black] (43) at (4,3) {4,3};
\node[draw=black] (44) at (4,4) {4,4};
\node[draw=black] (45) at (4,5) {4,5};

\node[draw=black] (51) at (5,1) {5,1};
\node[draw=black] (52) at (5,2) {5,2};
\node[draw=black] (53) at (5,3) {5,3};
\node[draw=black] (54) at (5,4) {5,4};
\node[draw=black] (55) at (5,5) {5,5};

\node[draw=black] (01) at (0,1) {5,1};
\node[draw=black] (02) at (0,2) {5,2};
\node[draw=black] (03) at (0,3) {5,3};
\node[draw=black] (04) at (0,4) {5,4};
\node[draw=black] (05) at (0,5) {5,5};

\node[draw=black] (61) at (6,1) {1,1};
\node[draw=black] (62) at (6,2) {1,2};
\node[draw=black] (63) at (6,3) {1,3};
\node[draw=black] (64) at (6,4) {1,4};
\node[draw=black] (65) at (6,5) {1,5};

\node[draw=black] (10) at (1,0) {1,5};
\node[draw=black] (20) at (2,0) {2,5};
\node[draw=black] (30) at (3,0) {3,5};
\node[draw=black] (40) at (4,0) {4,5};
\node[draw=black] (50) at (5,0) {5,5};

\node[draw=black] (16) at (1,6) {1,1};
\node[draw=black] (26) at (2,6) {2,1};
\node[draw=black] (36) at (3,6) {3,1};
\node[draw=black] (46) at (4,6) {4,1};
\node[draw=black] (56) at (5,6) {5,1};

\draw (11) edge (12);
\draw (12) edge (13);
\draw (13) edge (14);
\draw (14) edge (15);

\draw (21) edge (22);
\draw (22) edge (23);
\draw (23) edge (24);
\draw (24) edge (25);

\draw (31) edge (32);
\draw (32) edge (33);
\draw (33) edge (34);
\draw (34) edge (35);

\draw (41) edge (42);
\draw (42) edge (43);
\draw (43) edge (44);
\draw (44) edge (45);

\draw (51) edge (52);
\draw (52) edge (53);
\draw (53) edge (54);
\draw (54) edge (55);

\draw (11) edge (21);
\draw (21) edge (31);
\draw (31) edge (41);
\draw (41) edge (51);

\draw (12) edge (22);
\draw (22) edge (32);
\draw (32) edge (42);
\draw (42) edge (52);

\draw (13) edge (23);
\draw (23) edge (33);
\draw (33) edge (43);
\draw (43) edge (53);

\draw (14) edge (24);
\draw (24) edge (34);
\draw (34) edge (44);
\draw (44) edge (54);

\draw (15) edge (25);
\draw (25) edge (35);
\draw (35) edge (45);
\draw (45) edge (55);

\draw (11) edge (22);
\draw (21) edge (32);
\draw (31) edge (42);
\draw (41) edge (52);

\draw (12) edge (23);
\draw (22) edge (33);
\draw (32) edge (43);
\draw (42) edge (53);

\draw (13) edge (24);
\draw (23) edge (34);
\draw (33) edge (44);
\draw (43) edge (54);

\draw (14) edge (25);
\draw (24) edge (35);
\draw (34) edge (45);
\draw (44) edge (55);

\draw (21) edge (12);
\draw (31) edge (22);
\draw (41) edge (32);
\draw (51) edge (42);
 
\draw (22) edge (13);
\draw (32) edge (23);
\draw (42) edge (33);
\draw (52) edge (43);
 
\draw (23) edge (14);
\draw (33) edge (24);
\draw (43) edge (34);
\draw (53) edge (44);
 
\draw (24) edge (15);
\draw (34) edge (25);
\draw (44) edge (35);
\draw (54) edge (45);

\draw[dashed] (25) edge (16);
\draw[dashed] (35) edge (26);
\draw[dashed] (45) edge (36);
\draw[dashed] (55) edge (46);

\draw[dashed] (20) edge (11);
\draw[dashed] (30) edge (21);
\draw[dashed] (40) edge (31);
\draw[dashed] (50) edge (41);

\draw[dashed] (10) edge (21);
\draw[dashed] (20) edge (31);
\draw[dashed] (30) edge (41);
\draw[dashed] (40) edge (51);

\draw[dashed] (15) edge (26);
\draw[dashed] (25) edge (36);
\draw[dashed] (35) edge (46);
\draw[dashed] (45) edge (56);

\draw[dashed] (52) edge (61);
\draw[dashed] (53) edge (62);
\draw[dashed] (54) edge (63);
\draw[dashed] (55) edge (64);

\draw[dashed] (02) edge (11);
\draw[dashed] (03) edge (12);
\draw[dashed] (04) edge (13);
\draw[dashed] (05) edge (14);

\draw[dashed] (01) edge (12);
\draw[dashed] (02) edge (13);
\draw[dashed] (03) edge (14);
\draw[dashed] (04) edge (15);

\draw[dashed] (51) edge (62);
\draw[dashed] (52) edge (63);
\draw[dashed] (53) edge (64);
\draw[dashed] (54) edge (65);

\draw[dashed] (51) edge (61);
\draw[dashed] (52) edge (62);
\draw[dashed] (53) edge (63);
\draw[dashed] (54) edge (64);
\draw[dashed] (55) edge (65);

\draw[dashed] (01) edge (11);
\draw[dashed] (02) edge (12);
\draw[dashed] (03) edge (13);
\draw[dashed] (04) edge (14);
\draw[dashed] (05) edge (15);

\draw[dashed] (15) edge (16);
\draw[dashed] (25) edge (26);
\draw[dashed] (35) edge (36);
\draw[dashed] (45) edge (46);
\draw[dashed] (55) edge (56);

\draw[dashed] (10) edge (11);
\draw[dashed] (20) edge (21);
\draw[dashed] (30) edge (31);
\draw[dashed] (40) edge (41);
\draw[dashed] (50) edge (51);

\draw (11) edge (32);
\draw (11) edge (42);
\draw (11) edge (23);
\draw (11) edge (24);

\draw[dashed] (11) edge (30);
\draw[dashed] (11) edge (40);
\draw[dashed] (11) edge (04);
\draw[dashed] (11) edge (03);

\draw (21) edge (33);
\draw (21) edge (34);
\draw (21) edge (13);
\draw (21) edge (14);

\draw (21) edge (52);
\draw (21) edge (42);
\draw[dashed] (21) edge (50);
\draw[dashed] (21) edge (40);

\draw (31) edge (43);
\draw (31) edge (44);
\draw (31) edge (23);
\draw (31) edge (24);

\draw (31) edge (12);
\draw (31) edge (52);
\draw[dashed] (31) edge (10);
\draw[dashed] (31) edge (50);

\draw (41) edge (53);
\draw (41) edge (54);
\draw (41) edge (33);
\draw (41) edge (34);

\draw (41) edge (12);
\draw (41) edge (22);
\draw[dashed] (41) edge (10);
\draw[dashed] (41) edge (20);

\draw (51) edge (43);
\draw (51) edge (44);
\draw[dashed] (51) edge (63);
\draw[dashed] (51) edge (64);

\draw (51) edge (12);
\draw (51) edge (22);
\draw[dashed] (51) edge (10);
\draw[dashed] (51) edge (20);

\draw (12) edge (31);
\draw (12) edge (41);
\draw (12) edge (33);
\draw (12) edge (43);

\draw (12) edge (24);
\draw (12) edge (25);
\draw[dashed] (12) edge (04);
\draw[dashed] (12) edge (05);

\draw (22) edge (41);
\draw (22) edge (51);
\draw (22) edge (43);
\draw (22) edge (53);

\draw (22) edge (14);
\draw (22) edge (15);
\draw (22) edge (34);
\draw (22) edge (35);

\draw (32) edge (51);
\draw (32) edge (11);
\draw (32) edge (53);
\draw (32) edge (13);

\draw (32) edge (24);
\draw (32) edge (25);
\draw (32) edge (44);
\draw (32) edge (45);

\draw (42) edge (21);
\draw (42) edge (11);
\draw (42) edge (23);
\draw (42) edge (13);

\draw (42) edge (34);
\draw (42) edge (35);
\draw (42) edge (54);
\draw (42) edge (55);

\draw (52) edge (31);
\draw (52) edge (21);
\draw (52) edge (33);
\draw (52) edge (23);

\draw (52) edge (44);
\draw (52) edge (45);
\draw[dashed] (52) edge (64);
\draw[dashed] (52) edge (65);

\draw (13) edge (32);
\draw (13) edge (42);
\draw (13) edge (34);
\draw (13) edge (44);

\draw (13) edge (25);
\draw (13) edge (21);
\draw[dashed] (13) edge (05);
\draw[dashed] (13) edge (01);

\draw (23) edge (42);
\draw (23) edge (52);
\draw (23) edge (44);
\draw (23) edge (54);

\draw (23) edge (15);
\draw (23) edge (11);
\draw (23) edge (35);
\draw (23) edge (31);

\draw (33) edge (52);
\draw (33) edge (12);
\draw (33) edge (54);
\draw (33) edge (14);

\draw (33) edge (25);
\draw (33) edge (21);
\draw (33) edge (45);
\draw (33) edge (41);

\draw (43) edge (22);
\draw (43) edge (12);
\draw (43) edge (24);
\draw (43) edge (14);

\draw (43) edge (35);
\draw (43) edge (31);
\draw (43) edge (55);
\draw (43) edge (51);

\draw (53) edge (32);
\draw (53) edge (22);
\draw (53) edge (34);
\draw (53) edge (24);

\draw (53) edge (45);
\draw (53) edge (41);
\draw[dashed] (53) edge (65);
\draw[dashed] (53) edge (61);

\draw (14) edge (33);
\draw (14) edge (43);
\draw (14) edge (35);
\draw (14) edge (45);

\draw (14) edge (22);
\draw (14) edge (21);
\draw[dashed] (14) edge (02);
\draw[dashed] (14) edge (01);

\draw (24) edge (43);
\draw (24) edge (53);
\draw (24) edge (45);
\draw (24) edge (55);

\draw (24) edge (12);
\draw (24) edge (11);
\draw (24) edge (32);
\draw (24) edge (31);

\draw (34) edge (53);
\draw (34) edge (13);
\draw (34) edge (55);
\draw (34) edge (15);

\draw (34) edge (22);
\draw (34) edge (21);
\draw (34) edge (42);
\draw (34) edge (41);

\draw (44) edge (23);
\draw (44) edge (13);
\draw (44) edge (25);
\draw (44) edge (15);

\draw (44) edge (32);
\draw (44) edge (31);
\draw (44) edge (52);
\draw (44) edge (51);

\draw (54) edge (33);
\draw (54) edge (23);
\draw (54) edge (35);
\draw (54) edge (25);

\draw (54) edge (42);
\draw (54) edge (41);
\draw[dashed] (54) edge (62);
\draw[dashed] (54) edge (61);

\draw[dashed] (15) edge (02);
\draw[dashed] (15) edge (03); 

\draw[dashed] (55) edge (62);
\draw[dashed] (55) edge (63);

\draw[dashed] (15) edge (46);
\draw[dashed] (15) edge (36); 

\draw[dashed] (25) edge (56);
\draw[dashed] (25) edge (46); 

\draw[dashed] (35) edge (56);
\draw[dashed] (35) edge (46); 

\draw[dashed] (45) edge (16);
\draw[dashed] (45) edge (56); 

\draw[dashed] (55) edge (16);
\draw[dashed] (55) edge (26); 6+2+6

\end{tikzpicture}
    \caption{Two-fold $f$-confusion graphs of $\tilde g$.}
\end{figure}
\begin{figure}[h]
    \centering
    \begin{tikzpicture} 
[
  node distance=0.5cm,
  every node/.style={circle, draw, minimum size=4mm, inner sep=0pt, font = \normalsize},
  every edge/.style={draw, -} 
]

\node[draw=black] (11) at (1,1) {1,1};
\node[draw=black] (12) at (1,2) {1,2};
\node[draw=black] (13) at (1,3) {1,3};
\node[draw=black] (14) at (1,4) {1,4};
\node[draw=black] (15) at (1,5) {1,5};

\node[draw=black] (21) at (2,1) {2,1};
\node[draw=black] (22) at (2,2) {2,2};
\node[draw=black] (23) at (2,3) {2,3};
\node[draw=black] (24) at (2,4) {2,4};
\node[draw=black] (25) at (2,5) {2,5};

\node[draw=black] (31) at (3,1) {3,1};
\node[draw=black] (32) at (3,2) {3,2};
\node[draw=black] (33) at (3,3) {3,3};
\node[draw=black] (34) at (3,4) {3,4};
\node[draw=black] (35) at (3,5) {3,5};

\node[draw=black] (41) at (4,1) {4,1};
\node[draw=black] (42) at (4,2) {4,2};
\node[draw=black] (43) at (4,3) {4,3};
\node[draw=black] (44) at (4,4) {4,4};
\node[draw=black] (45) at (4,5) {4,5};

\node[draw=black] (51) at (5,1) {5,1};
\node[draw=black] (52) at (5,2) {5,2};
\node[draw=black] (53) at (5,3) {5,3};
\node[draw=black] (54) at (5,4) {5,4};
\node[draw=black] (55) at (5,5) {5,5};

\node[draw=black] (01) at (0,1) {5,1};
\node[draw=black] (02) at (0,2) {5,2};
\node[draw=black] (03) at (0,3) {5,3};
\node[draw=black] (04) at (0,4) {5,4};
\node[draw=black] (05) at (0,5) {5,5};

\node[draw=black] (61) at (6,1) {1,1};
\node[draw=black] (62) at (6,2) {1,2};
\node[draw=black] (63) at (6,3) {1,3};
\node[draw=black] (64) at (6,4) {1,4};
\node[draw=black] (65) at (6,5) {1,5};

\node[draw=black] (10) at (1,0) {1,5};
\node[draw=black] (20) at (2,0) {2,5};
\node[draw=black] (30) at (3,0) {3,5};
\node[draw=black] (40) at (4,0) {4,5};
\node[draw=black] (50) at (5,0) {5,5};

\node[draw=black] (16) at (1,6) {1,1};
\node[draw=black] (26) at (2,6) {2,1};
\node[draw=black] (36) at (3,6) {3,1};
\node[draw=black] (46) at (4,6) {4,1};
\node[draw=black] (56) at (5,6) {5,1};

\draw (11) edge (12);
\draw (12) edge (13);
\draw (13) edge (14);
\draw (14) edge (15);

\draw (21) edge (22);
\draw (22) edge (23);
\draw (23) edge (24);
\draw (24) edge (25);

\draw (31) edge (32);
\draw (32) edge (33);
\draw (33) edge (34);
\draw (34) edge (35);

\draw (41) edge (42);
\draw (42) edge (43);
\draw (43) edge (44);
\draw (44) edge (45);

\draw (51) edge (52);
\draw (52) edge (53);
\draw (53) edge (54);
\draw (54) edge (55);

\draw (11) edge (21);
\draw (21) edge (31);
\draw (31) edge (41);
\draw (41) edge (51);

\draw (12) edge (22);
\draw (22) edge (32);
\draw (32) edge (42);
\draw (42) edge (52);

\draw (13) edge (23);
\draw (23) edge (33);
\draw (33) edge (43);
\draw (43) edge (53);

\draw (14) edge (24);
\draw (24) edge (34);
\draw (34) edge (44);
\draw (44) edge (54);

\draw (15) edge (25);
\draw (25) edge (35);
\draw (35) edge (45);
\draw (45) edge (55);

\draw (11) edge (22);
\draw (21) edge (32);
\draw (31) edge (42);
\draw (41) edge (52);

\draw (12) edge (23);
\draw (22) edge (33);
\draw (32) edge (43);
\draw (42) edge (53);

\draw (13) edge (24);
\draw (23) edge (34);
\draw (33) edge (44);
\draw (43) edge (54);

\draw (14) edge (25);
\draw (24) edge (35);
\draw (34) edge (45);
\draw (44) edge (55);

\draw (21) edge (12);
\draw (31) edge (22);
\draw (41) edge (32);
\draw (51) edge (42);
 
\draw (22) edge (13);
\draw (32) edge (23);
\draw (42) edge (33);
\draw (52) edge (43);
 
\draw (23) edge (14);
\draw (33) edge (24);
\draw (43) edge (34);
\draw (53) edge (44);
 
\draw (24) edge (15);
\draw (34) edge (25);
\draw (44) edge (35);
\draw (54) edge (45);

\draw[dashed] (25) edge (16);
\draw[dashed] (35) edge (26);
\draw[dashed] (45) edge (36);
\draw[dashed] (55) edge (46);

\draw[dashed] (20) edge (11);
\draw[dashed] (30) edge (21);
\draw[dashed] (40) edge (31);
\draw[dashed] (50) edge (41);

\draw[dashed] (10) edge (21);
\draw[dashed] (20) edge (31);
\draw[dashed] (30) edge (41);
\draw[dashed] (40) edge (51);

\draw[dashed] (15) edge (26);
\draw[dashed] (25) edge (36);
\draw[dashed] (35) edge (46);
\draw[dashed] (45) edge (56);

\draw[dashed] (52) edge (61);
\draw[dashed] (53) edge (62);
\draw[dashed] (54) edge (63);
\draw[dashed] (55) edge (64);

\draw[dashed] (02) edge (11);
\draw[dashed] (03) edge (12);
\draw[dashed] (04) edge (13);
\draw[dashed] (05) edge (14);

\draw[dashed] (01) edge (12);
\draw[dashed] (02) edge (13);
\draw[dashed] (03) edge (14);
\draw[dashed] (04) edge (15);

\draw[dashed] (51) edge (62);
\draw[dashed] (52) edge (63);
\draw[dashed] (53) edge (64);
\draw[dashed] (54) edge (65);

\draw[dashed] (51) edge (61);
\draw[dashed] (52) edge (62);
\draw[dashed] (53) edge (63);
\draw[dashed] (54) edge (64);
\draw[dashed] (55) edge (65);

\draw[dashed] (01) edge (11);
\draw[dashed] (02) edge (12);
\draw[dashed] (03) edge (13);
\draw[dashed] (04) edge (14);
\draw[dashed] (05) edge (15);

\draw[dashed] (15) edge (16);
\draw[dashed] (25) edge (26);
\draw[dashed] (35) edge (36);
\draw[dashed] (45) edge (46);
\draw[dashed] (55) edge (56);

\draw[dashed] (10) edge (11);
\draw[dashed] (20) edge (21);
\draw[dashed] (30) edge (31);
\draw[dashed] (40) edge (41);
\draw[dashed] (50) edge (51);

\draw[dashed] (11) edge (03);
\draw[dashed] (11) edge (30);
\draw[dashed] (53) edge (61);
\draw[dashed] (35) edge (16);
\draw (11) edge (23);
\draw (11) edge (32);

\draw (21) edge (33);
\draw (31) edge (43);
\draw (31) edge (23);
\draw (41) edge (53);
\draw (41) edge (33);
\draw (51) edge (43);
\draw[dashed] (51) edge (63);

\draw (12) edge (33);
\draw (12) edge (31);

\draw[dashed] (13) edge (01); 
\draw (13) edge (21);
\draw (13) edge (32);
\draw (13) edge (34);
 
\draw (33) edge (14); 
 
\draw (14) edge (35); 
\draw (15) edge (34); 
\draw[dashed] (15) edge (36); 
\draw[dashed] (31) edge (10);

\end{tikzpicture}
    \caption{Two-fold $f$-confusion graphs of $\tilde h$.}
\end{figure}

\end{document}